\newcommand{\Cc}{\mathcal{C}_b}
\newcommand{\cA}{{\mathcal{A}}}
\newcommand{\cX}{{\mathcal{X}}}
\newcommand{\cZ}{{\mathcal{Z}}}
\newcommand{\cS}{{\mathcal{S}}}
\newcommand{\Bor}{{\mathfrak{B}}}
\newcommand{\cI}{\mathcal{I}}
\newcommand{\df}{\doteq}
\newcommand{\reward}[2]{u_{#1}(#2)}
\newcommand{\Prob}{{\mathbb{P}}} 
\newcommand{\Exp}{{\mathbb{E}}} 
\newcommand{\supnorm}[1]{{\lVert}#1{\rVert}_{\infty}}
\newcommand{\Neig}{\mathcal{N}}
\newcommand{\Dirac}[1]{\boldsymbol{\delta}_{#1}}
\newtheorem{definition}{Definition}[section]
\newtheorem{lemma}{Lemma}[section]
\newtheorem{theorem}{Theorem}[section]
\newtheorem{proposition}{Proposition}[section]
\newtheorem{property}{Property}[section]
\newenvironment{proof}{\textbf{Proof.}}{$\square$\\}
\newcommand{\tr}{^{\mathrm T}}
\newcommand{\magn}[1]{\left\vert #1 \right\vert}
\begin{document}

\title{Stochastic Stability of Perturbed Learning Automata in Positive-Utility Games\thanks{An earlier version of parts of this paper appeared in \cite{Chasparis17_PLA}. This work has been partially supported by the European Union grant EU H2020-ICT-2014-1 project RePhrase (No. 644235). It has also been partially supported by the Austrian Ministry for Transport, Innovation and Technology, the Federal Ministry of Science, Research and Economy, and the Province of Upper Austria in the frame of the COMET center SCCH.}}

\author{Georgios C. Chasparis\thanks{G. C. Chasparis is with the Department of Data Analysis Systems, Software Competence Center Hagenberg GmbH, Softwarepark 21, A-4232 Hagenberg, Austria, E-mail: georgios.chasparis@scch.at.}}

\maketitle

\begin{abstract}
This paper considers a class of reinforcement-based learning (namely, \emph{perturbed learning automata}) and provides a stochastic-stability analysis in repeatedly-played, positive-utility, finite strategic-form games. Prior work in this class of learning dynamics primarily analyzes asymptotic convergence through stochastic approximations, where convergence can be associated with the limit points of an ordinary-differential equation (ODE). However, analyzing global convergence through an ODE-approximation requires the existence of a Lyapunov or a potential function, which naturally restricts the analysis to a fine class of games. To overcome these limitations, this paper introduces an alternative framework for analyzing asymptotic convergence that is based upon an explicit characterization of the invariant probability measure of the induced Markov chain. We further provide a methodology for computing the invariant probability measure in positive-utility games, together with an illustration in the context of coordination games.  
\end{abstract}

\section{Introduction} \label{sec:Introduction}

Recently, multi-agent formulations have been utilized to tackle distributed optimization problems, since communication and computational complexity might be an issue under centralized schemes. In such formulations, decisions are usually taken in a \emph{repeated} fashion, where agents select their next actions based on their \emph{own} prior experience. In the case of finite number of actions for each agent, such multi-agent interactions can be designed as strategic-form games, where agents are repeatedly involved in a strategic interaction with a fixed \emph{payoff} or \emph{utility} function. Such framework finds numerous applications, including, for example, the problem of distributed overlay routing \cite{Chun04}, distributed topology control \cite{Komali08} and distributed resource allocation \cite{Wei10}.

Given the repeated fashion of the involved strategic interactions in such formulations, several questions naturally emerge, including: a) \emph{Can agents ``learn'' to asymptotically select optimal actions?}, b) \emph{What information should agents share with each other?}, and c) \emph{What is the computational complexity of the learning process?} Under the scope of engineering applications, it is usually desirable that each agent shares minimum amount of information with other agents, while the computational complexity of the learning process is small. Naturally, \emph{payoff-based learning} has drawn significant attention. Under such class of learning dynamics, each agent receives \emph{only} measurements of its own utility function, while the details of this function (i.e., its mathematical formula) are unknown. Furthermore, each agent cannot access the actions selected or utilities received by other agents.


In such repeatedly-played strategic-form games, a popular objective for payoff-based learning is to guarantee convergence (in some sense) to Nash equilibria. Convergence to Nash equilibria may be desirable, especially when the set of optimal centralized solutions belongs to the set of Nash equilibria. 

\emph{Reinforcement-based learning} has been utilized in strategic-form games in order for agents to gradually learn to play Nash equilibria. It may appear under alternative forms, including discrete-time replicator dynamics \cite{Arthur93}, learning automata \cite{Tsetlin73,Narendra89} and $Q$-learning \cite{hu_nash_2003}. In all these classes of learning dynamics, deriving conditions under which convergence to Nash equilibria is achieved may not be a trivial task especially in the case of large number of agents (as it will be discussed in detail in the forthcoming Section~\ref{sec:PerturbedLearningAutomata}).

In the present paper, we consider a class of reinforcement-based learning introduced in \cite{ChasparisShamma11_DGA} that is closely related to both discrete-time replicator dynamics and learning automata. We will refer to this class of dynamics as \emph{perturbed learning automata}. The main difference with prior reinforcement-based learning schemes lies in a) the step-size sequence, and b) the perturbation (or \emph{mutations}) term. The step-size sequence is assumed constant, thus introducing a fading-memory effect of past experiences in each agent's strategy. On the other hand, the perturbation term introduces errors in the selection process of each agent. Both these two features can be used for designing a desirable asymptotic behavior.

We provide an analytical framework for deriving conclusions over the asymptotic behavior of the dynamics that is based upon an explicit characterization of the invariant probability measure of the induced Markov chain. In particular, we show that in all finite strategic-form games satisfying the \emph{Positive-Utility Property} (i.e., games with strictly positive utilities), the support of the invariant probability measure coincides with the set of pure strategy profiles. Furthermore, we provide a methodology for computing the set of stochastically stable states in all positive-utility games. We illustrate this methodology in the context of coordination games and provide a simulation study in distributed network formation. This illustration is also of independent interest since it extends prior work in coordination games under reinforcement-based learning, where convergence to mixed strategy profiles may only be excluded under strong conditions in the utility function (e.g., existence of a potential function).

In the remainder of the paper, Section~\ref{sec:PerturbedLearningAutomata} presents the investigated class of learning dynamics, related work and the main contributions. Section~\ref{sec:StochasticStability} provides a simplification in the characterization of stochastic stability, while Section~\ref{sec:TechnicalDerivation} presents its technical derivation. This result is utilized for computing the stochastically stable states in positive-utility games in Section~\ref{sec:StochasticallyStableStates}. In Section~\ref{sec:Illustration}, we present an illustration of the proposed methodology in the context of coordination games, together with a simulation study in distributed network formation. Finally, Section~\ref{sec:Conclusions} presents concluding remarks.

{\bf Notation:}
\begin{itemize}
\item For a Euclidean topological space $\cZ\subset\mathbb{R}^{n}$, let $\Neig_{\delta}(x)$ denote the $\delta$-neighborhood of $x\in\cZ$, i.e.,
$\Neig_{\delta}(x) \df \{y\in\cZ:|x-y|<\delta\},$
where $|\cdot|$ denotes the Euclidean distance.
\item $e_j$ denotes the \emph{unit vector} in $\mathbb{R}^{n}$ where its $j$th entry is equal to 1 and all other entries are equal to 0.
\item $\Delta(n)$ denotes the \emph{probability simplex} of dimension $n$, i.e.,
$\Delta(n) \df \left\{ x\in\mathbb{R}^{n} : x\geq{0}, \mathbf{1}\tr x=1 \right\}.$
\item For some set $A$ in a topological space $\cZ$, let $\mathbb{I}_{A}:\cZ\to\{0,1\}$ denote the index function, i.e.,
\begin{eqnarray*}
\mathbb{I}_{A}(x) \df \begin{cases}
1 & \mbox{ if } x\in{A}, \\
0 & \mbox{ else.}
\end{cases}
\end{eqnarray*}

\item For a finite set $A$, $\magn{A}$ denotes its cardinality.
\item For a finite set $A$ and any probability distribution $\sigma\in\Delta(\magn{A})$, the random selection of an element of $A$ will be denoted by ${\rm rand}_{\sigma}[A]$. If $\sigma=(\nicefrac{1}{\magn{A}},...,\nicefrac{1}{\magn{A}})$, the random selection will be denoted by ${\rm rand}_{\rm unif}[A]$.

\item $\Dirac{x}$ denotes the Dirac measure at $x$.

\item $\log(\cdot)$ denotes the natural logarithm.

\end{itemize}

\section{Perturbed Learning Automata}	\label{sec:PerturbedLearningAutomata}

\subsection{Terminology}

We consider the standard setup of \emph{finite strategic-form games}. Consider a finite set of \emph{agents} (or \emph{players}) $\mathcal{I} = \{1,...,n\}$, and let each agent $i$ have a finite set of actions $\mathcal{A}_i$. Let $\alpha_i\in\mathcal{A}_i$ denote any such action of agent $i$. The set of \emph{action profiles} is the Cartesian product $\mathcal{A}\df\mathcal{A}_1\times\cdots\times\mathcal{A}_n$ and let $\alpha=(\alpha_1,...,\alpha_n)$ be a representative element of this set. We will denote $-i$ to be the complementary set $\cI\backslash{i}$ and often decompose an action profile as follows $\alpha=(\alpha_i,\alpha_{-i})$. The \emph{payoff/utility function} of agent $i$ is a mapping $\reward{i}{\cdot}:\mathcal{A}\to\mathbb{R}$. A \emph{finite strategic-form game} is defined by the triple $\langle{\mathcal{I},\mathcal{A},\{\reward{i}{\cdot}\}_i}\rangle$. 

\textit{\textbf{For the remainder of the paper}}, we will be concerned with finite strategic-form games that satisfy the \emph{\textbf{Positive-Utility Property}}. 

\begin{property}[Positive-Utility Property]		\label{P:PositiveUtilityProperty}
For any agent $i\in\mathcal{I}$ and any action profile $\alpha\in\mathcal{A}$, $\reward{i}{\alpha}>0$.
\end{property}

This property is rather generic and applies to a large family of games. For example, games at which some form of alignment of interests exists between agents (e.g., \emph{coordination games} \cite{ChasparisAriShamma13_SIAM} or \emph{weakly-acyclic games} \cite{marden_payoff_2009}), can be designed to satisfy this property, since agents' utilities/preferences are rather close to each other at any given action profile. However, in the forthcoming analysis, we do not impose any structural constraint but Property~\ref{P:PositiveUtilityProperty}.

\subsection{Perturbed Learning Automata}

We consider a form of reinforcement-based learning that belongs to the general class of \emph{learning automata} \cite{Narendra89}. In learning automata, each agent updates a finite probability distribution $x_i\in\cX_i\df\Delta(\magn{\cA_i})$ representing its beliefs about the most profitable action. 

The proposed learning model is described in Table~\ref{Tb:ReinforcementLearning}. At the first step, each agent $i$ updates its action given its current strategy vector $x_i(t)$. Its selection is slightly perturbed by a perturbation (or \emph{mutations}) factor $\lambda>0$, such that, with a small probability $\lambda$ agent $i$ follows a uniform strategy (or, it \emph{trembles}).  At the second step, agent $i$ evaluates its new selection by collecting a utility measurement, while in the last step, agent $i$ updates its strategy vector.

\begin{table}[t!]
\caption{Perturbed Learning Automata}
\boxed{\small
\begin{minipage}{0.48\textwidth}
At fixed time instances $t=1,2,...$, and for each agent $i\in\cI$, the following steps are executed recursively. Let $\alpha_i(t)$ and $x_i(t)$ denote the current action and strategy of agent $i$, respectively. 
\begin{enumerate}
\item (\emph{\textbf{action update}}) Agent $i$ selects a new action $\alpha_i(t+1)$ as follows: 
\begin{equation}	\label{eq:ActionUpdate}
\alpha_i(t+1) = \begin{cases}
{\rm rand}_{x_i(t)}[\mathcal{A}_i], & \mbox{ with probability } 1-\lambda, \\
{\rm rand}_{\rm unif}[\mathcal{A}_i], & \mbox{ with probability } \lambda,
\end{cases}
\end{equation} 
for some small perturbation factor $\lambda>0$.

\item (\emph{\textbf{evaluation}}) Agent $i$ applies its new action $\alpha_i(t+1)$ and receives a measurement of its utility $\reward{i}{\alpha(t+1)}>0$. 

\item (\emph{\textbf{strategy update}}) Agent $i$ revises its strategy $x_i\in\Delta(\magn{\mathcal{A}_i})$ as follows: 
\begin{eqnarray}	\label{eq:ReinforcementLearningModel}
\lefteqn{x_i(t+1)} \cr & = & x_i(t) + \epsilon \cdot \reward{i}{\alpha(t+1)} \cdot [e_{\alpha_i(t+1)} - x_i(t)] \cr & \df & \mathcal{R}_{i}(\alpha(t+1),x_i(t)),
\end{eqnarray}
for some constant step-size $\epsilon>0$.
\end{enumerate}
\end{minipage}
}
\label{Tb:ReinforcementLearning}
\end{table}

Here, we identify actions $\mathcal{A}_i$ with vertices of the simplex, $\{e_1,...,e_{\magn{\mathcal{A}_i}}\}$. For example, if agent $i$ selects its $j$th action at time $t$, then $e_{\alpha_i(t)}\equiv e_j$. To better see how the strategies evolve, let us consider the following toy example. Let the current strategy of agent $i$ be $x_i(t) = \left(\begin{array}{cc} \nicefrac{1}{2} & \nicefrac{1}{2} \end{array}\right)\tr$, i.e., agent $i$ has two actions, each assigned probability $\nicefrac{1}{2}$. Let also $\alpha_i(t+1)=1$, i.e., agent $i$ selects the first action according to rule (\ref{eq:ActionUpdate}). Then, the new strategy vector for agent $i$ is:
\begin{equation*}
x_i(t+1) = \nicefrac{1}{2} \left(\begin{array}{c} 1 + \epsilon u_i(\alpha(t+1))\\ 1 - \epsilon u_i(\alpha(t+1)) \end{array}\right).
\end{equation*}
Note that the strategy of the selected action increased by an amount that is proportional to the reward received. In other words, the dynamics reinforce repeated selection, and the reinforcement size, $\epsilon u_i(\alpha(t+1))$, depends on the reward received.

By playing a strategic-form game repeatedly over time, players do not always experience the same reward when selecting the same action, since other players may also change their actions. This dynamic element of the reinforcement size is the factor that complicates its convergence analysis, as it will become clear in the forthcoming Section~\ref{sec:RelatedWork}.

Note that by letting the step-size $\epsilon$ to be sufficiently small and since the utility function $\reward{i}{\cdot}$ is uniformly bounded in $\mathcal{A}$, $x_i(t)\in\Delta(\magn{\mathcal{A}_i})$ for all $t$. 

In case $\lambda=0$, the above update recursion will be referred to as the \emph{unperturbed learning automata}.

\subsection{Related work}	\label{sec:RelatedWork}

In this section, we provide a short overview of alternative payoff-based learning schemes specifically designed for repeatedly-played strategic-form games with a \emph{finite} set of actions and a \emph{fixed} utility function for each player. We have identified four main classes of payoff-based dynamics under such structural assumptions, namely \emph{discrete-time replicator dynamics}, \emph{learning automata}, \emph{$Q$-learning}, and \emph{aspiration-based learning}. Note that payoff-based learning has also been applied to static games with continuous action sets, e.g., extremum-seeking control \cite{frihauf_nash_2012,ye_distributed_2015} or actor-critic reinforcement learning \cite{perkins_mixed-strategy_2017}. The focus here instead is only on \emph{finite} action sets.

\subsubsection*{Discrete-time replicator dynamics}

A type of learning dynamics which is quite closely related to the dynamics of Table~\ref{Tb:ReinforcementLearning} is the discrete-time version of \emph{replicator dynamics} (cf.,~\cite{Hofbauer98}). There have been several variations with respect to the selection of the step-size sequence. For example, Arthur \cite{Arthur93} considered a similar rule, with $\lambda=0$ and step-size $\epsilon_i(t) = 1/(ct^{\nu}+\reward{i}{\alpha(t+1)}$, for some positive constant $c$ and for $\nu\in(0,1)$ (in the place of the constant step-size $\epsilon$ of (\ref{eq:ReinforcementLearningModel})). A comparative model is also used by Hopkins and Posch in \cite{HopkinsPosch05}, with $\epsilon_i(t) = 1/(V_i(t)+\reward{i}{\alpha(t+1)})$, where $V_i(t)$ is the accumulated benefits of agent $i$ up to time $t$, which gives rise to the urn process of Erev-Roth \cite{Erev98}. Some similarities are also shared with the Cross' learning model of \cite{BorgersSarin97}, where $\epsilon_i(t)=1$ and $\reward{i}{\alpha(t)}\leq{1}$, and its modification presented by Leslie in \cite{Leslie04}, where $\epsilon(t)$, instead, is decreasing with time.

The main difference of the proposed dynamics of Table~\ref{Tb:ReinforcementLearning} lies in the perturbation parameter $\lambda>0$ which was first introduced and analyzed in \cite{ChasparisShamma11_DGA}. A state-dependent perturbation term has also been investigated in \cite{ChasparisShammaRantzer15}. The perturbation parameter may serve as an equilibrium selection mechanism, since it may exclude convergence to \emph{action profiles that are not Nash equilibria} (briefly, \emph{non-Nash action profiles}) \cite{ChasparisShamma11_DGA}. It resolved one of the main issues of discrete-time replicator dynamics, that is the positive probability of convergence to non-Nash action profiles.

Although excluding convergence to non-Nash action profiles can be guaranteed by sufficiently small $\lambda>0$, establishing convergence to action profiles that are Nash equilibria (\emph{pure Nash equilibria}) may still be an issue. This is desirable in the context of coordination games \cite{Lewis02}, where Pareto-efficient outcomes are usually pure Nash equilibria (see, e.g., the definition of a coordination game in \cite{ChasparisAriShamma13_SIAM}). As shown in \cite{ChasparisShammaRantzer15}, convergence to pure Nash equilibria can be guaranteed only under strong conditions in the utility function. For example, as shown in \cite[Proposition~8]{ChasparisShammaRantzer15}, and under the ODE-method for stochastic approximations, it requires a) the existence of a potential function, and b) conditions over the Jacobian matrix of the potential function. Even if a potential function does exist, verifying conditions (b) is practically infeasible for games of more than 2 players \cite{ChasparisShammaRantzer15}.

On the other hand, an important side-benefit of using this class of dynamics is the indirect ``filtering'' of the utility-function measurements (through the formulation of the strategy vectors in (\ref{eq:ReinforcementLearningModel})). This is demonstrated, for example, in \cite{HopkinsPosch05} for the Erev-Roth model \cite{Erev98}, where the robustness of convergence/non-convergence asymptotic results is presented under the presence of noise in the utility measurements.

\subsubsection*{Learning automata}

Learning automata, as first introduced by \cite{Tsetlin73}, have been used to the control of complex systems due to their simple structure and low computational complexity (cf.,~\cite[Chapter~1]{Narendra89}). \emph{Variable-structure stochastic automata} may incorporate a form of reinforcement of favorable actions, similarly to the replicator dynamics discussed above. An example is the \emph{linear reward-inaction scheme} \cite[Chapter~4]{Narendra89}. Comparing it with the reinforcement rule of (\ref{eq:ReinforcementLearningModel}), the linear reward-inaction scheme accepts a utility of the form $u_i(\alpha)\in\{0,1\}$, where $0$ corresponds to an unfavorable  response and $1$ corresponds to a favorable one. More general forms can also be used when the utility function may accept discrete or continuous values in the unit interval $[0,1]$. 

Analysis of learning automata in games has been restricted to zero-sum and identical-interest games \cite{Narendra89,Sastry94}. In identical interest games, convergence analysis has been derived for small number of players and actions, due to the difficulty in deriving conditions for \emph{absolute monotonicity}, which corresponds to the property that \emph{the expected utility received by each player increases monotonically in time} (cf.,~\cite[Definition~8.1]{Narendra89}). Similar are the results presented in \cite{Sastry94}.

The property of \emph{absolute monotonicity} is closely related to the existence of a \emph{potential function}, as in the case of potential games \cite{MondererShapley96}. Similarly to the discrete-time replicator dynamics, convergence to non-Nash action profiles cannot be excluded when the step-size sequence is constant, even if the utility function satisfies $u_i(\alpha)\in[0,1]$. (The behavior under decreasing step-size is different as \cite[Proposition~2]{ChasparisShammaRantzer15} has shown.) Furthermore, deriving conditions for excluding convergence to mixed strategy profiles in coordination games continues to be an issue, as in discrete-time replicator dynamics. 

Recognizing these issues, reference~\cite{verbeeck_exploring_2007} introduced a class of linear reward-inaction schemes in combination with a coordinated exploration phase so that convergence to the efficient (pure) Nash equilibrium is achieved. However, coordination of the exploration phase requires communication between the players, an approach that does not fit to the distributed nature of dynamics pursued here.

\subsubsection*{$Q$-learning}

Similar questions of convergence to Nash equilibria also appear in alternative reinforcement-based learning formulations, such as approximate dynamic programming and $Q$-learning. Usually, under $Q$-learning, players keep track of the discounted running average reward received by each action, based on which optimal decisions are made (see, e.g., \cite{leslie_individual_2005}). Convergence to Nash equilibria can be accomplished under a stronger set of assumptions, which increases the computational complexity of the dynamics. For example, in the Nash-Q learning algorithm of \cite{hu_nash_2003}, it is indirectly assumed that agents need to have full access to the joint action space and the rewards received by other agents. 

More recently, reference \cite{chapman_convergent_2013} introduced a $Q$-learning scheme in combination with either adaptive play or better-reply dynamics in order to attain convergence to Nash equilibria in potential games \cite{MondererShapley96} or weakly-acyclic games. However, this form of dynamics requires that each player observes the actions selected by the other players, since a $Q$-value needs to be assigned to each joint action.

When the evaluation of the $Q$-values is totally independent, as in the individual $Q$-learning in \cite{leslie_individual_2005}, then convergence to Nash equilibria has been shown only for 2-player zero-sum games and 2-player partnership games with countably many Nash equilibria. Currently, there exist no convergence results in multi-player games. To overcome this deficiency of $Q$-learning, in the context of stochastic dynamic games, reference \cite{arslan_decentralized_2016} employs an additional feature (motivated by \cite{marden_payoff_2009}), namely \emph{exploration phases}. In any such \emph{exploration phase}, \emph{all} agents use constant policies, something that allows for an accurate computation of the optimal $Q$-factors. We may argue that the introduction of common exploration phases for all agents partially destroys the distributed nature of the dynamics, since it requires synchronization between agents.

\subsubsection*{Aspiration-based learning}

Recently, there have been several attempts to establish convergence to Nash equilibria through alternative payoff-based learning dynamics, e.g., the \emph{benchmark-based dynamics} of \cite{marden_payoff_2009} for convergence to Nash equilibria in weakly-acyclic games, the \emph{trial-and-error learning} \cite{young_learning_2009} for convergence to Nash equilibria in generic games, the \emph{mood-based dynamics} of \cite{marden_pareto_2014} for maximizing welfare in generic games and the \emph{aspiration learning} in \cite{ChasparisAriShamma13_SIAM} for convergence to efficient outcomes in coordination games. We will refer to such approaches as \emph{aspiration-based learning}. For these types of dynamics, convergence to Nash equilibria or efficient outcomes can be established without requiring any strong monotonicity properties (as in the multi-player weakly-acyclic games in \cite{marden_payoff_2009}).

The case of noisy utility measurements, which are present in many engineering applications, has not currently been addressed through aspiration-based learning. The only exception is reference \cite{marden_payoff_2009}, under benchmark-based dynamics, where (synchronized) \emph{exploration phases} are introduced through which each agent plays a fixed action for the duration of the exploration phase. If such exploration phases are large in duration (as required by the results in \cite{marden_payoff_2009}), this may reduce the robustness of the dynamics to dynamic changes in the environment (e.g., changes in the utility function). One reason that such robustness analysis is currently not possible in this class of dynamics is the fact that decisions are taken directly based on the measured performances (e.g., by comparing the currently measured performance with the benchmark performance in \cite{marden_payoff_2009}).

\subsection{Contributions}

The aforementioned literature in payoff-based learning dynamics in finite strategic-form games can be grouped into two main categories, namely \emph{reinforcement-based learning} (including discrete-time replicator dynamics, learning automata and $Q$-learning) and \emph{aspiration-based learning}. Summarizing their main advantages/disadvantages, we may argue the following high-level observations. 
\begin{itemize}
\item[(O1)] \emph{Strong asymptotic convergence guarantees} for large number of players, even for generic games, are currently possible under aspiration-based learning. Similar results in reinforcement-based learning are currently restricted to games of small number of players and under strong structural assumptions (e.g., the existence of a potential function). See, for example, the discussion on discrete-time replicator dynamics or learning automata in \cite{ChasparisShammaRantzer15}, or the discussion on $Q$-learning in \cite{arslan_decentralized_2016}.
\item[(O2)] \emph{Noisy observations} can be ``handled'' through reinforcement-based learning due to the indirect \emph{filtering} of the observation signals (e.g., through the formulation of the strategy-vector in the dynamics of Table~\ref{Tb:ReinforcementLearning} or through the formulation of the $Q$ factors in $Q$-learning). This is demonstrated, for example, in the  convergence/non-convergence asymptotic results presented in \cite{HopkinsPosch05} for a variation of the proposed learning dynamics of Table~\ref{Tb:ReinforcementLearning} (with $\lambda=0$ and decreasing $\epsilon$) and under the presence of noise. Similar effects in aspiration-based learning can currently be achieved only through the introduction of \emph{synchronized exploration phases}, as discussed in Section~\ref{sec:RelatedWork}.
\end{itemize}

Motivated by these two observations (O1)--(O2), and the obvious inability of reinforcement-based learning to provide strong asymptotic convergence guarantees in large games, this paper advances the asymptotic convergence guarantees for a class of reinforcement-based learning described in Table~\ref{Tb:ReinforcementLearning}. Our goal is to go beyond common restrictions of small number of players and strong assumptions in the game structure (such as the existence of a potential function).

The proposed dynamics (also \emph{perturbed learning automata}) were first introduced in \cite{ChasparisShamma11_DGA} to resolve stability issues in the boundary of the domain appearing in prior schemes \cite{Arthur93,HopkinsPosch05}. This was achieved through the introduction of the perturbation factor $\lambda$  of Table~\ref{Tb:ReinforcementLearning}. However, strong convergence guarantees (e.g., w.p.1 convergence to Nash equilibria or efficient outcomes) is currently limited to small number of players and under strict structural assumptions, e.g., the existence of a potential function and conditions on its Jacobian matrix \cite{ChasparisShammaRantzer15}.

In this paper, we drop the assumption of a decreasing step-size sequence, and instead we consider the case of a \emph{constant} step-size $\epsilon>0$. Such selection increases the adaptivity of the dynamics to varying conditions (e.g., the number of agents or the utility function). Furthermore, we provide a stochastic-stability analysis that provides a detailed characterization of the invariant probability measure of the induced Markov chain. In particular, our contributions are as follows:
\begin{enumerate}
\item[(C1)] We provide an equivalent finite-dimensional characterization of the infinite-dimensional induced Markov chain of the dynamics, that simplifies significantly the computation of its invariant probability measure. This simplification is based upon a weak-convergence result and it applies to any finite strategic-form game with the Positive-Utility Property \ref{P:PositiveUtilityProperty} (\emph{Theorem~\ref{Th:StochasticStability}}).
\item[(C2)] We capitalize on this simplification and provide a methodology for computing stochastically stable states in positive-utility finite strategic-form games (\emph{Theorem~\ref{Th:StochasticallyStableStatesMinimumResistance}}). 
\item[(C3)] We illustrate the utility of this methodology in establishing stochastic stability in a class of coordination games with no restriction on the number of players or actions (\emph{Theorem~\ref{Th:StochasticStabilityCoordinationGames}}).
\end{enumerate}
These contributions significantly extend the utility of reinforcement-based learning given observation (O1). Note that (C2) does not impose any structural assumptions other than the positive-utility property. Furthermore, (C3) is of independent interest. To the best of our knowledge, (C3) is the first convergence result in the context of reinforcement-based learning in repeatedly-played finite strategic-form games with the following features: a) a completely distributed setup (i.e., without any information exchange between players), b) more than two players, and c) a weakly-acyclicity condition that does not require the existence of a potential function.

The derived convergence results may not be as strong as the ones currently derived under aspiration-based learning, as discussed in Section~\ref{sec:RelatedWork}. However, reinforcement-based learning may better incorporate noisy observations (as discussed in observation (O2)). Moreover, additional features may allow for stronger convergence guarantees, even to Pareto-efficient outcomes, as presented in \cite{ChasparisShamma11_DGA}. Given the simplified analytical framework presented here, the prospects of even stronger convergence guarantees are promising.

This paper is an extension over an earlier version appeared in \cite{Chasparis17_PLA}, which only focused on contribution (C1) above.

\section{Stochastic Stability}	\label{sec:StochasticStability}

In this section, we provide a characterization of the \emph{invariant probability measure} $\mu_{\lambda}$ of the induced Markov chain $P_{\lambda}$ of the dynamics of Table~\ref{Tb:ReinforcementLearning}. The importance lies in an equivalence relation (established through a weak-convergence argument) of $\mu_{\lambda}$ with an invariant distribution of a finite-state Markov chain. Characterization of the stochastic stability of the dynamics will follow directly due to the Birkhoff's individual ergodic theorem.
This simplification in the characterization of $\mu_{\lambda}$ will be the first important step for providing specialized results for stochastic stability in strategic-form games.

\subsection{Terminology and notation}	\label{sec:Terminology}

Let $\cZ\df \mathcal{A}\times \cX$, where $\cX\df\cX_1\times\ldots\times\cX_n$, i.e., pairs of joint actions $\alpha$ and strategy profiles $x$. We will denote the elements of the state space $\cZ$ by $z$. 

The set $\mathcal{A}$ is endowed with the discrete topology, $\cX$ with its usual Euclidean topology, and $\cZ$ with the corresponding product topology. We also let $\Bor(\cZ)$ denote the Borel $\sigma$-field of $\cZ$, and $\mathfrak{P}(\cZ)$ the set of \emph{probability measures} (p.m.) on $\Bor(\cZ)$ endowed with the Prohorov topology, i.e., the topology of weak convergence. The learning algorithm of Table~\ref{Tb:ReinforcementLearning} defines an $\cZ$-valued Markov chain. Let $P_{\lambda}:\cZ\times\Bor(\cZ)\to[0,1]$ denote its transition probability function (t.p.f.), parameterized by $\lambda>0$. We refer to the process with $\lambda>0$ as the \emph{perturbed process}. Let also $P:\cZ\times\Bor(\cZ)\to[0,1]$ denote the t.p.f. of the \emph{unperturbed process}, i.e., when $\lambda=0$. We also define the $t$-step t.p.f. $P^t:\cZ\times\Bor(\cZ)\to[0,1]$ recursively as: $$P^t(z,D)=\int_{\cZ}P(z,dy)P^{t-1}(y,D).$$

We let $\Cc(\cZ)$ denote the Banach space of real-valued continuous functions on $\cZ$ under the sup-norm (denoted by $\|\cdot\|_{\infty}$) topology. For $f\in\Cc(\cZ)$, define
\begin{equation*}
P_{\lambda}f(z) \df \int_{\cZ}P_{\lambda}(z,dy)f(y),
\end{equation*}
and 
\begin{equation*}
\mu[f] \df \int_{\cZ}\mu(dz)f(z), \mbox{ for } \mu\in\mathfrak{P}(\cZ).
\end{equation*}

The unperturbed process governed by the t.p.f. $P$ will be denoted by $Z\df\{Z_{t} : t\ge0\}$. Let $\Omega\df\cZ^{\infty}$ denote the canonical path space, i.e., an element $\omega\in\Omega$ is a sequence $\{\omega(0),\omega(1),\dotsc\}$, with $\omega(t)= (\alpha(t),x(t))\in\cZ$. We use the same notation for the elements $(\alpha,x)$ of the space $\cZ$ and for the coordinates of the process $Z_{t}=(\alpha(t),x(t))$.
Let also $\Prob_{z}[\cdot]$ denote the unique p.m. induced by the unperturbed process $P$ on the product $\sigma$-field of $\cZ^{\infty}$, initialized at $z=(\alpha,x)$, and $\Exp_{z}[\cdot]$ the corresponding expectation operator. Let also $\theta:\Omega\to\Omega$ denote the \emph{shift operator}, such that $(Z\circ\theta_{t})(\omega)\df Z(\theta_{t}(\omega))=\{Z_t,Z_{t+1},...\}$. 
Furthermore, for $D\in\Bor(\cZ)$, let $\uptau(D)$ be the first hitting time of the unperturbed process to $D$, i.e., $\uptau(D)\df\inf\{t\geq{0}:Z_t\in{D}\}$.  

\subsection{Stochastic stability}	

First, we note that both $P$ and $P_{\lambda}$ ($\lambda>0$) satisfy the \emph{weak Feller property} (cf.,~\cite[Definition~4.4.2]{Lerma03}).
\begin{proposition}		\label{Pr:WeakFeller}
Both the unperturbed process $P$ ($\lambda=0$) and the perturbed process $P_{\lambda}$ ($\lambda>0$) have the weak Feller property.
\end{proposition}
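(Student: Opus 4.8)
The plan is to verify the weak Feller property directly from its definition: a transition probability function $Q$ on $\cZ$ is weak Feller if for every $f\in\Cc(\cZ)$ the map $z\mapsto \int_{\cZ}Q(z,dy)f(y)$ belongs to $\Cc(\cZ)$. So I would fix $f\in\Cc(\cZ)$ and show that $z\mapsto P_\lambda f(z)$ is continuous on $\cZ$ (the case $\lambda=0$ being the special case $\lambda=0$ of the same argument, or treated identically). The key structural observation is that $\cZ=\cA\times\cX$ carries the product of the discrete topology on the finite set $\cA$ with the Euclidean topology on $\cX$, so continuity on $\cZ$ amounts to continuity of $x\mapsto P_\lambda f(\alpha,x)$ on $\cX$ for each fixed $\alpha\in\cA$. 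Since $\cA$ is finite, there is no subtlety coming from the action coordinate.

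Next I would write out $P_\lambda f(\alpha,x)$ explicitly from the dynamics of Table~\ref{Tb:ReinforcementLearning}. Given the current state $(\alpha,x)$, the next state is $(\alpha',x')$ where each agent $i$ independently picks $\alpha_i'\in\cA_i$ according to the mixed action $(1-\lambda)x_i+\lambda\,(\nicefrac{1}{\magn{\cA_i}},\dots,\nicefrac{1}{\magn{\cA_i}})$, and then $x_i'=\mathcal{R}_i(\alpha',x_i)=x_i+\epsilon\,\reward{i}{\alpha'}\,[e_{\alpha_i'}-x_i]$ deterministically. Hence
\begin{equation*}
P_\lambda f(\alpha,x)=\sum_{\alpha'\in\cA}\Bigl(\prod_{i\in\cI}\bigl[(1-\lambda)x_{i,\alpha_i'}+\tfrac{\lambda}{\magn{\cA_i}}\bigr]\Bigr)\,f\bigl(\alpha',\,(\mathcal{R}_i(\alpha',x_i))_{i\in\cI}\bigr).
\end{equation*}
This is a finite sum (over $\alpha'\in\cA$, a finite set) of products of two factors. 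The first factor, the selection probability, is a polynomial in $x$ and therefore continuous in $x$. The second factor, $f$ evaluated at the updated state, is continuous in $x$ because $x\mapsto \mathcal{R}_i(\alpha',x_i)$ is an affine (hence continuous) map $\cX_i\to\mathbb{R}^{\magn{\cA_i}}$ for each fixed $\alpha'$, and $f$ is continuous on $\cZ$; the composition of continuous maps is continuous. A finite sum of products of continuous functions is continuous, so $x\mapsto P_\lambda f(\alpha,x)$ is continuous for each $\alpha$, which gives $P_\lambda f\in\Cc(\cZ)$. One should also remark (as the paper already notes) that for $\epsilon$ small enough the update keeps $x_i'$ in $\cX_i=\Delta(\magn{\cA_i})$, so the composition is well-defined as a map $\cZ\to\cZ$ and there is no domain issue. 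The argument for $P$ is the same with $\lambda=0$ (the selection probabilities reduce to $\prod_i x_{i,\alpha_i'}$, still polynomial).

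The main ``obstacle'' here is really just being careful about what has to be checked: the only place continuity could plausibly fail is at the action coordinate, and that is handled for free by the discrete topology on the finite set $\cA$ (equivalently, $\Cc(\cZ)$ functions are automatically continuous in $\alpha$ trivially). Everything else is elementary — polynomials and affine maps are continuous, compositions and finite sums preserve continuity, and there are no limits or tail behavior to control because the one-step transition is a finite mixture of deterministic maps. So the proof is short; the value of stating it as a proposition is that weak Feller continuity is the hypothesis needed downstream (e.g.\ for existence of invariant probability measures on the compact space $\cZ$ and for the weak-convergence argument underlying Theorem~\ref{Th:StochasticStability}), so it is worth recording cleanly rather than proving anything deep.
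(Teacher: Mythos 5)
Your proof is correct, but it follows a genuinely different route from the paper's. You verify the weak Feller property directly from the definition --- $P_\lambda f\in\Cc(\cZ)$ whenever $f\in\Cc(\cZ)$ --- by observing that the one-step kernel is a finite mixture of Dirac measures at the points $(\alpha',\mathcal{R}(\alpha',x))$, with both the mixture weights $\prod_i\bigl[(1-\lambda)x_{i\alpha_i'}+\lambda/\magn{\cA_i}\bigr]$ and the atom locations depending continuously on $x$; hence $P_\lambda f$ is a finite sum of products of continuous functions, and boundedness is immediate from $\supnorm{P_\lambda f}\le\supnorm{f}$. The paper instead verifies the equivalent lower-semicontinuity criterion of \cite[Proposition~7.2.1]{Lerma03}: $\liminf_{k}P_\lambda(z^{(k)},O)\ge P_\lambda(z,O)$ for every open $O\in\Bor(\cZ)$ and every sequence $z^{(k)}\to z$. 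Because that criterion integrates the indicator $\mathbb{I}_{\mathcal{P}_{\cX_i}(O)}$ rather than a continuous $f$, the paper must split into three cases according to whether the deterministic image $\mathcal{R}_i(\alpha,x_i)$ lies in the projected open set, outside its closure, or on its boundary --- exactly the discontinuity set of the indicator, where only the one-sided inequality survives. Your test-function formulation sidesteps this case analysis entirely, since composing a continuous $f$ with the continuous map $x\mapsto\mathcal{R}(\alpha',x)$ creates no exceptional set. Both arguments are complete and rest on the same structural facts (finite action set, affine strategy update, selection probabilities affine in $x$); yours is the shorter and more transparent of the two for this particular kernel, and your remarks that the discrete topology on the finite set $\cA$ trivializes continuity in the action coordinate and that small $\epsilon$ keeps $\mathcal{R}_i$ mapping into $\cX_i$ dispose of the only other potential issues.
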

\begin{proof}
See Appendix~\ref{Ap:WeakFeller}.
\end{proof}

The measure $\mu_{\lambda}\in\mathfrak{P}(\cZ)$ is called an \emph{invariant probability measure} (i.p.m.) for $P_{\lambda}$ if
\begin{equation*}
(\mu_{\lambda}P_{\lambda})(A) \df \int_{\cZ}\mu_{\lambda}(dz)P_{\lambda}(z,A) = \mu_{\lambda}(A), \qquad A\in\Bor(\cZ).
\end{equation*}
Since $\cZ$ defines a locally compact separable metric space and $P$, $P_{\lambda}$ have the weak Feller property, they both admit an i.p.m., denoted $\mu$ and $\mu_{\lambda}$, respectively \cite[Theorem~7.2.3]{Lerma03}.

We would like to characterize the \emph{stochastically stable states} $z\in\cZ$ of $P_{\lambda}$, that is any state $z\in\cZ$ for which any collection of i.p.m.'s $\{\mu_{\lambda}\in\mathfrak{P}(\cZ):\mu_{\lambda}P_{\lambda}=\mu_{\lambda},\lambda>0\}$ satisfies $\liminf_{\lambda\to{0}}\mu_{\lambda}(z)>0$. As the forthcoming analysis will show, the stochastically stable states will be a subset of the set of \emph{pure strategy states} (p.s.s.) defined as follows:
\begin{definition}[Pure Strategy State]	\label{def:PureStrategyState}
\textit{
A pure strategy state is a state $s=(\alpha,x)\in\cZ$ such that for all $i\in\mathcal{I}$, $x_i = e_{\alpha_i}$, i.e., $x_i$ coincides with the vertex of the probability simplex $\Delta(\magn{\mathcal{A}_i})$ which assigns probability 1 to action $\alpha_i$.
}
\end{definition}

We will denote the set of pure strategy states by $\mathcal{S}$.

\begin{theorem}[Stochastic Stability]		\label{Th:StochasticStability}
There exists a unique probability vector $\pi=(\pi_1,...,\pi_{\magn{\mathcal{S}}})$ such that for any collection of i.p.m.'s $\{\mu_{\lambda}\in\mathfrak{P}(\cZ):\mu_{\lambda}P_{\lambda}=\mu_{\lambda}, \lambda>0\}$, the following hold:
\begin{itemize}
\item[(a)] $\lim_{\lambda\to{0}}\mu_{\lambda}(\cdot) = \hat{\mu}(\cdot) \df \sum_{s\in\mathcal{S}}\pi_s\Dirac{s}(\cdot),$ where convergence is in the weak sense.
\item[(b)] The probability vector $\pi$ is an invariant distribution of the (finite-state) Markov process $\hat{P}$, such that, for any $s,s'\in\mathcal{S}$,
\begin{equation}	\label{eq:FiniteStateMarkovChain}
\hat{P}_{ss'} \df \lim_{t\to\infty} QP^t(s,\Neig_{\delta}(s')),
\end{equation}
for some $\delta>0$ sufficiently small, where $Q$ is the t.p.f. corresponding to \emph{only one agent trembling} (i.e., following the uniform distribution of (\ref{eq:ActionUpdate})).
\end{itemize}
\end{theorem}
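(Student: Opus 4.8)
The plan is to analyze the asymptotic behavior of $\mu_\lambda$ as $\lambda\to 0$ by combining three ingredients: (i) a concentration result showing that for small $\lambda$, the unperturbed process $P$ drives the state arbitrarily close to the set $\cS$ of pure strategy states; (ii) a weak-convergence (tightness) argument yielding that any weak limit of $\{\mu_\lambda\}$ is supported on $\cS$; and (iii) an identification of the limiting measure on $\cS$ with the invariant distribution of a finite-state chain governed by single-agent trembles. First I would establish the attracting property of $\cS$ under $P$: starting from any $z=(\alpha,x)$, the unperturbed dynamics repeat the same action profile $\alpha$ (as long as no tremble occurs the action distribution is $x_i(t)$, and after a selection the chosen coordinate is reinforced), and by the Positive-Utility Property the reinforcement $\epsilon\, u_i(\alpha)[e_{\alpha_i}-x_i]$ strictly contracts $x_i$ toward $e_{\alpha_i}$ at a geometric rate bounded below uniformly. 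Hence for any $\delta>0$ there is $T_\delta$ with $P^{T_\delta}(z,\Neig_\delta(\cS))$ bounded below, and in fact $P^t(z,\cdot)$ converges (along sample paths, a.s.) into $\Neig_\delta(s)$ for the p.s.s. $s$ determined by the current action profile. This is where the Positive-Utility Property is essential — strict positivity guarantees genuine contraction rather than a mere fixed point.

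Next I would set up the weak-convergence machinery. By Proposition~\ref{Pr:WeakFeller} and \cite[Theorem~7.2.3]{Lerma03}, each $P_\lambda$ admits an i.p.m.\ $\mu_\lambda$; since $\cZ$ is compact, $\{\mu_\lambda\}_{\lambda>0}$ is tight, so every sequence $\lambda_k\to 0$ has a subsequence with $\mu_{\lambda_k}\Rightarrow\mu_*$ for some $\mu_*\in\Pm(\cZ)$. The weak Feller property lets me pass to the limit in the invariance identity $\mu_{\lambda}P_\lambda=\mu_\lambda$: for $f\in\Cc(\cZ)$, $P_\lambda f\to Pf$ uniformly as $\lambda\to0$ (the tremble perturbs the action-selection kernel by $O(\lambda)$ in total variation, uniformly in $z$), so $\mu_*[Pf]=\mu_*[f]$, i.e.\ $\mu_*$ is invariant for the unperturbed $P$. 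Combining with step (i): iterating, $\mu_*[f]=\mu_*[P^t f]\to\mu_*\bigl[\sum_{s}\Ind_{B_s}\, f|_{\Neig_\delta(s)}\bigr]$ shows $\mu_*$ must assign full mass to $\Neig_\delta(\cS)$ for every $\delta$, hence to $\cS$. Writing $\mu_*=\sum_{s\in\cS}\pi_s\Dirac{s}$ gives the form in part~(a); uniqueness of $\pi$ (and hence that the full limit, not just subsequential limits, exists) will follow once part~(b) pins $\pi$ down as the unique stationary distribution of an irreducible finite chain.

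For part~(b), I would compute the effective transition law on $\cS$ induced by a single $O(\lambda)$ tremble. Condition on starting at a p.s.s.\ $s=(\alpha, e_\alpha)$. Events where two or more agents tremble in the same period, or where any agent trembles twice before the state re-settles near a new p.s.s., have probability $O(\lambda^2)$; to first order in $\lambda$, the process either stays at $s$ or experiences exactly one agent $i$ trembling to some $\alpha_i'$, after which the unperturbed dynamics (by step (i)) carry the state into $\Neig_\delta(s')$ for the p.s.s.\ $s'$ determined by the resulting action profile and its subsequent reinforcement. This is exactly the kernel $\hat P_{ss'}=\lim_{t\to\infty} QP^t(s,\Neig_\delta(s'))$. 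A standard argument for perturbed Markov chains (regular perturbations / the theory behind stochastically stable states, cf.\ Young-type analyses) then shows that the stationary distribution $\mu_\lambda$, restricted and normalized on the basins, converges to the stationary distribution $\pi$ of $\hat P$; irreducibility of $\hat P$ (any p.s.s.\ is reachable from any other via a chain of single-agent deviations, since the uniform tremble has full support on each $\cA_i$) gives uniqueness of $\pi$. I expect the main obstacle to be the bookkeeping in step~(iii): making rigorous that the $O(\lambda)$-order balance equations for $\mu_\lambda$ across the basins $B_s$ decouple from the within-basin behavior and limit precisely to $\pi\hat P=\pi$ — this requires a uniform-in-$\lambda$ control of how long the perturbed process lingers near a p.s.s.\ before the next tremble and a uniform lower bound on the return probability to $\Neig_\delta(\cS)$, which is where the compactness of $\cZ$ and the geometric contraction from step~(i) must be carefully combined. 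The technical derivation of all this is deferred to Section~\ref{sec:TechnicalDerivation}.
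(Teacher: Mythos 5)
Your steps (i) and (ii) track the paper's Propositions~\ref{Pr:ConvergenceToPSS} and \ref{Pr:LimitingUnperturbedTPF}: the unperturbed contraction onto $\cS$ and the tightness/weak-limit argument are sound, and they do yield that any subsequential weak limit $\mu_*$ of $\{\mu_\lambda\}$ is supported on $\cS$. But note what that step actually buys you: since every pure strategy state is a fixed point of the unperturbed dynamics, \emph{every} probability measure supported on $\cS$ is invariant for $P$, so $P$-invariance of $\mu_*$ carries no information about the weights $\pi_s$ and gives neither uniqueness of the limit nor part~(b). The entire content of the theorem sits in identifying $\pi$ with the invariant distribution of $\hat{P}$, and there your proposal defers to ``a standard argument for perturbed Markov chains (Young-type analyses).'' That is the genuine gap: the Freidlin--Wentzell/Young machinery for regularly perturbed chains is formulated for \emph{finite} state spaces, whereas here $P_\lambda$ lives on the continuum $\cZ=\cA\times\cX$, the ``basins'' are not recurrent classes of a finite chain, and the $O(\lambda)$ balance-equation bookkeeping you flag as ``the main obstacle'' is precisely the part that does not follow off the shelf and is left undone.

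The paper closes this gap with an exact algebraic device rather than an asymptotic expansion. Writing $P_{\lambda}=(1-\varphi(\lambda))P+\varphi(\lambda)Q_{\lambda}$ and introducing the lifted kernel $P_{\lambda}^{L}=Q_{\lambda}R_{\lambda}$ with $R_{\lambda}=\varphi(\lambda)\sum_{t\ge 0}(1-\varphi(\lambda))^{t}P^{t}$ the resolvent, Proposition~\ref{Pr:WeakLimitPointsOfPerturbedInvariantMeasures}(c) shows the identity $\mu_{\lambda}P_{\lambda}^{L}=\mu_{\lambda}$ holds \emph{exactly} for every $\lambda>0$ — no first-order-in-$\lambda$ approximation is needed. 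Combined with the sup-norm convergences $R_{\lambda}f\to\Pi f$ and $P_{\lambda}^{L}f\to Q\Pi f$, this forces every weak limit of $\mu_{\lambda}$ to be an i.p.m.\ of $Q\Pi$; irreducibility of $Q\Pi$ restricted to $\cS$ (Proposition~\ref{Pr:UniqueInvariantPMofQPi}) then gives uniqueness of $\pi$ and hence convergence of the full net, not just of subsequences. Without this resolvent identity, or a rigorous substitute for it on the continuous state space, your argument establishes only that limit points are supported on $\cS$, not parts (a)-uniqueness or (b).
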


The proof of Theorem~\ref{Th:StochasticStability} requires a series of propositions and it will be presented in detail in Section~\ref{sec:TechnicalDerivation}. 

Theorem~\ref{Th:StochasticStability} implicitly provides a stochastically stability argument. In fact, the expected asymptotic behavior of the dynamics can be characterized by $\Hat{\mu}$ and, therefore, $\pi$. In particular, by Birkhoff's individual ergodic theorem \cite[Theorem~2.3.4]{Lerma03}, the weak convergence of $\mu_{\lambda}$ to $\Hat{\mu}$, and the fact that $\mu_{\lambda}$ is ergodic, we have that the expected percentage of time that the process spends in any $O\in\mathcal{B}(\cZ)$ such that
$\partial{O}\cap\mathcal{S}\ne\varnothing$ is given by $\Hat{\mu}(O)$ as the
experimentation probability $\lambda$ approaches zero and time
increases, i.e.,
\begin{equation*}
\lim_{\lambda\downarrow{0}}\left(\lim_{t\to\infty}\;
\frac{1}{t}\sum_{k=0}^{t-1}P_{\lambda}^{k}(x,O)\right) = \Hat{\mu}(O)\,.
\end{equation*}

\subsection{Discussion}

Theorem~\ref{Th:StochasticStability} establishes ``\textit{equivalence}'' (in a weak convergence sense) of the original (perturbed) learning process with a simplified process, where \emph{only one agent trembles at the first iteration and then no agent trembles thereafter}. This  simplification in the analysis has originally been capitalized to analyze \emph{aspiration learning} dynamics in \cite{Karandikar98,ChasparisAriShamma13_SIAM}, and it is based upon the observation that \emph{under the unperturbed process, agents' strategies will converge to a pure strategy state}, as it will be shown in the forthcoming Section~\ref{sec:TechnicalDerivation}.

The limiting behavior of the original (perturbed) dynamics is characterized by the (\emph{unique}) invariant distribution of the finite-state Markov chain $\{P_{ss'}\}$, whose states correspond to the pure strategy states $\cS$. In other words, \emph{we should expect that as the perturbation parameter $\lambda$ approaches zero, the algorithm spends the majority of the time on pure strategy states}. The importance of this result lies on the fact that no constraints have been imposed in the payoff matrix/function of the game other than the Positive-Utility Property \ref{P:PositiveUtilityProperty}. 

In the forthcoming Section~\ref{sec:StochasticallyStableStates}, we will use this result to provide a methodology for computing the set of stochastically stable states. Then, in Section~\ref{sec:Illustration}, this methodology will be illustrated in the context of coordination games.

\section{Technical Derivation}	\label{sec:TechnicalDerivation}

In this section, we provide the main steps for the proof of Theorem~\ref{Th:StochasticStability}. We begin by investigating the asymptotic behavior of the unperturbed process $P$, and then we characterize the i.p.m. of the perturbed process with respect to the p.s.s.'s $\cS$.

\subsection{Unperturbed Process}	\label{Sc:UnperturbedProcess}

For $t\ge0$ define the sets
\begin{align*}
A_{t} &\df \left\{\omega\in\Omega:\alpha(\tau)=\alpha(t)\,, \text{~for all~} \tau \geq t \right\}\,,\\[5pt]
B_{t} &\df \{\omega\in\Omega:\alpha(\tau)=\alpha(0)\,, \text{~for all~} 0\le\tau\le{t}\}\,.
\end{align*}
Note that $\{B_{t}:t\ge0\}$ is a non-increasing sequence, i.e., $B_{t+1}\subseteq B_{t}$, while $\{A_{t}:t\ge0\}$ is non-decreasing, i.e., $A_{t+1}\supseteq A_t$. Let
$A_{\infty} \df\bigcup_{t=0}^{\infty}A_{t} \mbox{ and } B_{\infty}\df\bigcap_{t=1}^{\infty}B_{t}.$
In other words, \emph{$A_{\infty}$ corresponds to the event that agents eventually play the same action profile, while $B_{\infty}$ corresponds to the event that agents never change their actions}. The following proposition discusses convergence of the unperturbed process to the set of p.s.s.'s, $\cS$.

\begin{proposition}[Convergence to p.s.s.]	\label{Pr:ConvergenceToPSS}
Let us assume that the step-size $\epsilon>0$ is sufficiently small such that $0<\epsilon u_i(\alpha)<1$ for all $\alpha\in\mathcal{A}$ and $i\in\mathcal{I}$. Then, the following hold: 
\begin{itemize}
\item[(a)] $\inf_{z\in\cZ}\;\mathbb{P}_{z}[B_{\infty}]>0$, 
\item[(b)] $\inf_{z\in\cZ}\Prob_{z}[A_\infty]=1$.
\end{itemize}
\end{proposition}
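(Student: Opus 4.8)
The plan is to first establish part (a) by a direct single-step probabilistic lower bound, and then to leverage it to obtain part (b) via a Borel--Cantelli / renewal-type argument exploiting the strong Markov property.

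For part (a), I would proceed as follows. Under the unperturbed process, the event $B_\infty$ is precisely the event that every agent selects, at each time step, the same action it played initially. Fix an arbitrary initial state $z=(\alpha,x)\in\cZ$. At time $t$, conditional on the current state $(\alpha(t),x(t))$ with $\alpha(t)=\alpha(0)=\alpha$, each agent $i$ re-selects action $\alpha_i$ with probability $x_i(t)[\alpha_i]$ (the $\alpha_i$-th coordinate of its strategy), and these selections are independent across agents. Hence the one-step probability of ``no agent changes'' is $\prod_{i\in\cI} x_i(t)[\alpha_i]$. The key observation is that once an agent keeps replaying $\alpha_i$, its strategy coordinate $x_i(t)[\alpha_i]$ is \emph{non-decreasing} in $t$: by \eqref{eq:ReinforcementLearningModel}, $x_i(t+1)[\alpha_i] = x_i(t)[\alpha_i] + \epsilon\,\reward{i}{\alpha}\,(1-x_i(t)[\alpha_i])$, and since $0<\epsilon\reward{i}{\alpha}<1$ and $x_i(t)[\alpha_i]\le 1$, the increment is nonnegative, so $x_i(t)[\alpha_i]\ge x_i(0)[\alpha_i]$ for all $t$ on the event $B_t$. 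Moreover one can get a uniform positive lower bound that \emph{improves} with $t$: iterating the recursion gives $1-x_i(t)[\alpha_i] = (1-\epsilon\reward{i}{\alpha})^t\,(1-x_i(0)[\alpha_i])\le(1-\epsilon u_{\min})^t$ where $u_{\min}\df\min_{i,\alpha}\reward{i}{\alpha}>0$, so $x_i(t)[\alpha_i]\ge 1-(1-\epsilon u_{\min})^t$. Therefore
\begin{equation*}
\Prob_z[B_\infty] = \prod_{t=0}^{\infty}\Exp_z\!\left[\prod_{i\in\cI}x_i(t)[\alpha_i]\,\Big|\,B_t\right]\Prob_z[B_t\mid B_{t-1}]\cdots \ge \prod_{i\in\cI}x_i(0)[\alpha_i]\cdot\prod_{t=1}^{\infty}\bigl(1-(1-\epsilon u_{\min})^t\bigr)^{n}.
\end{equation*}
The infinite product $\prod_{t\ge1}(1-(1-\epsilon u_{\min})^t)^n$ converges to a strictly positive constant because $\sum_t (1-\epsilon u_{\min})^t<\infty$. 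The only remaining issue is that $x_i(0)[\alpha_i]$ could be zero for the given initial state, making the bound vacuous; this is handled by noting that after one step of the unperturbed chain every played coordinate becomes strictly positive (indeed $x_i(1)[\alpha_i]\ge \epsilon u_{\min}$ once $\alpha_i$ is replayed), so it suffices to lower-bound $\inf_z\Prob_z[B_\infty]$ by first conditioning on the first transition staying in $B_1$ — whose probability is itself bounded below using a similar one-step estimate — and then applying the above from time $1$. Taking the infimum over $z$ gives a strictly positive constant $c>0$, proving (a).

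For part (b), observe that $A_\infty = \bigcup_t A_t$ is a tail-type event; on $A_t$ the chain is ``frozen'' after time $t$. Define stopping times $T_0=0$ and, recursively, $T_{k+1}$ to be the first time after $T_k$ at which some agent changes action; then $\{A_\infty\}^c = \{T_k<\infty \text{ for all }k\}$. By the strong Markov property, conditional on $\mathcal{F}_{T_k}$ and on $T_k<\infty$, the probability that the chain subsequently stays frozen forever — i.e. that $T_{k+1}=\infty$ — equals $\Prob_{Z_{T_k}}[B_\infty]\ge c>0$ by part (a), uniformly in $k$ and in the state. Hence $\Prob_z[T_{k+1}<\infty\mid T_k<\infty]\le 1-c$, so $\Prob_z[T_k<\infty]\le(1-c)^k\to 0$, giving $\Prob_z[A_\infty^c]=0$, i.e. $\Prob_z[A_\infty]=1$ for every $z$, and in particular $\inf_z\Prob_z[A_\infty]=1$.

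The main obstacle I anticipate is the careful bookkeeping in part (a) around initial states with zero coordinates on the played actions, and — more substantively — making the conditional-expectation manipulation in the infinite-product bound fully rigorous (the expression above is heuristic; the clean route is a direct induction showing $\Prob_z[B_{t}]\ge \prod_{i}x_i(0)[\alpha_i]\prod_{\tau=1}^{t-1}(1-(1-\epsilon u_{\min})^\tau)^n$ via the tower property, then letting $t\to\infty$ by monotone convergence since $B_\infty=\bigcap_t B_t$ with $B_t$ decreasing). Part (b) is then a routine strong-Markov renewal argument once (a) is in hand.
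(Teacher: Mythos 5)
Your argument for part (a) is in substance the paper's: both reduce $\Prob_z[B_\infty]$ to the infinite product $\prod_{t}\prod_i x_{i\alpha_i}(t)$ along the frozen path and show positivity because $1-x_{i\alpha_i}(t)$ decays geometrically. The paper does this pointwise (ratio test plus limit-comparison test applied to $\sum_t\log x_{i\alpha_i}(t)$) and then appeals to continuity of $z\mapsto\Prob_z[B_\infty]$ on a compact set to pass to the infimum; your version replaces the rate $1-\epsilon u_i(\alpha)$ by the uniform $1-\epsilon u_{\min}$ and gets the explicit bound $\prod_i x_{i\alpha_i}(0)\cdot\prod_{t\ge1}\bigl(1-(1-\epsilon u_{\min})^t\bigr)^n$, which is more quantitative and is exactly what your part (b) needs. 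For part (b) you genuinely diverge from the paper: the paper introduces the near-vertex sets $C_\ell$, lower-bounds $\Prob_z[A_t]$ by ``reach $C_\ell$, then freeze,'' and runs a Borel--Cantelli-type argument along a subsequence $t_k=k\ell^m$ before letting $\ell\to\infty$; you instead consider the successive action-change times $T_k$, write $A_\infty^c=\{T_k<\infty\ \forall k\}$, and use the strong Markov property with the uniform bound from (a) to get $\Prob_z[T_k<\infty]\le(1-c)^{k-1}\to 0$. Your route is shorter and avoids the $C_\ell$ bookkeeping entirely.

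The one step that does not work as written is your patch for initial states with $x_{i\alpha_i}(0)=0$. For such a state $\Prob_z[B_1]=\prod_i x_{i\alpha_i}(0)=0$, so the probability of ``the first transition staying in $B_1$'' is \emph{not} bounded below over $\cZ$, and indeed $\inf_{z\in\cZ}\Prob_z[B_\infty]=0$ at such boundary states; conditioning on $B_1$ cannot rescue a uniform bound over all of $\cZ$. (The paper's own proof shares this gap: it assumes $x_{i\alpha_i}(0)>0$ and then invokes continuity on a compact set, which does not yield a positive infimum when the function vanishes on part of that set.) The correct repair is the other observation already in your write-up: by the update rule, $x_{i\alpha_i(t)}(t)\ge\epsilon u_i(\alpha(t))\ge\epsilon u_{\min}$ for every $i$ and every $t\ge1$, so the uniform constant $c=(\epsilon u_{\min})^n\prod_{t\ge1}\bigl(1-(1-\epsilon u_{\min})^t\bigr)^n>0$ bounds $\Prob_{Z_t}[B_\infty]$ at every state the chain can occupy from time $1$ onward --- in particular at every $Z_{T_k}$ with $T_k\ge1$ --- which is all your geometric-trials argument in (b) requires. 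State the uniform bound over reachable (post--time-one) states rather than over all of $\cZ$, and the proof is complete.
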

\begin{proof}
See Appendix~\ref{Ap:ConvergenceToPSS}.
\end{proof}

Statement (a) of Proposition~\ref{Pr:ConvergenceToPSS} states that \emph{the probability that agents never change their actions is bounded away from zero}, while statement (b) states that \emph{the probability that eventually agents play the same action profile is one}. This also indicates that any invariant measure of the unperturbed process can be characterized with respect to the pure strategy states $\cS$, which is established by the following proposition. 

\begin{proposition}[Limiting t.p.f. of unperturbed process]	\label{Pr:LimitingUnperturbedTPF}
Let $\mu$ denote an i.p.m. of $P$. Then, there exists a t.p.f. $\Pi$ on $\cZ\times\Bor(\cZ)$ with the following properties:
\begin{itemize}
\item[(a)] for $\mu$-a.e. $z\in\cZ$, $\Pi(z,\cdot)$ is an i.p.m. for $P$;
\item[(b)] for all $f\in\Cc(\cZ)$, $\lim_{t\to\infty}\|P^tf-\Pi f\|_{\infty}=0$;
\item[(c)] $\mu$ is an i.p.m. for $\Pi$;
\item[(d)] the support\footnote{The \emph{support} of a measure $\mu$ on $\cZ$ is the unique closed set $F\subset\Bor(\cZ)$ such that $\mu(\cZ\backslash{F})=0$ and $\mu(F\cap{O})>0$ for every open set $O\subset\cZ$ such that $F\cap{O}\neq\varnothing$.} of $\Pi$ is on $\mathcal{S}$ for all $z\in\cZ$.
\end{itemize}
\end{proposition}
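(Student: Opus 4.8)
The key facts available: weak Feller property for $P$ (Proposition 2.1), existence of an i.p.m. $\mu$, and Proposition 3.2 — $\inf_z \mathbb{P}_z[A_\infty] = 1$ (agents eventually play a fixed action profile w.p.1) and $\inf_z \mathbb{P}_z[B_\infty] > 0$ (positive prob of never switching).

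Main idea: On the event $A_\infty$, the action profile $\alpha(t)$ is eventually constant, say $\alpha(t) = \bar\alpha$ for $t \geq T$. For $t \geq T$, the strategy update becomes deterministic: $x_i(t+1) = x_i(t) + \epsilon u_i(\bar\alpha)[e_{\bar\alpha_i} - x_i(t)]$, a linear contraction toward $e_{\bar\alpha_i}$ with factor $(1 - \epsilon u_i(\bar\alpha)) \in (0,1)$. Hence $x(t) \to (e_{\bar\alpha_i})_i$, i.e. $Z_t$ converges to a p.s.s. Since $\mathbb{P}_z[A_\infty] = 1$ uniformly, $Z_t$ converges a.s. to a (random) p.s.s. for every starting $z$.

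Now I would build $\Pi$ as the limit. Since $\mathcal{S}$ is finite, define the random limit $Z_\infty(\omega) \in \mathcal{S}$ and set $\Pi(z, \cdot) \df$ the distribution of $Z_\infty$ under $\mathbb{P}_z$. This is a t.p.f. (Markov property + measurability of $z \mapsto \mathbb{P}_z[Z_\infty = s]$, which follows from weak Feller and dominated convergence). Its support lies in $\mathcal{S}$ by construction, giving (d). For (b): for $f \in \Cc(\cZ)$, $P^t f(z) = \mathbb{E}_z[f(Z_t)] \to \mathbb{E}_z[f(Z_\infty)] = \Pi f(z)$ pointwise by bounded convergence; to upgrade to sup-norm convergence I would use that $\cZ$ is compact (product of finite $\mathcal{A}$ and the compact simplex $\cX$), $f$ is uniformly continuous, and the convergence $Z_t \to Z_\infty$ is uniform in the sense that for every $\eta$ there is $T$ with $\sup_z \mathbb{P}_z[|Z_t - Z_\infty| > \eta] < \eta$ for $t \geq T$ — this uniformity comes from combining the uniform bound $\inf_z \mathbb{P}_z[A_\infty] = 1$ with the uniform contraction rate of the post-$T$ dynamics, and from a uniform tail bound on the stopping time $T$ at which the action profile freezes. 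For (a): from $\Pi(z,\cdot)$ being supported on p.s.s., which are fixed points of the unperturbed dynamics ($P(s, \cdot) = \delta_s$ for $s \in \mathcal{S}$ since at a p.s.s. no agent trembles and the update is a fixed point), $\Pi(z,\cdot)$ is a mixture of point masses at fixed points, hence $\mu$-a.e. (indeed every) $\Pi(z,\cdot)$ is an i.p.m. For (c): $\mu P = \mu$, so $\mu P^t = \mu$; integrating (b) against $\mu$, $\mu \Pi = \lim_t \mu P^t = \mu$, so $\mu$ is invariant for $\Pi$.

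The main obstacle is establishing (b) in the \emph{sup-norm} rather than pointwise. Pointwise convergence $P^t f(z) \to \Pi f(z)$ is immediate; the uniformity requires showing the freeze-time $T$ has uniformly (in $z$) small tails. I would get this from Proposition 3.2's proof structure: $\mathbb{P}_z[B_\infty] \geq c > 0$ uniformly, and by a renewal/strong-Markov argument the action profile freezes within time $k$ with probability at least $1 - (1-c')^{k}$ uniformly in $z$, for a suitable $c' > 0$. Combined with the deterministic contraction rate $\rho \df \max_{i,\alpha}(1 - \epsilon u_i(\alpha)) < 1$ valid after the freeze, this yields $\sup_z \mathbb{P}_z[|Z_t - Z_\infty| > \eta] \to 0$, and then uniform continuity of $f$ on the compact space $\cZ$ closes (b). A secondary technical point is measurability of $z \mapsto \Pi(z, A)$; this follows since $\Pi f(z) = \lim_t P^t f(z)$ is a pointwise limit of continuous functions (each $P^t f \in \Cc(\cZ)$ by weak Feller), hence Borel measurable, and this extends from $f \in \Cc(\cZ)$ to indicators of Borel sets by a monotone-class argument.
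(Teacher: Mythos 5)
Your proposal is correct in substance but takes a genuinely different route from the paper's. The paper obtains (a)--(c) in one line by invoking a general limit theorem for weak-Feller chains on locally compact separable metric spaces (\cite[Theorem~5.2.2]{Lerma03}), which supplies the limiting t.p.f. $\Pi$ abstractly, and then proves (d) by contradiction: if $\Pi(z^*,O)>0$ for some open $O$ disjoint from $\cS$, the Portmanteau theorem forces $\liminf_{t\to\infty}P^t(z^*,O)\geq\Pi(z^*,O)>0$, contradicting Proposition~\ref{Pr:ConvergenceToPSS}(b). You instead construct $\Pi(z,\cdot)$ explicitly as the law of the almost-sure limit $Z_\infty\in\cS$ (on $A_\infty$ the action profile freezes and the strategy then contracts geometrically to the corresponding vertex), which makes (d) true by construction, yields (a) for \emph{every} $z$ (pure strategy states are absorbing for $P$, so any mixture of $\Dirac{s}$, $s\in\cS$, is invariant), and reduces (c) to dominated convergence against $\mu$. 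What your route buys is self-containedness and a concrete identification of $\Pi$; what it costs is that the sup-norm convergence in (b), which the paper gets for free from the cited theorem, must be established by hand via a uniform-in-$z$ tail bound on the freeze time, and you correctly flag this as the crux.

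One caveat on that crux: the bound you assert, that the action profile ``freezes within time $k$ with probability at least $1-(1-c')^k$ uniformly in $z$,'' does not follow directly from the strong Markov property applied to $B_\infty$, because the natural renewal structure is indexed by the \emph{number} of action changes (each successive change occurs with conditional probability at most $1-c$), not by elapsed time; a single change could in principle occur late. To close this you must combine the geometric bound on the number of changes with the geometrically decaying per-step probability of a change along a frozen stretch, or simply reuse the uniform estimate $\inf_z\Prob_z[A_{t_\ell}]\to 1$ established at the end of the paper's proof of Proposition~\ref{Pr:ConvergenceToPSS}(b), since $A_t$ is contained in the event that the freeze time is at most $t$. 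With that repair, the uniform tail bound plus uniform continuity of $f$ on the compact space $\cZ$ gives $\|P^tf-\Pi f\|_\infty\to 0$, and your argument is complete.
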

\begin{proof}
The state space $\cZ$ is a locally compact separable metric space and the t.p.f. of the unperturbed process $P$ admits an i.p.m. due to Proposition~\ref{Pr:WeakFeller}. Thus, statements (a), (b) and (c) follow directly from \cite[Theorem~5.2.2 (a), (b), (e)]{Lerma03}. 

(d) Let us assume that the support of $\Pi$ includes points in $\cZ$ other than the pure strategy states in $\cS$. Then, there exists an open set $O\in\Bor(\cZ)$ such that $O\cap\cS=\varnothing$ and $\Pi(z^*,O)>0$ for some $z^*\in\cZ$. According to (b), $P^{t}$ converges weakly to $\Pi$. Thus, from Portmanteau theorem (cf.,~\cite[Theorem~1.4.16]{Lerma03}), we have that $\liminf_{t\to\infty} P^{t}(z^*,O) \geq \Pi(z^*,O)>0.$ This is a contradiction of Proposition~\ref{Pr:ConvergenceToPSS}(b), which concludes the proof.
\end{proof}

Proposition~\ref{Pr:LimitingUnperturbedTPF} states that the limiting unperturbed t.p.f. converges weakly to a t.p.f. $\Pi$ which accepts the same i.p.m. as $P$. Furthermore, \emph{the support of $\Pi$ is the set of pure strategy states $\mathcal{S}$}. This is a rather important observation, since the limiting perturbed process can also be ``related'' (in a weak-convergence sense) to the t.p.f. $\Pi$, as it will be shown in the following section.

\subsection{Perturbed process}		\label{sec:InvariantPMPerturbedProcess}

According to the definition of perturbed learning automata of Table~\ref{Tb:ReinforcementLearning}, when an agent updates its action, there is a small probability $\lambda>0$ that it ``\emph{trembles},'' i.e., it selects a new action according to a uniform distribution. Thus, we can decompose the t.p.f. induced by the one-step update as follows:
$$P_{\lambda} = (1-\varphi(\lambda))P + \varphi(\lambda)Q_{\lambda}$$ where $\varphi(\lambda) = 1-(1-\lambda)^{n}$ is the probability that at least one agent trembles (since $(1-\lambda)^{n}$ is the probability that no agent trembles), and $Q_{\lambda}$ corresponds to the t.p.f. when at least one agent trembles. Note that $\varphi(\lambda)\to{0}$ as $\lambda\downarrow{0}$. 

Define also $Q$ as the t.p.f. where \emph{only one} agent trembles, and $Q^*$ as the t.p.f. where \emph{at least two agents tremble}. Then, we may write:
\begin{equation}
Q_{\lambda} = (1-\psi(\lambda))Q + \psi(\lambda)Q^*,
\end{equation}
where $\psi(\lambda) \df 1-\frac{n\lambda(1-\lambda)^{n-1}}{1-(1-\lambda)^{n}}$ corresponds to the probability that at least two agents tremble given that at least one agent trembles. It also satisfies $\psi(\lambda)\to{0}$ as $\lambda\downarrow{0}$, which establishes an approximation of $Q_{\lambda}$ by $Q$ as the perturbation factor $\lambda$ approaches zero.

Let us also define the infinite-step t.p.f. when trembling only at the first step (briefly, \emph{lifted} t.p.f.) as follows: 
\begin{equation}
P_{\lambda}^{L} \df \varphi(\lambda)\sum_{t=0}^{\infty}(1-\varphi(\lambda))^{t}Q_{\lambda}P^{t} = Q_{\lambda}R_{\lambda}
\end{equation}
where
$R_{\lambda} \df \varphi(\lambda)\sum_{t=0}^{\infty}(1-\varphi(\lambda))^{t}P^{t},$
i.e., $R_{\lambda}$ corresponds to the \emph{resolvent} t.p.f. 

In the following proposition, we establish weak-convergence of the lifted t.p.f. $P_{\lambda}^{L}$ to $Q\Pi$ as $\lambda\downarrow{0}$, which will further allow for an explicit characterization of the weak limit points of the i.p.m. of $P_{\lambda}$.

\begin{proposition}[i.p.m. of perturbed process]		\label{Pr:WeakLimitPointsOfPerturbedInvariantMeasures}
The following hold:
\begin{itemize}
\item[(a)] For $f\in\Cc(\cZ)$, $\lim_{\lambda\to{0}}\|R_{\lambda}f-\Pi{f}\|_{\infty} = 0.$
\item[(b)] For $f\in\Cc(\cZ)$, $\lim_{\lambda\to{0}}\|P_{\lambda}^{L}f-Q\Pi{f}\|_{\infty} = 0$.
\item[(c)] Any i.p.m. $\mu_{\lambda}$ of $P_\lambda$ is also an i.p.m. of $P_{\lambda}^{L}$.
\item[(d)] Any weak limit point in $\mathfrak{P}(\cZ)$ of $\mu_{\lambda}$, as $\lambda\to{0}$, is an i.p.m. of $Q\Pi$.
\end{itemize}
\end{proposition}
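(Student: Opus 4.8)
\textbf{Proof plan for Proposition~\ref{Pr:WeakLimitPointsOfPerturbedInvariantMeasures}.}

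The plan is to prove the four statements in order, since each feeds into the next, and then isolate the genuinely delicate step. For part (a), I would start from the definition $R_{\lambda} = \varphi(\lambda)\sum_{t=0}^{\infty}(1-\varphi(\lambda))^{t}P^{t}$ and note that the weights $\varphi(\lambda)(1-\varphi(\lambda))^{t}$ form a geometric probability distribution on $t\in\{0,1,2,\dotsc\}$ that, as $\lambda\downarrow 0$ (so $\varphi(\lambda)\downarrow 0$), concentrates its mass on arbitrarily large $t$. Combined with Proposition~\ref{Pr:LimitingUnperturbedTPF}(b), which gives $\|P^{t}f-\Pi f\|_{\infty}\to 0$, this is a standard Abelian/resolvent argument: split the sum at a large index $T$, bound the tail $t\ge T$ using $\sup_t\|P^tf\|_{\infty}\le\|f\|_{\infty}$ plus $\|\Pi f\|_\infty\le\|f\|_\infty$ and the fact that $\sum_{t<T}\varphi(\lambda)(1-\varphi(\lambda))^t = 1-(1-\varphi(\lambda))^T\to 0$, and bound the head $t\ge T$ (the bulk of the mass) by $\sup_{t\ge T}\|P^tf-\Pi f\|_{\infty}$, which is small by (b). This yields $\|R_\lambda f - \Pi f\|_\infty\to 0$. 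Part (b) then follows by writing $P_\lambda^L f = Q_\lambda R_\lambda f$ and using the decomposition $Q_\lambda = (1-\psi(\lambda))Q + \psi(\lambda)Q^*$: since $Q$, $Q^*$ are t.p.f.'s they are contractions on $\Cc(\cZ)$, so $\|Q_\lambda R_\lambda f - Q\Pi f\|_\infty \le \|Q_\lambda(R_\lambda f - \Pi f)\|_\infty + \psi(\lambda)\|(Q^*-Q)\Pi f\|_\infty \le \|R_\lambda f-\Pi f\|_\infty + 2\psi(\lambda)\|f\|_\infty$, and both terms vanish as $\lambda\downarrow 0$ by part (a) and $\psi(\lambda)\to 0$. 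One should check here that $Q\Pi$ maps $\Cc(\cZ)$ into itself, i.e., that weak Feller-ness is preserved under composition, which is immediate since $Q$ is weak Feller (same argument as Proposition~\ref{Pr:WeakFeller}) and $\Pi$ arises as a weak limit of the weak-Feller $P^t$.

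For part (c), the key algebraic identity is that invariance under $P_\lambda$ propagates to the lifted chain. Since $P_\lambda = (1-\varphi(\lambda))P + \varphi(\lambda)Q_\lambda$, any i.p.m. $\mu_\lambda$ of $P_\lambda$ satisfies $\mu_\lambda P_\lambda = \mu_\lambda$, i.e., $\mu_\lambda\bigl((1-\varphi(\lambda))P + \varphi(\lambda)Q_\lambda\bigr) = \mu_\lambda$. I would solve this for $\mu_\lambda$ in terms of the resolvent: rearranging gives $\varphi(\lambda)\mu_\lambda Q_\lambda = \mu_\lambda(I - (1-\varphi(\lambda))P)$, hence $\mu_\lambda = \varphi(\lambda)\mu_\lambda Q_\lambda \sum_{t=0}^\infty (1-\varphi(\lambda))^t P^t = \mu_\lambda Q_\lambda R_\lambda = \mu_\lambda P_\lambda^L$, where the geometric (Neumann) series $(I-(1-\varphi(\lambda))P)^{-1} = \sum_t (1-\varphi(\lambda))^t P^t$ converges because $0<1-\varphi(\lambda)<1$ and $P$ is a contraction. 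So $\mu_\lambda$ is indeed an i.p.m. of $P_\lambda^L$. (This manipulation is most cleanly carried out against test functions $f\in\Cc(\cZ)$ to avoid measure-theoretic worries.) Finally, part (d) combines (b) and (c) with weak compactness: by Proposition~\ref{Pr:WeakFeller} and \cite[Theorem~7.2.3]{Lerma03} the family $\{\mu_\lambda\}$ admits weak limit points; let $\mu_\lambda \Rightarrow \mu_0$ along some sequence $\lambda_k\downarrow 0$. For any $f\in\Cc(\cZ)$ we have $\mu_{\lambda_k}[f] = \mu_{\lambda_k}[P_{\lambda_k}^L f]$ by (c). Passing to the limit, $\mu_{\lambda_k}[f]\to\mu_0[f]$ by weak convergence, while $\mu_{\lambda_k}[P_{\lambda_k}^L f]\to\mu_0[Q\Pi f]$ because $\|P_{\lambda_k}^L f - Q\Pi f\|_\infty\to 0$ (part (b)) and $Q\Pi f\in\Cc(\cZ)$, so $|\mu_{\lambda_k}[P_{\lambda_k}^L f] - \mu_0[Q\Pi f]| \le \|P_{\lambda_k}^L f - Q\Pi f\|_\infty + |\mu_{\lambda_k}[Q\Pi f] - \mu_0[Q\Pi f]| \to 0$. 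Hence $\mu_0[f] = \mu_0[Q\Pi f]$ for all $f\in\Cc(\cZ)$, i.e., $\mu_0 = \mu_0 Q\Pi$.

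The main obstacle I anticipate is part (c) — specifically, justifying the Neumann-series inversion and the interchange of the infinite sum with the action of $\mu_\lambda$ at the level of measures rather than bounded functions. Everything is clean on $\Cc(\cZ)$ because $P$ acts as a contraction and the series converges in operator norm on that Banach space, but one must be careful that ``$\mu_\lambda = \mu_\lambda P_\lambda^L$'' as measures follows from ``$\mu_\lambda[f] = \mu_\lambda[P_\lambda^L f]$ for all $f\in\Cc(\cZ)$'' — this is fine since $\cZ$ is a locally compact separable metric space and measures on it are determined by their action on $\Cc(\cZ)$. A secondary subtlety in parts (b) and (d) is confirming $Q\Pi$ and $P_\lambda^L$ genuinely preserve $\Cc(\cZ)$ (weak Feller property), so that the convergence statements and the limit-passing are well-posed; this is routine but should be stated. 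The Abelian argument in (a) is entirely standard once one records the two uniform bounds $\sup_t\|P^t f\|_\infty\le\|f\|_\infty$ and $\|\Pi f\|_\infty\le\|f\|_\infty$.
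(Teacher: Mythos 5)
Your proposal is correct and follows essentially the same route as the paper: the Abelian/resolvent splitting for (a), the decomposition $\|Q_{\lambda}(R_{\lambda}f-\Pi f)\|_{\infty}+\|Q_{\lambda}\Pi f-Q\Pi f\|_{\infty}$ for (b), the resolvent identity (your Neumann-series inversion is the paper's identity $(1-\varphi(\lambda))PR_{\lambda}=R_{\lambda}-\varphi(\lambda)I$ written multiplicatively) for (c), and the three-term telescoping against a weak limit point for (d). The only substantive difference is that you justify the existence of weak limit points via compactness of $\mathfrak{P}(\cZ)$, which is cleaner than the paper's argument, and you flag the Feller-preservation and sum--measure interchange issues explicitly; otherwise the two proofs coincide.
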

\begin{proof}
(a) For any $f\in \Cc(\cZ)$, we have
\begin{eqnarray*}
\lefteqn{\|R_{\lambda}f - \Pi{f}\|_{\infty}} \cr 
& = & \supnorm{\varphi(\lambda)\sum_{t=0}^{\infty}(1-\varphi(\lambda))^tP^tf - \Pi f}  \cr 
& = & \supnorm{\varphi(\lambda)\sum_{t=0}^{\infty}(1-\varphi(\lambda))^t(P^t f - \Pi f)},
\end{eqnarray*}
where we have used $\varphi(\lambda)\sum_{t=0}^{\infty}(1-\varphi(\lambda))^t=1$. Note that
\begin{eqnarray*}
\lefteqn{\varphi(\lambda)\sum_{t=T}^{\infty}(1-\varphi(\lambda))^t\supnorm{P^tf-\Pi f} } \cr
& \leq & (1-\varphi(\lambda))^{T}\sup_{t\geq{T}}\supnorm{P^tf - \Pi f}. 
\end{eqnarray*}
From Proposition~\ref{Pr:LimitingUnperturbedTPF}(b), we have that for any $\delta>0$, there exists $T=T(\delta)>0$ such that the r.h.s. is uniformly bounded by $\delta$ for all $t\geq T$. Thus, the sequence $$\Theta_T\df\varphi(\lambda)\sum_{t=0}^{T}(1-\varphi(\lambda))^t(P^tf - \Pi f)$$ is Cauchy and therefore convergent (under the sup-norm). In other words, there exists $\Theta\in\mathbb{R}$ such that $\lim_{T\to\infty}\supnorm{\Theta_{T}-\Theta}=0.$ 
For every $T>0$, we have
\begin{eqnarray*}
\supnorm{R_{\lambda}f-\Pi{f}} \leq \supnorm{\Theta_{T}} + \supnorm{\Theta - \Theta_T}.
\end{eqnarray*}
Note that 
\begin{eqnarray*}
\supnorm{\Theta_T} \leq \varphi(\lambda) \sum_{t=0}^{T} (1-\varphi(\lambda))^{t}  \supnorm{P^t f-\Pi f}.
\end{eqnarray*}
If we take $\lambda\downarrow{0}$, then the r.h.s. converges to zero. Thus,
\begin{equation*}
\lim_{\lambda\downarrow{0}}\|R_{\lambda}f-\Pi{f}\|_{\infty} \leq \supnorm{\Theta-\Theta_T}, \mbox{ for all } T>0,
\end{equation*}
which concludes the proof.

(b) For any $f\in \Cc(\cZ)$, we have
\begin{eqnarray*}
\lefteqn{ \|P_{\lambda}^{L}f-Q\Pi{f}\|_{\infty} } \cr
& \leq & \|Q_{\lambda}(R_{\lambda}f-\Pi{f})\|_{\infty} + \|Q_{\lambda}\Pi{f} - Q\Pi{f}\|_{\infty} \cr
& \leq & \|R_{\lambda}f-\Pi{f}\|_{\infty} + \|Q_{\lambda}\Pi{f}-Q\Pi{f}\|_{\infty}.
\end{eqnarray*}
The first term of the r.h.s. approaches 0 as $\lambda\downarrow{0}$ according to (a). The second term of the r.h.s. also approaches 0 as $\lambda\downarrow{0}$ since $Q_{\lambda}\rightarrow{Q}$ as $\lambda\downarrow{0}$.

(c) By definition of the perturbed t.p.f. $P_{\lambda}$, we have
\begin{equation*}
P_{\lambda}R_{\lambda} = (1-\varphi(\lambda))PR_{\lambda} + \varphi(\lambda)Q_{\lambda}R_{\lambda}.
\end{equation*}
Note that $Q_{\lambda}R_{\lambda}=P_{\lambda}^{L}$ and $(1-\varphi(\lambda))PR_{\lambda} = R_{\lambda} - \varphi(\lambda)I,$ where $I$ corresponds to the identity operator. Thus, 
\begin{equation*}
P_{\lambda}R_{\lambda} = R_{\lambda}-\varphi(\lambda)I+\varphi(\lambda)P_{\lambda}^{L}.
\end{equation*}
For any i.p.m. of $P_{\lambda}$, $\mu_{\lambda}$, we have
\begin{equation*}
\mu_{\lambda}P_{\lambda}R_{\lambda} = \mu_{\lambda}R_{\lambda}-\varphi(\lambda)\mu_{\lambda}+\varphi(\lambda)\mu_{\lambda}P_{\lambda}^{L},
\end{equation*}
which equivalently implies that $\mu_{\lambda} = \mu_{\lambda}P_{\lambda}^{L},$ since $\mu_{\lambda}P_{\lambda} = \mu_{\lambda}$. We conclude that $\mu_{\lambda}$ is also an i.p.m. of $P_{\lambda}^{L}$.

(d) Let $\hat{\mu}$ denote a weak limit point of $\mu_{\lambda}$ as $\lambda\downarrow{0}$. To see that such a limit exists, take $\hat{\mu}$ to be an i.p.m. of $P$. Then, 
\begin{eqnarray*}
\lefteqn{\|P_{\lambda}f-P{f}\|_{\infty}} \cr & \geq & \|\mu_{\lambda}(P_{\lambda}f-P{f})\|_{\infty} \cr & = & \|(\mu_{\lambda}-\hat{\mu})(I-P)[f]\|_{\infty}.
\end{eqnarray*} 
Thus, the weak convergence of $P_{\lambda}$ to $P$ implies that $\mu_{\lambda}\Rightarrow\hat{\mu}$. Note further that
\begin{eqnarray*}
\lefteqn{\hat{\mu}[f] - \hat{\mu}[Q\Pi{f}]}\cr & = & (\hat{\mu}[f]-\mu_{\lambda}[f]) + \mu_{\lambda}[P_{\lambda}^{L}f-Q\Pi{f}]+ \cr && (\mu_{\lambda}[Q\Pi{f}]-\hat{\mu}[Q\Pi{f}]).
\end{eqnarray*}
The first and the third term of the r.h.s. approaches 0 as $\lambda\downarrow{0}$ due to the fact that $\mu_{\lambda}\Rightarrow\hat{\mu}$. The same holds for the second term of the r.h.s. due to part (b). Thus, we conclude that any weak limit point of $\mu_{\lambda}$ as $\lambda\downarrow{0}$ is an i.p.m. of $Q\Pi$.
\end{proof}

Proposition~\ref{Pr:WeakLimitPointsOfPerturbedInvariantMeasures} establishes convergence (in a weak sense) of the i.p.m. $\mu_{\lambda}$ of the perturbed process to an i.p.m. of $Q\Pi$. In the following section, this convergence result will allow for a more explicit characterization of $\mu_{\lambda}$ as $\lambda\downarrow{0}$.

\subsection{Equivalent finite-state Markov process}	\label{sec:FiniteStateMarkovChainEquivalence}

Define the finite-state Markov process $\hat{P}$ as in (\ref{eq:FiniteStateMarkovChain}). 

\begin{proposition} [Unique i.p.m. of $Q\Pi$]		\label{Pr:UniqueInvariantPMofQPi}
There exists a unique i.p.m. $\hat{\mu}$ of $Q\Pi$. It satisfies 
\begin{equation}	\label{eq:InvariantMeasureQP_derivation}
\hat{\mu}(\cdot) = \sum_{s\in\mathcal{S}}\pi_s\Dirac{s}(\cdot)
\end{equation}
for some constants $\pi_s\geq{0}$, $s\in\mathcal{S}$. Moreover, $\pi=(\pi_1,...,\pi_{\magn{\mathcal{S}}})$ is an invariant distribution of $\hat{P}$, i.e., $\pi=\pi\hat{P}$.
\end{proposition}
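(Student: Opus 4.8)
The plan is to exploit that $\Pi$, and hence $Q\Pi$, has its support on the finite set $\cS$, so that the problem collapses onto a finite-state Markov chain. First I would observe that, by Proposition~\ref{Pr:LimitingUnperturbedTPF}(d), $\Pi(z,\cdot)$ is supported on $\cS$ for every $z$, and therefore so is $Q\Pi(z,\cdot)=\int_{\cZ}Q(z,\D y)\,\Pi(y,\cdot)$. Consequently, if $\hat\mu=\hat\mu Q\Pi$, then $\hat\mu(\cZ\setminus\cS)=\int_{\cZ}\hat\mu(\D z)\,Q\Pi(z,\cZ\setminus\cS)=0$, so $\hat\mu$ is concentrated on $\cS$. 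Since $\cS$ is finite (it is in one-to-one correspondence with $\cA$ via Definition~\ref{def:PureStrategyState}), $\hat\mu$ must be of the form $\sum_{s\in\cS}\pi_s\Dirac{s}$ with $\pi_s\geq0$ and $\sum_{s}\pi_s=1$, which gives \eqref{eq:InvariantMeasureQP_derivation}.

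Next I would identify the finite-state kernel \eqref{eq:FiniteStateMarkovChain} with the restriction of $Q\Pi$ to $\cS$. From Proposition~\ref{Pr:LimitingUnperturbedTPF}(b), $\supnorm{P^{t}f-\Pi f}\to0$ for every $f\in\Cc(\cZ)$, hence $QP^{t}f(s)=\int_{\cZ}Q(s,\D y)\,P^{t}f(y)\to Q\Pi f(s)$, i.e. $QP^{t}(s,\cdot)\Rightarrow Q\Pi(s,\cdot)$. Because $Q\Pi(s,\cdot)$ lives on the finite set $\cS$, for all sufficiently small $\delta>0$ the boundary sphere of $\Neig_{\delta}(s')$ carries no point of $\cS$, so $\Neig_{\delta}(s')$ is a $Q\Pi(s,\cdot)$-continuity set with $Q\Pi(s,\Neig_{\delta}(s'))=Q\Pi(s,\{s'\})$; the Portmanteau theorem then yields $\hat{P}_{ss'}=\lim_{t\to\infty}QP^{t}(s,\Neig_{\delta}(s'))=Q\Pi(s,\{s'\})$, independently of the chosen $\delta$. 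With this identification, applying $\hat\mu=\hat\mu Q\Pi$ to the Borel set $\{s'\}$ gives $\pi_{s'}=\sum_{s\in\cS}\pi_s\,Q\Pi(s,\{s'\})=\sum_{s\in\cS}\pi_s\hat{P}_{ss'}$, i.e. $\pi=\pi\hat{P}$. Conversely, for any probability vector with $\pi=\pi\hat{P}$ the measure $\sum_{s}\pi_s\Dirac{s}$ is readily checked to satisfy $\hat\mu Q\Pi=\hat\mu$ (again because $Q\Pi(s,\cdot)$ is supported on $\cS$), so $\hat\mu\leftrightarrow\pi$ is a bijection between i.p.m.'s of $Q\Pi$ and invariant distributions of $\hat{P}$; existence of at least one $\pi$ is then automatic since $\hat{P}$ is a finite stochastic matrix.

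It remains to prove uniqueness, which by the above bijection reduces to showing that $\hat{P}$ has a unique invariant distribution; I would do this by establishing that $\hat{P}$ is irreducible. Fix $s=(\alpha,e_{\alpha})\in\cS$, an agent $i$, and an action $\beta\in\cA_i$, and let $s'$ be the pure strategy state of the profile $(\beta,\alpha_{-i})$. Under $Q$, with positive probability agent $i$ is the unique trembling agent and draws $\beta$ while every other agent plays $\alpha_{-i}$; the resulting state $y$ has action profile $(\beta,\alpha_{-i})$ and strategy vector within $\order{\epsilon}$ of $e_{(\beta,\alpha_{-i})}$. By Proposition~\ref{Pr:ConvergenceToPSS}(a), $\Prob_{y}[B_{\infty}]>0$, and on $B_{\infty}$ the unperturbed trajectory stays at $(\beta,\alpha_{-i})$ forever, so $Z_t\to s'$; combining this with $P^{t}(y,\cdot)\Rightarrow\Pi(y,\cdot)$, Portmanteau applied to the closed balls $\overline{\Neig_{\delta}(s')}$, and the support of $\Pi(y,\cdot)$ on $\cS$, forces $\Pi(y,\{s'\})\geq\Prob_{y}[B_{\infty}]>0$, hence $\hat{P}_{ss'}=Q\Pi(s,\{s'\})>0$. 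Since any two elements of $\cS$ differ in finitely many coordinates, chaining such single-coordinate transitions shows that every state is reachable from every other, so $\hat{P}$ is irreducible and its invariant distribution is unique; equivalently $Q\Pi$ admits a unique i.p.m. $\hat\mu$. The step I expect to be the main obstacle is this last one — specifically, turning the positivity $\Prob_{y}[B_{\infty}]>0$ into $\Pi(y,\{s'\})>0$ — because $\{s'\}$ is not open and the trajectory converges to $s'$ without ever hitting it, so one must argue through shrinking closed balls and the continuity-set property rather than directly evaluating the limit measure at the point.
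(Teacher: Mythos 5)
Your proof is correct and follows essentially the same route as the paper's: collapse the support of $Q\Pi$ onto $\cS$ via Proposition~\ref{Pr:LimitingUnperturbedTPF}(d), identify $\hat{P}$ with the restriction of $Q\Pi$ to $\cS$ through the Portmanteau theorem, and obtain uniqueness from irreducibility of the resulting finite stochastic matrix by chaining single-agent trembles. The one place you go beyond the paper is in spelling out how $\Prob_{y}[B_{\infty}]>0$ yields $\Pi(y,\{s'\})>0$ via closed balls — a step the paper compresses into a single sentence citing Proposition~\ref{Pr:ConvergenceToPSS} — and your filled-in version is sound.
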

\begin{proof}
From Proposition~\ref{Pr:LimitingUnperturbedTPF}(d), we know that the support of $\Pi$ is the set of pure strategy states $\mathcal{S}$. Thus, the support of $Q\Pi$ is also on $\mathcal{S}$. From Proposition~\ref{Pr:WeakLimitPointsOfPerturbedInvariantMeasures}, we know that $Q\Pi$ admits an i.p.m., say $\hat{\mu}$, whose support is also $\mathcal{S}$. Thus, $\hat{\mu}$ admits the form of (\ref{eq:InvariantMeasureQP_derivation}), for some constants $\pi_{s}\geq{0}$, $s\in\mathcal{S}$.

For any two distinct $s,s'\in\cS$, note that $\Neig_{\delta}(s')$, $\delta>0$, is a continuity set of $Q\Pi(s,\cdot)$, i.e., $Q\Pi(s,\partial\Neig_{\delta}(s'))=0$. Thus, from Portmanteau theorem, given that $QP^{t}\Rightarrow Q\Pi$, $$Q\Pi(s,\Neig_{\delta}(s')) = \lim_{t\to\infty}QP^{t}(s,\Neig_{\delta}(s')) = \hat{P}_{ss'}.$$  If we also define $\pi_s \df \hat{\mu}(\Neig_{\delta}(s))$, then
$$\pi_{s'} = \hat{\mu}(\Neig_{\delta}(s')) = \sum_{s\in\mathcal{S}}\pi_s Q\Pi(s,\Neig_{\delta}(s')) = \sum_{s\in\mathcal{S}}\pi_s\hat{P}_{ss'},$$ which shows that $\pi$ is an invariant distribution of $\hat{P}$.

It remains to establish uniqueness of the invariant distribution of $Q\Pi$. Note that the set $\mathcal{S}$ of pure strategy states is isomorphic with the set $\mathcal{A}$ of action profiles. If agent $i$ trembles (as t.p.f. $Q$ dictates), then all actions in $\mathcal{A}_i$ have positive probability of being selected, i.e., $Q(\alpha,(\alpha_i',\alpha_{-i}))>0$ for all $\alpha_i'\in\mathcal{A}_i$ and $i\in\cI$. It follows by Proposition~\ref{Pr:ConvergenceToPSS} that $Q\Pi(\alpha,(\alpha_i',\alpha_{-i}))>0$ for all $\alpha_i'\in\mathcal{A}_i$ and $i\in\cI$. Finite induction then shows that $(Q\Pi)^{n}(\alpha,\alpha')>0$ for all $\alpha,\alpha'\in\cA$. It follows that if we restrict the domain of $Q\Pi$ to $\mathcal{S}$, it defines an irreducible stochastic matrix. Therefore, $Q\Pi$ has a unique i.p.m.
\end{proof}

\subsection{Proof of Theorem~\ref{Th:StochasticStability}}	\label{sec:Proof:StochasticStability}

Theorem~\ref{Th:StochasticStability}(a)--(b) is a direct implication of Propositions~\ref{Pr:WeakLimitPointsOfPerturbedInvariantMeasures}--\ref{Pr:UniqueInvariantPMofQPi}.

\section{Stochastically Stable States}	\label{sec:StochasticallyStableStates}

In this section, we capitalize on Theorem~\ref{Th:StochasticStability} and we further simplify the computation of the stochastically stable states in strategic-form games satisfying Property~\ref{P:PositiveUtilityProperty}. 

\subsection{Background on finite Markov chains}	\label{sec:FiniteMarkovChains}

In order to compute the invariant distribution of a finite-state, irreducible and aperiodic Markov chain, we are going to consider a characterization introduced by \cite{FreidlinWentzell84}. In particular, for finite Markov chains an invariant measure can be expressed as the ratio of sums of products consisting of transition probabilities. These products can be described conveniently by means of graphs on the set of states of the chain. In particular, let  $\mathcal{S}$ be a finite set of states, whose elements will be denoted by $s_k$, $s_\ell$, etc., and let a subset $\mathcal{W}$ of $\mathcal{S}$.

\begin{figure}[t!]
\centering
\includegraphics[scale=1]{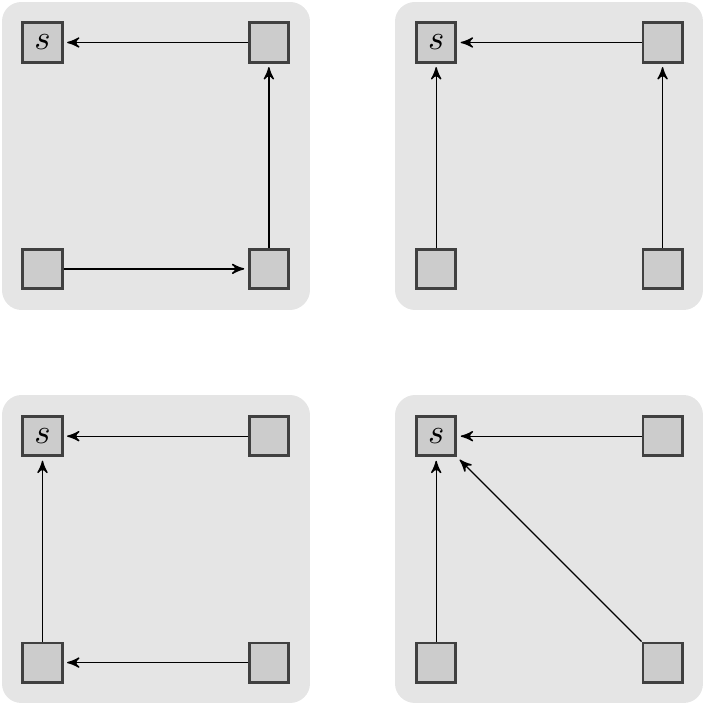}
\caption{Examples of $s$-graphs in case $\mathcal{S}$ contains four states.}
\label{fig:Wgraphs}
\end{figure}

\begin{definition}  \label{Df:W-graph}
($\mathcal{W}$-graph) A graph consisting of arrows
$s_k\rightarrow{s_\ell}$
($s_k\in{\mathcal{S}\backslash{\mathcal{W}}},s_\ell\in\mathcal{S},s_\ell\neq{s_k}$)
is called a $\mathcal{W}$-graph if it satisfies the following
conditions:
\begin{enumerate}
\item{every point $k\in{\mathcal{S}\backslash{\mathcal{W}}}$ is the initial point of
exactly one arrow;}
\item{there are no closed cycles in the graph; or, equivalently,
for any point $s_k\in{\mathcal{S}\backslash{\mathcal{W}}}$ there
exists a sequence of arrows leading from it to some point
$s_\ell\in\mathcal{W}$.}
\end{enumerate}
\end{definition}

\figurename~\ref{fig:Wgraphs} provides examples of $\{s\}$-graphs for some state $s\in\cS$ when $\cS$ contains four states. We will denote by $\mathcal{G}\{\mathcal{W}\}$ the set of $\mathcal{W}$-graphs and we shall use the letter $g$ to denote graphs. If $\hat{P}_{s_ks_\ell}$ are nonnegative numbers, where $s_k,s_\ell\in\mathcal{S}$, define also the transition probability along path $g$ as
\begin{equation*}
  \varpi(g) \df
  \prod_{(s_k\rightarrow{s_\ell})\in{g}}\hat{P}_{s_ks_\ell}.
\end{equation*}

The following Lemma holds:
\begin{lemma}[Lemma~6.3.1 in \cite{FreidlinWentzell84}]   \label{Lm:StationaryDistribution}
Let us consider a Markov chain with a finite set of states $\mathcal{S}$ and transition probabilities $\{\hat{P}_{s_ks_\ell}\}$ and assume that every state can be reached from any other state in a finite number of steps. Then, the stationary distribution of the chain is $\pi = [\pi_{s}]$, where
\begin{equation}    \label{eq:StationaryDistribution}
\pi_s = \frac{R_{s}}{\sum_{s_i\in\mathcal{S}}R_{s_i}},
s\in\mathcal{S}
\end{equation}
where $R_{s} \df \sum_{g\in{\mathcal{G}}\{s\}}\varpi(g)$.
\end{lemma}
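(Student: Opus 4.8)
The plan is to prove the equivalent statement that the (unnormalized) vector $R\df(R_s)_{s\in\mathcal{S}}$ satisfies $R\hat{P}=R$ and has strictly positive entries; since the chain is finite and irreducible it possesses a unique stationary distribution, so normalizing $R$ then yields (\ref{eq:StationaryDistribution}). Positivity of each $R_s$ follows at once from irreducibility: for every vertex $v\neq s$ select an outgoing arrow lying on a shortest positive-probability path from $v$ to $s$; following these ``next-hop'' arrows strictly decreases the graph-distance to $s$, so the resulting graph has no closed cycles and is therefore an $\{s\}$-graph $g$ with $\varpi(g)>0$, whence $R_s\geq\varpi(g)>0$.

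The core of the argument is to verify $\sum_{s\in\mathcal{S}}R_s\hat{P}_{ss'}=R_{s'}$ for every $s'\in\mathcal{S}$, and the book-keeping is done through functional graphs. Call a directed graph $G$ on $\mathcal{S}$ in which every vertex is the initial point of exactly one arrow (self-loops allowed) an $s'$-\emph{unicyclic} graph if it is weakly connected and its unique cycle passes through $s'$; let $\mathcal{F}_{s'}$ be this family, with $\varpi$ extended to it by the same product formula. The first count classifies $\mathcal{F}_{s'}$ by the origin of the arrow of the cycle that \emph{enters} $s'$: adjoining the arrow $s_k\to s'$ to an $\{s_k\}$-graph (for $s_k\neq s'$), or the self-loop $s'\to s'$ to an $\{s'\}$-graph, produces an element of $\mathcal{F}_{s'}$, and deleting from $G\in\mathcal{F}_{s'}$ the unique arrow of its cycle entering $s'$ is the inverse operation — that arrow lies on the cycle, hence is not a bridge, so the residual graph is a connected acyclic in-tree toward the vertex the arrow left from, i.e.\ a $\mathcal{W}$-graph in the sense of Definition~\ref{Df:W-graph}. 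This gives $\sum_{s\in\mathcal{S}}R_s\hat{P}_{ss'}=\sum_{G\in\mathcal{F}_{s'}}\varpi(G)$, the term $s=s'$ accounting precisely for the self-loop case.

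The second count instead fixes the arrow \emph{leaving} $s'$: the map sending a pair $(h,w)$, with $h\in\mathcal{G}\{s'\}$ and $w\in\mathcal{S}$, to $h$ together with the arrow $s'\to w$ is a bijection onto $\mathcal{F}_{s'}$ (its inverse deletes the unique out-arrow of $s'$, again leaving a connected acyclic in-tree toward $s'$), and it multiplies the weight by $\hat{P}_{s'w}$. Hence
\begin{equation*}
\sum_{G\in\mathcal{F}_{s'}}\varpi(G)=\sum_{h\in\mathcal{G}\{s'\}}\varpi(h)\sum_{w\in\mathcal{S}}\hat{P}_{s'w}=\sum_{h\in\mathcal{G}\{s'\}}\varpi(h)=R_{s'},
\end{equation*}
using $\sum_{w}\hat{P}_{s'w}=1$. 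Combining the two counts gives $R\hat{P}=R$, which together with positivity of $R$ and uniqueness of the invariant distribution of an irreducible finite chain proves (\ref{eq:StationaryDistribution}).

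I expect the main obstacle to be making the two graph surgeries precise and checking that their outputs really are $\mathcal{W}$-graphs as in Definition~\ref{Df:W-graph}: in particular that weak connectivity is preserved because a cycle edge is never a bridge, that removing the indicated arrow destroys all cycles, and that the self-loop at $s'$ is matched exactly with the diagonal term $R_{s'}\hat{P}_{s's'}$ appearing in $\sum_{s}R_s\hat{P}_{ss'}$.
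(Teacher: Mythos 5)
Your argument is correct, but note that the paper itself offers no proof of this statement: Lemma~\ref{Lm:StationaryDistribution} is quoted verbatim as Lemma~6.3.1 of \cite{FreidlinWentzell84}, so there is nothing internal to compare against. What you have written is a self-contained proof of the Markov-chain tree theorem by the classical double-counting of weighted functional graphs, which is essentially the combinatorial identity underlying the Freidlin--Wentzell proof. The two surgeries are sound: for $G\in\mathcal{F}_{s'}$ the unique cycle passes through $s'$, so any self-loop can only sit at $s'$, and deleting either the cycle-arrow entering $s'$ or the out-arrow of $s'$ leaves a graph in which every vertex but one has out-degree one and no cycle survives (the deleted arrow lay on the unique cycle); since following the remaining out-arrows from any vertex must terminate at the unique vertex of out-degree zero, conditions (1)--(2) of Definition~\ref{Df:W-graph} hold, and the absence of self-loops required there is automatic because the only possible self-loop was the one removed. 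The first surgery identifies $\sum_{s}R_s\hat{P}_{ss'}$ with $\sum_{G\in\mathcal{F}_{s'}}\varpi(G)$ (the diagonal term matching the self-loop case, with weight $\hat{P}_{s's'}$), the second identifies the same sum with $R_{s'}\sum_{w}\hat{P}_{s'w}=R_{s'}$, giving $R\hat{P}=R$; your positivity argument via next-hop arrows along shortest positive-probability paths is also valid, and uniqueness of the invariant distribution of a finite irreducible chain then yields (\ref{eq:StationaryDistribution}). What your route buys is an elementary, citation-free verification; what the paper buys by citing \cite{FreidlinWentzell84} is brevity, since the lemma is standard.
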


In other words, in order to compute the weight that the stationary distribution assigns to a state $s\in\cS$, it suffices to compute the ratio of the transition probabilities of all $\{s\}$-graphs over the transition probabilities of all graphs.

\subsection{Approximation of one-step transition probability}

We wish to provide an approximation in the computation of the transition probabilities under $\hat{P}$, in order to explicitly compute the stationary distribution $\pi$ of Theorem~\ref{Th:StochasticStability}. Based on the definition of the t.p.f. $Q\Pi$, and as $\lambda\downarrow{0}$, a transition from $s$ to $s'$ influences the stationary distribution only if $s$ differs from $s'$ in the action of a \emph{single} agent. This observation will be capitalized by the forthcoming Lemmas~\ref{Lm:TransitionProbabilities}--\ref{Lm:StationaryDistributionApproximation}, to approximate the transition probability from $s$ to $s'$ under $\hat{P}$.


Let us define $\uptau_s^*(D)$ to be the minimum number of steps that the process $Q\Pi$ needs in order to reach $D$ when starting from $s\in\cS$ (i.e., the minimum possible first hitting time to $D$).

\begin{lemma}[One-step transition probability]	\label{Lm:TransitionProbabilities}
Consider any two action profiles $\alpha,\alpha'\in\mathcal{A}$ which differ in the action of a single agent $j$, and let $s,s'\in\cS$ be the p.s.s.'s associated with $\alpha$ and $\alpha'$, respectively. Set $z'=(\alpha',x')$, where $x_j'\df e_{\alpha_j}+\epsilon u_{j}(\alpha')(e_{\alpha_j'}-e_{\alpha_j}),$ which corresponds to the state after agent $j$ perturbed once starting from $s$ and played $\alpha_j'$. Define also $$\breve{P}_{ss'}(\delta) \df \Prob_{z'}[\uptau(\Neig_{\delta}(s')) \leq \infty]$$ which corresponds to the probability that the process eventually reaches $\Neig_{\delta}(s')$ starting from $z'$. For sufficiently small $\epsilon$ such that $0<\epsilon u_j(\alpha') <1$, the following hold:
\begin{itemize}

\item[(a)] The transition probability from $s$ to $s'$ under $Q\Pi$ can be approximated as follows:
\begin{equation}	\label{eq:TransitionProbabilityApproximation_Initial}
\hat{P}_{ss'} = \gamma_{j} \cdot \lim_{\delta\downarrow{0}} \breve{P}_{ss'}(\delta),
\end{equation}
where $\gamma_{j}\df 1/(n\magn{\cA_j})$ corresponds to the probability that agent $j$ trembled and selected action $\alpha_j'$ under $Q$.
\item[(b)] $\breve{P}_{ss'}(\delta)$ coincides with the probability of the \emph{shortest path}, i.e.,
\begin{equation*}
\breve{P}_{ss'}(\delta) = \Prob_{z'}\left[\alpha(t+1)=\alpha'\,, \forall t < \uptau_{s}^*(\Neig_{\delta}(s'))\right].
\end{equation*}
\item[(c)] There exists negative constant $\eta(\delta)$, such that for any transition step $s\to{s'}$ (with the above properties) and for sufficiently small $\epsilon>0$, 
\begin{equation} \label{eq:TransitionProbabilityApproximation}
\breve{P}_{ss'}(\delta)  \approx \exp\left(\frac{\eta(\delta)}{\epsilon u_j(\alpha')}\right).
\end{equation}
\end{itemize}
\end{lemma}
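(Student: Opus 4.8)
The plan is to exploit the decomposition $P_\lambda = (1-\varphi(\lambda))P + \varphi(\lambda)Q_\lambda$ and the structure of $Q$ established in Section~\ref{sec:InvariantPMPerturbedProcess}, together with the convergence-to-p.s.s. results of Proposition~\ref{Pr:ConvergenceToPSS} and the explicit reinforcement rule~(\ref{eq:ReinforcementLearningModel}). For part (a), I would start from the definition $\hat P_{ss'} = \lim_{t\to\infty} QP^t(s,\Neig_\delta(s'))$ and unpack $Q$. Since $s$ is a pure strategy state with $x_i = e_{\alpha_i}$ for all $i$, when a single agent $j$ trembles it picks $\alpha_j'$ with probability $\gamma_j = 1/(n\magn{\cA_j})$ (one agent out of $n$, chosen uniformly within that event, then uniform over $\magn{\cA_j}$ actions), producing the deterministic post-tremble state $z' = (\alpha', x')$ with $x'_j = e_{\alpha_j} + \epsilon u_j(\alpha')(e_{\alpha_j'} - e_{\alpha_j})$ and $x'_i = e_{\alpha_i}$ for $i\neq j$, because after the tremble the strategy update~(\ref{eq:ReinforcementLearningModel}) is applied once. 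Any other tremble outcome (a different action, or agent $j$ not trembling) either does not contribute a factor influencing the transition to $s'$ in the $\delta\downarrow 0$ limit, or lands the process at $s$ itself rather than near $s'$; conditioning on the relevant event gives the factorization $\hat P_{ss'} = \gamma_j \cdot \lim_{\delta\downarrow 0}\Prob_{z'}[\uptau(\Neig_\delta(s'))\le\infty]$, where the $\Pi$ part of $Q\Pi$ contributes exactly the hitting probability of the unperturbed process because $\Pi$ is the weak limit of $P^t$ and its support is $\cS$ (Proposition~\ref{Pr:LimitingUnperturbedTPF}(b),(d)); Portmanteau on the continuity set $\Neig_\delta(s')$ identifies the limit with $\breve P_{ss'}(\delta)$.

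For part (b), the key observation is that, starting from $z'$, the \emph{only} way the unperturbed process can reach $\Neig_\delta(s')$ is to never revert: because $s$ and $s'$ differ only in agent $j$'s action, and because $x'_j$ already assigns probability $\epsilon u_j(\alpha')$ to $\alpha_j'$ and the remaining $1-\epsilon u_j(\alpha')$ to $\alpha_j$, any single selection of $\alpha_j$ (the old action) at any step before absorption pulls $x_j$ strictly back toward $e_{\alpha_j}$ and away from $e_{\alpha_j'}$; once that happens the process can only converge (by Proposition~\ref{Pr:ConvergenceToPSS}) to a p.s.s. other than $s'$, hence never enters $\Neig_\delta(s')$ for $\delta$ small. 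Therefore the event $\{\uptau(\Neig_\delta(s'))\le\infty\}$ coincides, up to a null set, with the event that agent $j$ plays $\alpha_j'$ (and all other agents stay put, which they do with the current strategy) at every step until the $\Neig_\delta(s')$ ball is reached — i.e., along the shortest path of length $\uptau_s^*(\Neig_\delta(s'))$. Equality of the probabilities follows since longer paths have zero probability.

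For part (c), along the shortest path agent $j$ must select $\alpha_j'$ repeatedly; after $k$ such consecutive selections the probability mass on $\alpha_j'$ follows the recursion $p_{k+1} = p_k + \epsilon u_j(\alpha')(1-p_k)$, i.e. $1 - p_k = (1-p_0)(1-\epsilon u_j(\alpha'))^k$ with $p_0 = \epsilon u_j(\alpha')$. The one-step selection probability at stage $k$ is $p_k$, so $\breve P_{ss'}(\delta) = \prod_{k=0}^{N(\delta)-1} p_k$ where $N(\delta) = \uptau_s^*(\Neig_\delta(s'))$ is the smallest $k$ with $1-p_k < \delta'$ (for an appropriate $\delta'$ comparable to $\delta$). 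Taking logarithms, $\log \breve P_{ss'}(\delta) = \sum_{k=0}^{N(\delta)-1}\log p_k = \sum_{k=0}^{N(\delta)-1}\log(1 - (1-p_0)(1-\epsilon u_j(\alpha'))^k)$; for small $\epsilon$ the dominant contribution comes from the first few terms where $1-p_k$ is close to $1$, and the geometric decay of $1-p_k$ controls the tail. The count $N(\delta) \approx \log(\delta'/(1-p_0)) / \log(1-\epsilon u_j(\alpha')) \approx -\log(1/\delta')/(\epsilon u_j(\alpha'))$ scales like $1/(\epsilon u_j(\alpha'))$, and a careful Riemann-sum / Euler–Maclaurin estimate of $\sum_k \log p_k$ shows $\log\breve P_{ss'}(\delta) \approx \eta(\delta)/(\epsilon u_j(\alpha'))$ for a negative constant $\eta(\delta)$ depending only on $\delta$ (not on $j$, $\alpha$, $\alpha'$, or $\epsilon$), which yields~(\ref{eq:TransitionProbabilityApproximation}). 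The main obstacle is making this last estimate rigorous and \emph{uniform} over all admissible transitions $s\to s'$: one must show that the $u_j(\alpha')$ dependence factors out cleanly into the $1/(\epsilon u_j(\alpha'))$ scaling while the leftover depends only on $\delta$, controlling both the approximation of the sum by an integral and the dependence of $N(\delta)$ on the geometry near $s'$; this is where the bulk of the technical work lies, and where the ``$\approx$'' in the statement (as opposed to an exact identity) is doing its job.
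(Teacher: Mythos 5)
Your parts (a) and (c) follow essentially the paper's route. In (a) you correctly factor out $\gamma_j$ and invoke Portmanteau, though you gloss over why $\lim_{t\to\infty}P^t(z',\Neig_\delta(s'))$ can be identified with the hitting probability $\Prob_{z'}[\uptau(\Neig_\delta(s'))\le\infty]$: the paper needs a two-sided sandwich here, namely $P^t(z',\Neig_\delta(s'))\le\Prob_{z'}[\uptau(\Neig_\delta(s'))\le t]$ on one side and, on the other, a lower bound of the form $\inf_{z\in\Neig_\delta(s')}\Prob_z[B_\infty]\cdot\Prob_{z'}[\uptau(\Neig_\delta(s'))\le t]$, which only closes up in the limit $\delta\downarrow 0$ because $\inf_{z\in\Neig_\delta(s')}\Prob_z[B_\infty]\to 1$ (this is precisely why the statement carries the outer $\lim_{\delta\downarrow 0}$). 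In (c) you set up the correct product $\prod_k(1-H^{k+1})$ with $H=1-\epsilon u_j(\alpha')$ and the correct scaling $\uptau_s^*\sim\log\delta/\log H\sim -\log(1/\delta)/(\epsilon u_j(\alpha'))$, but you leave the extraction of $\eta(\delta)$ as acknowledged future work; the paper completes it by Taylor-expanding $\log(1-H^t)$, summing the geometric series in $t$, and passing to the limit to obtain the explicit constant $\eta(\delta)=-\sum_{\ell\ge 1}\ell^{-2}(1-\delta^\ell)$, which also makes the uniformity over transitions you worry about immediate.

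The genuine gap is in part (b). Your argument rests on the claim that a single selection of the old action $\alpha_j$ before absorption forces the process to converge to a p.s.s.\ other than $s'$, so that non-shortest paths form a null set. This is false: after such a deviation, $x_{j\alpha_j'}$ drops (e.g.\ from $\epsilon u_j(\alpha')$ to $\epsilon u_j(\alpha')(1-\epsilon u_j(\alpha))$) but remains strictly positive, so agent $j$ can subsequently select $\alpha_j'$ repeatedly and still drive $x_j\to e_{\alpha_j'}$; Proposition~\ref{Pr:ConvergenceToPSS} guarantees convergence to \emph{some} p.s.s., not to a p.s.s.\ determined by the first deviation. Longer paths therefore have strictly positive probability, and the equality in (b) cannot be obtained by declaring them negligible. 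The paper's actual argument is an upper-bound/lower-bound sandwich of a different character: the lower bound is simply the probability of the shortest path; for the upper bound one observes (via the footnoted computation) that along \emph{any} path reaching $\Neig_\delta(s')$ the action profile $\alpha'$ must be selected at least $\uptau_s^*(\Neig_\delta(s'))$ times, and at the $k$-th such selection the state lies in a set $E_{j,k}(\alpha')$ on which the selection probability of $\alpha'$ is dominated by its value along the shortest path (a deviation can never recover the lost probability mass at the next step). Taking the product of these stagewise suprema, via the Markov property, reproduces exactly the shortest-path probability as an upper bound. You would need to replace your null-set claim with this domination argument for (b) to stand.
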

\begin{proof}
See Appendix~\ref{Ap:StationaryDistributionApproximation}.
\end{proof}

Note that for sufficiently small $\epsilon$, \emph{the larger the destination utility $u_j(\alpha')$, the larger the transition probability to $s'$}, since $\eta(\delta)<0$. In a way, the inverse of the destination utility at $s'$ represents a measure of ``resistance'' of the process to transit to $s'$. Lemma~\ref{Lm:TransitionProbabilities} provides a tool for simplifying the computation of stochastically stable pure strategy states as it will become apparent in the following section. 

\subsection{Approximation of stationary distribution}

In this section, using Lemma~\ref{Lm:TransitionProbabilities} that approximates one-step transition probabilities, we provide an approximation of the i.p.m. of the t.p.f. $Q\Pi$. By definition of $Q\Pi$, this approximation is based upon the observation that for the computation of the quantities $R_{s}$ of Lemma~\ref{Lm:StationaryDistribution}, it suffices to consider only those paths in $\mathcal{G}\{s\}$ which involve \textit{one-step} transitions as defined in Lemma~\ref{Lm:TransitionProbabilities}. 

Define $\mathcal{G}^{(1)}\{s\}\subseteq\mathcal{G}\{s\}$ to be the set of $s$-graphs consisting solely of one-step transitions, i.e., for any $g\in\mathcal{G}^{(1)}\{s\}$ and any arrow $(s_k\to s_{\ell})\in{g}$, the associated action profiles, say $\alpha^{(k)},\alpha^{(\ell)}$, respectively, differ in a single action of a single agent. It is straightforward to check that $\mathcal{G}^{(1)}\{s\}\neq\varnothing$ for any $s\in\mathcal{S}$. 

\begin{lemma}[Approximation of stationary distribution]	\label{Lm:StationaryDistributionApproximation}
The stationary distribution of the finite Markov chain $\{\hat{P}_{s_ks_{\ell}}\}$, $\pi=[\pi_s]$, satisfies
\begin{equation}    \label{eq:StationaryDistributionSimplified}
\pi_s = \lim_{\delta\downarrow{0}}\frac{\breve{R}_{s}(\delta)}{\sum_{s_i\in\mathcal{S}}\breve{R}_{s_i}(\delta)}, \qquad
s\in\mathcal{S},
\end{equation}
where 
$\breve{R}_{s}(\delta) \df \sum_{g\in{\mathcal{G}^{(1)}}\{s\}}\breve{\varpi}(g;\delta),$ and
\begin{equation}	\label{eq:TransitionProbabilityGraphApproximation}
\breve{\varpi}(g;\delta) \df \bar{\gamma}_g \prod_{(s_k\to s_{\ell})\in{g}}\breve{P}_{s_ks_{\ell}}(\delta),
\end{equation}
for some constant $\bar{\gamma}_g \in (0,1)$.
\end{lemma}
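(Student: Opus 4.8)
The plan is to combine the Freidlin–Wentzell graph formula from Lemma~\ref{Lm:StationaryDistribution} with the one-step approximation of Lemma~\ref{Lm:TransitionProbabilities}, and to argue that among all $s$-graphs only those in $\mathcal{G}^{(1)}\{s\}$ contribute in the limit $\delta\downarrow 0$. Since the finite-state chain $\hat P$ is irreducible (Proposition~\ref{Pr:UniqueInvariantPMofQPi}), Lemma~\ref{Lm:StationaryDistribution} gives $\pi_s = R_s/\sum_{s_i}R_{s_i}$ with $R_s=\sum_{g\in\mathcal{G}\{s\}}\varpi(g)$ and $\varpi(g)=\prod_{(s_k\to s_\ell)\in g}\hat P_{s_ks_\ell}$. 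The first step is to classify the arrows appearing in any graph $g\in\mathcal{G}\{s\}$ according to whether the associated action profiles $\alpha^{(k)},\alpha^{(\ell)}$ differ in a single agent's action (a \emph{one-step} arrow) or not. For a one-step arrow, Lemma~\ref{Lm:TransitionProbabilities}(a) gives $\hat P_{s_ks_\ell}=\gamma_j\lim_{\delta\downarrow 0}\breve P_{s_ks_\ell}(\delta)$; for any other arrow $s_k\to s_\ell$ the profiles differ in at least two agents' actions, and I claim $\hat P_{s_ks_\ell}=0$.

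The second step is to justify that claim. By definition, $\hat P_{s_ks_\ell}=\lim_{t\to\infty}QP^t(s_k,\Neig_\delta(s_\ell))=Q\Pi(s_k,\Neig_\delta(s_\ell))$. The operator $Q$ corresponds to exactly one agent trembling once, so starting from $s_k$ the action profile after the $Q$-step differs from $\alpha^{(k)}$ in at most one coordinate; thereafter $\Pi$ acts as the unperturbed limit, under which, by Proposition~\ref{Pr:ConvergenceToPSS} and Proposition~\ref{Pr:LimitingUnperturbedTPF}(d), the process converges to a pure strategy state whose action profile is reached by the unperturbed dynamics — and the unperturbed dynamics never changes the action profile once it settles, so the limiting p.s.s. must have an action profile that agrees with the post-tremble profile (this is exactly the shortest-path reasoning already used in Lemma~\ref{Lm:TransitionProbabilities}(b)). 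Hence $Q\Pi(s_k,\cdot)$ is supported on p.s.s.'s whose profiles differ from $\alpha^{(k)}$ in at most one agent, which forces $\hat P_{s_ks_\ell}=0$ whenever $\alpha^{(k)},\alpha^{(\ell)}$ differ in two or more coordinates. Consequently $\varpi(g)=0$ for every $g\in\mathcal{G}\{s\}\setminus\mathcal{G}^{(1)}\{s\}$, so $R_s=\sum_{g\in\mathcal{G}^{(1)}\{s\}}\varpi(g)$, and since $\mathcal{G}^{(1)}\{s\}\neq\varnothing$ this is a nonempty sum.

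The third step is bookkeeping: for $g\in\mathcal{G}^{(1)}\{s\}$, factor $\varpi(g)=\prod_{(s_k\to s_\ell)\in g}\gamma_{j(k,\ell)}\,\lim_{\delta\downarrow 0}\breve P_{s_ks_\ell}(\delta)$ where $j(k,\ell)$ is the unique agent whose action changes along that arrow. Collecting the tremble constants into $\bar\gamma_g\df\prod_{(s_k\to s_\ell)\in g}\gamma_{j(k,\ell)}\in(0,1)$ and moving the (finitely many) limits outside the finite product yields $\varpi(g)=\lim_{\delta\downarrow 0}\bar\gamma_g\prod_{(s_k\to s_\ell)\in g}\breve P_{s_ks_\ell}(\delta)=\lim_{\delta\downarrow 0}\breve\varpi(g;\delta)$, which is the definition in \eqref{eq:TransitionProbabilityGraphApproximation}. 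Summing over $g\in\mathcal{G}^{(1)}\{s\}$ gives $R_s=\lim_{\delta\downarrow 0}\breve R_s(\delta)$. Substituting into \eqref{eq:StationaryDistribution} and using that the denominator $\sum_{s_i\in\mathcal{S}}R_{s_i}$ is a finite sum of finitely many such limits (hence the limit of the sum equals the sum of the limits, and is strictly positive by irreducibility) produces \eqref{eq:StationaryDistributionSimplified}.

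The main obstacle is the rigorous justification in the second step that non-one-step arrows carry zero weight — i.e., that $Q\Pi(s_k,\cdot)$ really is supported on the p.s.s.'s adjacent to $s_k$. This needs the fact that the $Q$-step changes only one coordinate of the action profile and that the subsequent unperturbed evolution, whose invariant behavior $\Pi$ is concentrated on $\mathcal{S}$, cannot alter the action profile away from the post-tremble one with positive probability in the $t\to\infty$ limit; this is precisely the content underlying Lemma~\ref{Lm:TransitionProbabilities}(b) (the shortest-path characterization), so the argument should invoke that lemma rather than reprove it. The remaining steps — interchanging finitely many limits with finite sums and products, and checking positivity of the denominator — are routine.
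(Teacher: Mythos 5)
Your proposal is correct and follows essentially the same route as the paper: invoke the Freidlin--Wentzell formula of Lemma~\ref{Lm:StationaryDistribution}, discard all graphs outside $\mathcal{G}^{(1)}\{s\}$ because $Q$ lets only one agent tremble, substitute the one-step approximation \eqref{eq:TransitionProbabilityApproximation_Initial}, and collect the tremble constants into $\bar{\gamma}_g$. In fact your second step supplies a support argument for why $\hat{P}_{s_ks_\ell}=0$ on non-one-step arrows (the non-trembling agents' strategies are vertices and hence frozen, so $\Pi(z',\cdot)$ can only charge $s_k$ or the single adjacent p.s.s.) that the paper merely asserts in one sentence; just note that the limiting p.s.s. may also be $s_k$ itself rather than the post-tremble profile, which does not affect your conclusion.
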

\begin{proof}
According to Lemma~\ref{Lm:StationaryDistribution}, for any $s\in\cS$, we have $\pi_s=R_{s}/\sum_{s_i\in\cS}R_{s_i}$. Given the definition of the t.p.f. $Q$, where only one agent trembles, we should only consider one-step transition probabilities (as defined in Lemma~\ref{Lm:TransitionProbabilities}). Thus,
$$R_s = \sum_{g\in\mathcal{G}^{(1)}\{s\}}\varpi(g) = \sum_{g\in\mathcal{G}^{(1)}\{s\}}\prod_{(s_k\to s_{\ell})\in{g}}\hat{P}_{s_ks_{\ell}}.$$
According to Lemma~\ref{Lm:TransitionProbabilities} and Equation~(\ref{eq:TransitionProbabilityApproximation_Initial}), we have 
\begin{eqnarray*}
R_s & = & \lim_{\delta\downarrow{0}}\sum_{g\in\mathcal{G}^{(1)}\{s\}}\prod_{(s_k\to s_{\ell})\in{g}} \gamma_{j(s_k,s_{\ell})}\breve{P}_{s_k s_{\ell}}(\delta) \cr 
& = & \lim_{\delta\downarrow{0}}\sum_{g\in\mathcal{G}^{(1)}\{s\}}\bar{\gamma}_{g} \prod_{(s_k\to s_{\ell})\in{g}} \breve{P}_{s_k s_{\ell}}(\delta)
\end{eqnarray*}
where $j(s_k,s_{\ell})$ denotes the single agent whose action changes from $s_k$ to $s_{\ell}$, and $\bar{\gamma}_g \df \prod_{(s_k\to s_{\ell})\in{g}} \gamma_{j(s_k,s_{\ell})}\in(0,1)$. Thus, the conclusion follows.
\end{proof}

Note that Lemma~\ref{Lm:StationaryDistributionApproximation} provides a simplification to Theorem~\ref{Th:StochasticStability}, since it suffices to compute the transition probabilities of the $\mathcal{W}$-graphs consisting solely of one step transitions. Furthermore, the transition probability of any such graph, $\breve{\varpi}(g;\delta)$, can be computed by Lemma~\ref{Lm:TransitionProbabilities}, which provides an explicit formula for one-step transitions. In the following section, the computation of the stationary distribution will further be simplified and related to the resistance of one-step transitions.

\subsection{$\delta$-resistance}

We have shown in Lemma~\ref{Lm:TransitionProbabilities}, that the one-step transition probability $\breve{P}_{ss'}(\delta)$ increases as the destination utility increases. Informally, \emph{the inverse destination utility at $s'$ represents a form of ``resistance'' to approaching state $s'$}. In this section, we will formalize this notion.
\begin{definition}[$\delta$-resistance]
For a pure strategy state $s\in\mathcal{S}$, let us consider any graph $g\in\mathcal{G}^{(1)}\{s\}$. For any $\delta>0$, the $\delta$-resistance associated with $s\in\mathcal{S}$ in graph $g$, is defined as follows:
\begin{equation}	\label{eq:DeltaResistance}
\varphi_{\delta}(s|g) \df \sum_{(s_k\rightarrow{s_{\ell}})\in{g}}\frac{1}{\epsilon u_j(\alpha^{(\ell)})}.
\end{equation}
\end{definition}

In other words, the $\delta$-resistance of a state $s$ along a graph $g$ corresponds to the sum of the inverse utilities of the destination states along that graph, scaled by $1/\epsilon$. We further denote by $\varphi_\delta^{*}(s)$ the minimum $\delta$-resistance, i.e., $\varphi_{\delta}^*(s)\df\min_{g\in\mathcal{G}^{(1)}\{s\}}\varphi_{\delta}(s|g)$ and by $g^*(s)$ the $\{s\}$-graph that attains this minimum resistance. 

\subsection{Stochastically stable states}

The stochastically stable states can be identified as the states of minimum resistance.

\begin{theorem}[Stochastically stable states]	\label{Th:StochasticallyStableStatesMinimumResistance}
As $\epsilon\downarrow{0}$, the set of stochastically stable p.s.s.'s $\mathcal{S}^*$ is such that, for any $\delta>0$
\begin{equation}	\label{eq:StochasticStabilityProperty}
\overline{\Phi}_\delta(\cS^*)\df \max_{s^*\in\cS^*}\varphi^*_{\delta}(s^*) < \min_{s\in\cS\backslash\cS^*}\varphi^*_{\delta}(s) \df \underline{\Phi}_\delta(\cS\backslash\cS^*).
\end{equation}
\end{theorem}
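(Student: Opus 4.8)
The plan is to identify the stochastically stable set $\cS^*$ explicitly as the set of minimizers of the minimum $\delta$-resistance $\varphi^*_\delta(\cdot)$, and then to read the inequality (\ref{eq:StochasticStabilityProperty}) off directly. By Theorem~\ref{Th:StochasticStability} and the irreducibility of $\hat{P}$ restricted to $\cS$ established in Proposition~\ref{Pr:UniqueInvariantPMofQPi}, for each fixed $\epsilon>0$ every p.s.s.\ carries positive mass $\pi_s>0$; hence the $\lambda\downarrow0$ notion of stochastic stability does not by itself single out any proper subset, and the refinement into $\cS^*$ arises only through the further limit $\epsilon\downarrow0$, so that $\cS^*=\{s\in\cS:\liminf_{\epsilon\downarrow0}\pi_s>0\}$. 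The first step is to substitute the one-step approximation of Lemma~\ref{Lm:TransitionProbabilities}(c) into the graph weights of Lemma~\ref{Lm:StationaryDistributionApproximation}. Since the constant $\eta(\delta)<0$ is common to every transition, for $g\in\mathcal{G}^{(1)}\{s\}$ the product $\prod_{(s_k\to s_\ell)\in g}\breve{P}_{s_ks_\ell}(\delta)$ collapses to $\exp(\eta(\delta)\varphi_\delta(s|g))$ by the very definition (\ref{eq:DeltaResistance}) of the $\delta$-resistance, whence $\breve{R}_s(\delta)\approx\sum_{g\in\mathcal{G}^{(1)}\{s\}}\bar{\gamma}_g\exp(\eta(\delta)\varphi_\delta(s|g))$, with the coefficients $\bar{\gamma}_g\in(0,1)$ independent of $\epsilon$.

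The second step is a dominant-term (Laplace-type) analysis as $\epsilon\downarrow0$. Writing $\varphi_\delta(s|g)=\frac{1}{\epsilon}\sum_{(s_k\to s_\ell)\in g}1/u_j(\alpha^{(\ell)})$, each exponent scales like $1/\epsilon$, and since $\eta(\delta)<0$ the minimum-resistance graph $g^*(s)$ dominates, giving $\breve{R}_s(\delta)\sim\bar{\gamma}_{g^*(s)}\exp(\eta(\delta)\varphi^*_\delta(s))$. Setting $\varphi^*_{\min}\df\min_{s_i\in\cS}\varphi^*_\delta(s_i)$ and factoring $\exp(\eta(\delta)\varphi^*_{\min})$ out of numerator and denominator of the ratio in (\ref{eq:StationaryDistributionSimplified}) yields
\[
\pi_s \;\sim\; \frac{\bar{\gamma}_{g^*(s)}\exp\!\big(\eta(\delta)(\varphi^*_\delta(s)-\varphi^*_{\min})\big)}{\sum_{s_i\in\cS}\bar{\gamma}_{g^*(s_i)}\exp\!\big(\eta(\delta)(\varphi^*_\delta(s_i)-\varphi^*_{\min})\big)}.
\]
Because $\varphi^*_\delta(s)-\varphi^*_{\min}=\frac{1}{\epsilon}(r^*(s)-r^*_{\min})\ge0$ in terms of the $\epsilon$-free reduced resistances $r^*(s)\df\epsilon\,\varphi^*_\delta(s)$, and $\eta(\delta)<0$, every exponent with $s$ outside the set of minimizers tends to $-\infty$ as $\epsilon\downarrow0$, while exponents at a minimizer equal $1$. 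The cleanest way to phrase this is through the logarithmic equivalence $\lim_{\epsilon\downarrow0}\epsilon\log\pi_s=\eta(\delta)(r^*(s)-r^*_{\min})$, which is $0$ exactly when $\varphi^*_\delta(s)=\varphi^*_{\min}$ and strictly negative otherwise. Hence $\liminf_{\epsilon\downarrow0}\pi_s>0$ iff $s$ is a minimizer of $\varphi^*_\delta$, and $\pi_s\to0$ otherwise.

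It follows that $\cS^*=\operatorname*{arg\,min}_{s\in\cS}\varphi^*_\delta(s)$, a set independent of both $\delta$ and $\epsilon$ (the subscript $\delta$ in (\ref{eq:DeltaResistance}) is inert, so the statement ``for any $\delta>0$'' is automatic). Consequently $\overline{\Phi}_\delta(\cS^*)=\max_{s^*\in\cS^*}\varphi^*_\delta(s^*)=\varphi^*_{\min}$, whereas every $s\in\cS\backslash\cS^*$ satisfies $\varphi^*_\delta(s)>\varphi^*_{\min}$ by definition of the minimizing set, so $\underline{\Phi}_\delta(\cS\backslash\cS^*)>\varphi^*_{\min}=\overline{\Phi}_\delta(\cS^*)$, which is precisely (\ref{eq:StochasticStabilityProperty}); the case $\cS\backslash\cS^*=\varnothing$ holds vacuously since the minimum over the empty set is $+\infty$. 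The main obstacle will be making the dominant-term step rigorous despite Lemma~\ref{Lm:TransitionProbabilities}(c) being stated only as an approximation, and despite the nested limits $\delta\downarrow0$ (inside $\pi_s$ via Lemma~\ref{Lm:StationaryDistributionApproximation}) and $\epsilon\downarrow0$: I would control the subexponential prefactors by noting that the $\bar{\gamma}_g$, being products of the $\epsilon$-independent factors $\gamma_j=1/(n\magn{\cA_j})$, are bounded away from $0$ and $1$ uniformly in $\epsilon$, and upgrade ``$\approx$'' to the logarithmic equivalence above so that only the resistance exponents survive in the ratio; the order of the two limits is then immaterial because the minimizing set is limit-independent.
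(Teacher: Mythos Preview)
Your proposal is correct and follows essentially the same approach as the paper: substitute the exponential approximation of Lemma~\ref{Lm:TransitionProbabilities}(c) into the tree-sum formula of Lemma~\ref{Lm:StationaryDistributionApproximation}, then extract the dominant exponential term as $\epsilon\downarrow0$ (since $\eta(\delta)<0$ and $\varphi_\delta(s|g)$ scales like $1/\epsilon$) to see that only the minimum-resistance states retain positive mass. The paper organizes the same computation slightly differently---it factors $e^{\eta(\delta)\overline{\Phi}_\delta(\cS^*)}$ and $e^{\eta(\delta)\underline{\Phi}_\delta(\cS\backslash\cS^*)}$ out of the respective sums $\sum_{s\in\cS^*}\breve{R}_s(\delta)$ and $\sum_{s\in\cS\backslash\cS^*}\breve{R}_s(\delta)$ and shows the ratio tends to zero---whereas you first identify $\cS^*=\arg\min_s\varphi^*_\delta(s)$ explicitly and then read off the inequality; but this is a cosmetic difference, and your logarithmic-equivalence phrasing is if anything a cleaner way to handle the informal ``$\approx$'' that both arguments inherit from Lemma~\ref{Lm:TransitionProbabilities}(c).
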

\begin{proof}
By Lemmas~\ref{Lm:TransitionProbabilities}--\ref{Lm:StationaryDistributionApproximation}, for any state $s\in\cS$ and for any graph $g\in\mathcal{G}^{(1)}\{s\}$, we have that, as $\epsilon\downarrow{0}$, 
\begin{equation*}
\breve{\varpi}(g;\delta) = \bar{\gamma}_g \prod_{(s_k\to s_{\ell})\in{g}} \breve{P}_{s_k s_{\ell}}(\delta) \approx \bar{\gamma}_g e^{\eta(\delta)  \varphi_{\delta}(s|g)},
\end{equation*}
and
\begin{equation*}
\breve{R}_s(\delta) = \sum_{g\in\mathcal{G}^{(1)}\{s\}}\bar{\gamma}_g e^{\eta(\delta)  \varphi_{\delta}(s|g) }.
\end{equation*}
Thus, for the states in $\cS\backslash\cS^*$, and for sufficiently small $\epsilon\downarrow{0}$, we have
\begin{eqnarray*}
\lefteqn{\sum_{s\in\cS\backslash\cS^*}\breve{R}_s(\delta) = e^{\eta(\delta)  \underline{\Phi}_{\delta}(\cS\backslash\cS^*)} \cdot } \cr && \sum_{s\in\cS\backslash\cS^*}\sum_{g\in\mathcal{G}^{(1)}\{s\}} \bar{\gamma}_{g} e^{\eta(\delta)(\varphi_{\delta}(s|g)-\underline{\Phi}_{\delta}(\cS\backslash\cS^*))}.
\end{eqnarray*}
Note that the second part of the r.h.s. approaches a finite value as $\epsilon\downarrow{0}$, since $\varphi_{\delta}(s|g)\geq \underline{\Phi}_{\delta}(\cS\backslash\cS^*)$ for each $s\in\cS\backslash\cS^*$. Analogously, for the states in $\cS^*$, we have
\begin{eqnarray*}
\lefteqn{\sum_{s\in\cS^*}\breve{R}_s(\delta) = e^{\eta(\delta) \overline{\Phi}_\delta(\cS^*)} \cdot } \cr && \sum_{s\in\cS^*}\sum_{g\in\mathcal{G}^{(1)}\{s\}}\bar{\gamma}_g e^{\eta(\delta)(\varphi_{\delta}(s|g)-\overline{\Phi}_\delta(\cS^*))}.
\end{eqnarray*}
Thus, for sufficiently small $\epsilon$,
\begin{eqnarray*}
\lefteqn{\frac{\sum_{s\in\cS\backslash\cS^*}\breve{R}_s(\delta)}{\sum_{s\in\cS^*}\breve{R}_s(\delta)} = e^{\eta(\delta) ( \underline{\Phi}_{\delta}(\cS\backslash\cS^*) - \overline{\Phi}_\delta(\cS^*))}} \cdot \cr && \frac{\sum_{s\in\cS\backslash\cS^*}\sum_{g\in\mathcal{G}^{(1)}\{s\}} \bar{\gamma}_g e^{\eta(\delta)(\varphi_{\delta}(s|g)-\underline{\Phi}_{\delta}(\cS\backslash\cS^*))}}{\sum_{s\in\cS^*}\sum_{g\in\mathcal{G}^{(1)}\{s\}} \bar{\gamma}_g e^{\eta(\delta)(\varphi_{\delta}(s|g)-\overline{\Phi}_\delta(\cS^*))}}.
\end{eqnarray*}
Given that $\underline{\Phi}_{\delta}(\cS\backslash\cS^*) - \overline{\Phi}_\delta(\cS^*)>0$, the first part of the r.h.s. approaches $0$ as $\epsilon\downarrow{0}$. Also, the numerator of the ratio of the r.h.s. approaches a finite value, due to the definition of $\underline{\Phi}_{\delta}(\cS^*)$. On the other hand, each term of the denominator approaches either a finite value or $\infty$ as $\epsilon\downarrow{0}$. Thus, 
\begin{equation} 	\label{eq:Theorem5.1-1}
\frac{\sum_{s\in\cS\backslash\cS^*}\breve{R}_s(\delta)}{\sum_{s\in\cS^*}\breve{R}_s(\delta)}\xrightarrow{\epsilon\downarrow{0}}0.
\end{equation}
Denote by $\pi_{\cS^*}$ the probability assigned by the stationary distribution $\pi$ to $\cS^*$. Then, according to (\ref{eq:StationaryDistributionSimplified}), we have:
\begin{eqnarray*}
\lefteqn{\lim_{\epsilon\downarrow{0}}\pi_{\cS^*} } \cr & = & \lim_{\epsilon\downarrow{0}}\lim_{\delta\downarrow{0}}\frac{\sum_{s^*\in\cS^*}\breve{R}_{s^*}(\delta)}{\sum_{s\in\mathcal{S}}\breve{R}_{s}(\delta)} = \lim_{\delta\downarrow{0}}\lim_{\epsilon\downarrow{0}}\frac{\sum_{s^*\in\cS^*}\breve{R}_{s^*}(\delta)}{\sum_{s\in\mathcal{S}}\breve{R}_{s}(\delta)} \cr
& = & \lim_{\delta\downarrow{0}}\lim_{\epsilon\downarrow{0}}\frac{1}{1+\sum_{s\in\mathcal{S}\backslash\cS^*}\breve{R}_{s}(\delta)/\sum_{s^*\in\cS^*}\breve{R}_{s^*}(\delta)}.
\end{eqnarray*}
Note that the interchange of limits in the second equality is valid due to the finiteness of the limits of the transition probabilities (according to Lemma~\ref{Lm:StationaryDistributionApproximation}). Given (\ref{eq:Theorem5.1-1}), we conclude that $\lim_{\epsilon\downarrow{0}}\pi_{\cS^*}=1$. Conversely, $\lim_{\epsilon\downarrow{0}}\pi_{\cS\backslash\cS^*}=0$.
Thus, the stochastically stable states may only be contained in $\mathcal{S}^*$.
\end{proof}

In other words, Theorem~\ref{Th:StochasticallyStableStatesMinimumResistance} says that, in order for a p.s.s. set $\cS^*$ to be stochastically stable, it suffices to show that for any $s\in\cS^*$ there exists a $\{s\}$-graph with strictly smaller $\delta$-resistance from any other state $s'\in\cS\backslash\cS^*$. Note that this theorem applies to any game that satisfies the positive-utility property. In the following section, we illustrate the utility of Theorem~\ref{Th:StochasticallyStableStatesMinimumResistance} in computing the stochastically stable states in coordination games.

\section{Illustration in Coordination Games}		\label{sec:Illustration}

\subsection{Stochastic stability}

In this section, we will be using the notion of \emph{best response} of a agent $i$ into an action profile $\alpha=(\alpha_i,\alpha_{-i})$, as well as the notion of \emph{Nash equilibrium}. In particular, we define:

\begin{definition}[Best response]		\label{def:BestResponse}
The best response of a player $i$ to an action profile $\alpha=(\alpha_i,\alpha_{-i})$ is defined as the following set of actions: ${\rm BR}_i(\alpha)\df \arg\max_{a\in\mathcal{A}_i}u_i(a,\alpha_{-i}).$
\end{definition}

\begin{definition}[Nash equilibrium]		\label{def:NashEquilibrium}
An action profile $\alpha^*=(\alpha_i^*,\alpha_{-i}^*)$ is a Nash equilibrium, if for every player $i$, $\alpha_i^* \in {\rm BR}_i(\alpha^*).$
\end{definition}

A best-response of a player $i$ to an action profile will often be denoted by $\alpha_i^*$. Note that, according to the above definition, the best response of a player is never empty. We also introduce the following notion of a coordination game.
\begin{definition}[Coordination game]		\label{def:CoordinationGame}
A strategic-form game satisfying the positive-utility property (Property~\ref{P:PositiveUtilityProperty}) is a coordination game if, for every action profile $\alpha$ and player $i$, $u_j(\alpha_i',\alpha_{-i})\geq u_j(\alpha_i,\alpha_{-i})$ for any $\alpha_i'\in{\rm BR}_i(\alpha)$.
\end{definition}

In other words, a coordination game is such that at any action profile, if a player plays a best response, then no other player gets worse-off. This is satisfied by default when the current action profile is a Nash equilibrium, since a player's best response is to play the same action. 

In order to address stochastic stability, we will further need to introduce the notion of the best-BR (briefly, BBR). 

\begin{definition}[Best-BR]		\label{def:BestBR}
Let $i^*:\mathcal{A}\to\mathcal{I}$ be defined as:
\begin{equation*}
i^{*}(\alpha) \df \arg\max_{i\in\mathcal{I}}\left\{u_i(\alpha_i,\alpha_{-i}):\alpha_i\in{\rm BR}_i(\alpha)\right\}.
\end{equation*} 
The one-step transition $\alpha=(\alpha_{i^*},\alpha_{-i^*})\to(\alpha_{i^*}^*,\alpha_{-i^*})$, where $\alpha_{i^*}^*\in{\rm BR}_{i^*}(\alpha)$, is the best-BR to the current action profile $\alpha$ and will briefly be denoted by ${\rm BBR}(\alpha)$. 
\end{definition}

In other words, ${\rm BBR}(\alpha)$ corresponds to the one-step transition, where the player which changes its action receives the largest utility among all possible one-step transitions from $\alpha$.

\begin{lemma}	\label{Lm:CoordinationGameDeltaResistance}
Let $\mathcal{S}_{\rm NE}$ be the set of p.s.s.'s which correspond to the set of pure Nash equilibria. In any coordination game, the $\{\cS_{\rm NE}\}$-graph that attains the minimum $\delta$-resistance is: $g^*(\cS_{\rm NE}) = 
\left\{(s_k\to s_{\ell}): \alpha^{(\ell)} \in {\rm BBR}(\alpha)\right\}.$
\end{lemma}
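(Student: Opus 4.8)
The plan is to reduce the claim to a term‑by‑term comparison of $\delta$‑resistances and then to check that the candidate graph is admissible. By condition~1 of Definition~\ref{Df:W-graph}, every $\{\cS_{\rm NE}\}$‑graph $g$ has exactly $\magn{\cS}-\magn{\cS_{\rm NE}}$ arrows, one emanating from each non‑Nash p.s.s., and, following the template of $(\ref{eq:DeltaResistance})$, its $\delta$‑resistance is
\begin{equation*}
\varphi_\delta(\cS_{\rm NE}\mid g)=\sum_{(s_k\to s_\ell)\in g}\frac{1}{\epsilon\,u_{j(s_k,s_\ell)}(\alpha^{(\ell)})},
\end{equation*}
a sum over the source states, the $s_k$‑term being the reciprocal utility collected by the unique agent that changes its action along $g$ at $s_k$. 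Hence a resistance‑minimising $\{\cS_{\rm NE}\}$‑graph must, at each non‑Nash $\alpha$, pick a one‑step transition whose destination gives the moving agent the largest possible utility. The first step is to identify this maximiser with ${\rm BBR}(\alpha)$: for any one‑step transition out of $\alpha$ in which agent $j$ switches to $\alpha_j'$ one has $u_j(\alpha_j',\alpha_{-j})\le u_j(a,\alpha_{-j})$ for $a\in{\rm BR}_j(\alpha)$, and maximising the right‑hand side over $j\in\cI$ yields exactly the utility $u_{i^*}(\alpha^*_{i^*},\alpha_{-i^*})$ received by the mover under ${\rm BBR}(\alpha)$, by Definition~\ref{def:BestBR}. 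So each term of $\varphi_\delta(\cS_{\rm NE}\mid\cdot)$ is minimised, strictly unless another transition ties this utility, precisely by taking the ${\rm BBR}$ arrow.

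The crux, and the only place the coordination hypothesis is used, is to verify that the graph $g^*(\cS_{\rm NE})$ assembled from these ${\rm BBR}$ arrows is itself a $\{\cS_{\rm NE}\}$‑graph, i.e.\ acyclic with a directed path from every non‑Nash state to $\cS_{\rm NE}$; by construction it already carries one outgoing arrow at each non‑Nash state, so only acyclicity needs proof (acyclicity together with finiteness of $\cS$ then forces every maximal directed path to terminate, and it can only terminate in $\cS_{\rm NE}$). I would argue acyclicity through a potential for best‑response moves: along any ${\rm BBR}$ arrow $\alpha\to\alpha'$ the mover plays a best response, so Definition~\ref{def:CoordinationGame} gives $u_k(\alpha')\ge u_k(\alpha)$ for every $k\in\cI$; therefore $W(\alpha)\df\sum_{k\in\cI}u_k(\alpha)$ is non‑decreasing along every directed path of $g^*$. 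Moreover $W$ strictly increases at a ${\rm BBR}$ step leaving a non‑Nash profile whenever the moving agent $i^*$ was not already best‑responding there — the generic situation, since at a non‑Nash $\alpha$ some agent has a strictly profitable unilateral best response and, by the way $i^*$ is selected, $u_{i^*}(\alpha')=\max_{i\in\cI}u_i({\rm BR}_i(\alpha),\alpha_{-i})$ is then strictly above $u_{i^*}(\alpha)$. Strict monotonicity of $W$ along $g^*$ rules out cycles, establishing admissibility of $g^*(\cS_{\rm NE})$.

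Putting the pieces together, for an arbitrary $\{\cS_{\rm NE}\}$‑graph $g$ I would match its arrows to those of $g^*(\cS_{\rm NE})$ by common source state and invoke the first step to obtain $\varphi_\delta(\cS_{\rm NE}\mid g^*)\le\varphi_\delta(\cS_{\rm NE}\mid g)$ term by term; hence $g^*(\cS_{\rm NE})$ attains the minimum $\delta$‑resistance, and any minimiser must leave each non‑Nash state along a ${\rm BBR}$ transition, i.e.\ must coincide with $g^*(\cS_{\rm NE})$ up to ties between equally good best‑response transitions. I expect the main obstacle to be the strict‑increase step inside the acyclicity argument: the degenerate case in which the largest best‑response value at a non‑Nash profile is attained by an agent that is already best‑responding (so that the ${\rm BBR}$ merely permutes equal‑valued best responses and $W$ stays flat) has to be excluded, either by a genericity assumption on the utilities or by choosing the ${\rm BBR}$ tie‑breaking so as to keep $g^*$ acyclic; the remaining steps are routine bookkeeping on Freidlin–Wentzell $\mathcal{W}$‑graphs.
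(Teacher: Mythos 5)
Your proposal is correct and follows essentially the same route as the paper: minimize each arrow's resistance term by maximizing the mover's destination utility (which identifies the ${\rm BBR}$ arrow), and use the coordination property to argue that the resulting graph of ${\rm BBR}$ arrows is an admissible, acyclic $\{\cS_{\rm NE}\}$-graph terminating in $\cS_{\rm NE}$. The obstacle you flag — that non-decrease of all utilities along best-response moves only rules out cycles if the welfare strictly increases somewhere, which can fail when the ${\rm BBR}$ agent is already best-responding and merely permutes equally good actions — is a real degeneracy, but the paper's own proof does not treat it any more carefully: it simply asserts "such a path will include no cycles (since the utility of all players may not decrease along such path)," so your explicit identification of the tie-breaking/genericity issue is, if anything, a more honest account of the same argument.
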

\begin{proof}
Under the coordination property, and starting from any state $s\notin\cS_{\rm NE}$, we can construct a path starting from $s$ and leading to $\cS_{\rm NE}$ that consists only of one-step best-BR's. Such a path will include no cycles (since the utility of all players may not decrease along such path). Furthermore, such path of best-BR's may only terminate at a Nash equilibrium. 

By Definition~\ref{Df:W-graph} of a $\{\cS_{\rm NE}\}$-graph, a state $s\notin\cS_{\rm NE}$ is the source of exactly one arrow. Among the possible arrows with source $s$, the one that corresponds to a best-BR is the one with the minimum $\delta$-resistance (since it provides the maximum possible destination utility). We conclude that the $\{\cS_{\rm NE}\}$-graph(s) consisting only of best-BR's provide the minimum $\delta$-resistance. 
\end{proof}

Lemma~\ref{Lm:CoordinationGameDeltaResistance} shows that the $\{\cS_{\rm NE}\}$-graph of minimum $\delta$-resistance is the graph consisting of the one-step best-BR's starting from any non-Nash action profile. Using this property, we can show that the set of Nash equilibria are the stochastically stable states in any coordination game.

\begin{theorem}[Stochastic stability in coordination games]	\label{Th:StochasticStabilityCoordinationGames}
In any coordination game of Definition~\ref{def:CoordinationGame}, as $\epsilon\downarrow{0}$ and $\lambda\downarrow{0}$, the stochastically stable pure strategy states satisfy $\mathcal{S}^*\subseteq\mathcal{S}_{\rm NE}.$
\end{theorem}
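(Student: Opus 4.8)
The plan is to combine Theorem~\ref{Th:StochasticallyStableStatesMinimumResistance} with Lemma~\ref{Lm:CoordinationGameDeltaResistance}. By Theorem~\ref{Th:StochasticallyStableStatesMinimumResistance}, it suffices to exhibit a set $\cS^*\subseteq\cS_{\rm NE}$ such that, for every $\delta>0$ sufficiently small, $\overline{\Phi}_\delta(\cS^*) < \underline{\Phi}_\delta(\cS\backslash\cS^*)$; indeed, any such $\cS^*$ must contain all stochastically stable states. So the goal reduces to a purely combinatorial comparison of minimum $\delta$-resistances between Nash and non-Nash pure strategy states. The natural choice is to take $\cS^*$ to be the set of p.s.s.'s attaining the global minimum of $\varphi_\delta^*(\cdot)$; since these are automatically a subset of $\cS_{\rm NE}$ (to be argued below), Theorem~\ref{Th:StochasticallyStableStatesMinimumResistance} then gives $\mathcal{S}^*\subseteq\cS_{\rm NE}$.

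First I would establish that no non-Nash p.s.s. can attain the global minimum resistance. Fix $s\notin\cS_{\rm NE}$, associated with action profile $\alpha$, and let $g^*(s)$ be its minimum-resistance $\{s\}$-graph. By Definition~\ref{Df:W-graph}, $s$ is the source of exactly one arrow in $g^*(s)$; denote its target by $s'$, with profile $\alpha'$ differing from $\alpha$ in the action of one agent $j$, and with destination utility $u_j(\alpha')$. Now consider a Nash p.s.s. $s^\dagger\in\cS_{\rm NE}$ reachable from $s$ along the best-BR path guaranteed by the argument in the proof of Lemma~\ref{Lm:CoordinationGameDeltaResistance}. I would compare $g^*(s)$ with a modified graph $g'$ for $s^\dagger$: starting from $g^*(s)$, reverse the best-BR path from $s$ to $s^\dagger$ (adding the arrows $s^\dagger\to\cdots\to s$ and deleting the arrows already present on that path while keeping the rest), which produces a valid $\{s^\dagger\}$-graph. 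Because every arrow on the best-BR path has a destination utility that is maximal among one-step transitions from its source (Definition~\ref{def:BestBR}), while the reversed arrows in $g'$ have destination utilities that can only be smaller or equal, a careful bookkeeping of the $\delta$-resistance sums in (\ref{eq:DeltaResistance}) shows $\varphi_\delta(s^\dagger\mid g') \le \varphi_\delta(s\mid g^*(s))$, and in fact the inequality is strict because $s$ is not a best response to $\alpha$ so the arrow $s\to s'$ contributes a strictly larger inverse-utility term than the corresponding best-BR arrow would. Hence $\varphi_\delta^*(s^\dagger) < \varphi_\delta^*(s)$, so $s$ cannot be of minimum resistance.

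Having shown $\cS^*\subseteq\cS_{\rm NE}$ for $\cS^*$ the set of global-minimum-resistance states, I would then verify the strict separation (\ref{eq:StochasticStabilityProperty}): $\overline{\Phi}_\delta(\cS^*) = \min_{s\in\cS}\varphi_\delta^*(s)$ by construction, and for any $s\in\cS\backslash\cS^*$ we have $\varphi_\delta^*(s) > \overline{\Phi}_\delta(\cS^*)$ by the definition of $\cS^*$ as the argmin set; taking the minimum over $s\in\cS\backslash\cS^*$ of a quantity strictly exceeding $\overline{\Phi}_\delta(\cS^*)$ over a finite set preserves the strict inequality, giving $\underline{\Phi}_\delta(\cS\backslash\cS^*) > \overline{\Phi}_\delta(\cS^*)$. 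Applying Theorem~\ref{Th:StochasticallyStableStatesMinimumResistance} then yields that the stochastically stable p.s.s.'s lie in $\cS^*\subseteq\cS_{\rm NE}$, and since $\lambda\downarrow 0$ is already subsumed in the stochastic-stability notion, the conclusion $\mathcal{S}^*\subseteq\mathcal{S}_{\rm NE}$ follows.

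The main obstacle I expect is the graph-surgery bookkeeping in the second paragraph: reversing a path inside a spanning-tree-like $\{s\}$-graph must be done so that the result is still a valid $\{s^\dagger\}$-graph (single outgoing arrow per non-target node, acyclic), and then one must match up the resistance contributions term-by-term to see that reversing best-BR arrows never increases the total resistance, using the coordination property (Definition~\ref{def:CoordinationGame}) — namely that a best response never lowers any player's utility — to control the utilities along the reversed arrows. This is the crux of Lemma~\ref{Lm:CoordinationGameDeltaResistance} extended from $\cS_{\rm NE}$ to individual states, and the strictness of the final inequality hinges on the fact that at a non-Nash profile at least one agent is strictly off its best response, so at least one reversed arrow strictly improves the resistance.
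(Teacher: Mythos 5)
Your reduction in the first and third paragraphs is fine and matches the paper's framing: by Theorem~\ref{Th:StochasticallyStableStatesMinimumResistance} it suffices to show that every non-Nash p.s.s.\ has strictly larger minimum $\delta$-resistance than the Nash states. The gap is in the combinatorial core. First, you have Definition~\ref{Df:W-graph} backwards: in an $\{s\}$-graph the target $s$ is the unique node with \emph{no} outgoing arrow (condition 1 applies to points in $\cS\backslash\mathcal{W}$), so there is no arrow ``$s\to s'$'' in $g^*(s)$, and the strictness you extract from ``$s$ is not at a best response, so its outgoing arrow is costly'' has no arrow to attach to. Second, the surgery itself does not close. To turn $g^*(s)$ into an $\{s^\dagger\}$-graph you must add the \emph{forward} best-BR path $s\to s_1\to\cdots\to s^\dagger$ (your arrows ``$s^\dagger\to\cdots\to s$'' point the wrong way and would leave $s$ as the sink) and delete the old outgoing arrows of $s_1,\dotsc,s_{m-1}$ and of $s^\dagger$. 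The added arrows exit $s,s_1,\dotsc,s_{m-1}$ while the deleted ones exit $s_1,\dotsc,s_{m-1},s^\dagger$, so the term-by-term comparison you invoke only covers the interior of the path and leaves the residual $\nicefrac{1}{\epsilon u_{j_0}(\alpha^{(1)})}-\nicefrac{1}{\epsilon u_{j'}(\beta)}$, where $\alpha^{(1)}={\rm BBR}(\alpha)$ and $\beta$ is the destination of the old arrow out of $s^\dagger$ in $g^*(s)$. Nothing in the coordination property controls the sign of this residual (take $m=1$: the interior sum is empty and the whole comparison reduces to this one uncontrolled difference), so $\varphi_\delta(s^\dagger\mid g')\le\varphi_\delta(s\mid g^*(s))$ does not follow.

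The paper avoids exactly this endpoint mismatch by not doing a pairwise $s$-versus-$s^\dagger$ surgery at all: it compares the minimum-resistance $\{\cS_{\rm NE}\}$-graph of Lemma~\ref{Lm:CoordinationGameDeltaResistance} (in which \emph{no} Nash state emits an arrow, so the problematic deleted arrow out of $s^\dagger$ never appears) with the minimum-resistance $\{s\}$-graph, and exploits that the latter has at least as many arrows, each contributing strictly positive resistance, plus at least one arrow that is not a best-BR because some path must terminate at the non-Nash state $s$. If you want to keep a single-state comparison, you would need to supply the missing control on the residual term or restructure the argument along the paper's lines; as written, the proposal does not prove the theorem.
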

\begin{proof}
It suffices to show that all p.s.s.'s outside $\mathcal{S}_{\rm NE}$ provide a $\delta$-resistance which is strictly higher than the $\delta$-resistance of any Nash equilibrium in $\mathcal{S}_{\rm NE}$ (as Theorem~\ref{Th:StochasticallyStableStatesMinimumResistance} dictates). 

Consider an action profile $\alpha$ which is not a Nash equilibrium and the corresponding p.s.s. $s$. Consider the part of the optimal $\{\mathcal{S}_{\rm NE}\}$-graph which leads to $s$, i.e., 
$$g^*(s|\mathcal{S}_{\rm NE}) \df \left\{ (s_{k}\to s_{\ell})\in g^*(\mathcal{S}_{\rm NE}): \mbox{$\exists$ path from $s_{\ell}$ to $s$}\right\}.$$
In other words, $g^*(s|\mathcal{S}_{\rm NE})$ corresponds to the part of the minimum-resistance graph $g^*(\mathcal{S}_{\rm NE})$ whose arrows lead to $s$. This graph might be empty if $s$ is not a recipient of any arrow in $g^*(\mathcal{S}_{\rm NE})$. For the remainder of the proof, define the graphs:
$g_1 \df g^*(\mathcal{S}_{\rm NE})\backslash g^*(s|\mathcal{S}_{\rm NE})$, 
$g_2 \df g^*(s)\backslash g^*(s|\mathcal{S}_{\rm NE}).$
Note that, $g^*(s|\mathcal{S}_{\rm NE}) \subset g^*(s)$, i.e., the graph that leads to $s$ through the minimum resistance graph of $\mathcal{S}_{\rm NE}$ is also part of the minimum resistance graph of $s$. By construction, we also have $g^*(s|\mathcal{S}_{\rm NE}) \subset g^*(\mathcal{S}_{\rm NE})$. Thus, the exact same arrows (i.e., the ones in $g^*(s|\cS_{\rm NE})$) are subtracted from $g^*(\cS_{\rm NE})$ and $g^*(s)$ to define the graphs $g_1$ and $g_2$, respectively.

By definition of the $\{\cS_{\rm NE}\}$-graphs, a node within the set $\{\cS_{\rm NE}\}$ cannot be the source of any arrow in $g_1$. Similarly, node $s$ may not be the source of any arrow in $g_2$. Since $\magn{\cS_{\rm NE}}\geq{1}$, and the fact that only a single arrow may stem from any given node, we conclude that $\magn{g_1} \leq \magn{g_2}$, i.e., $g_2$ contains at least as many arrows as $g_1$. 

Furthermore, by construction of graphs $g_1$ and $g_2$, there exists at least one node $s'\notin\cS_{\rm NE}$ with the following property: $(s'\to s'')\in g_1$ such that $\alpha''\in{\rm BBR}(\alpha')$, and $(s'\to s''')\in g_2$ such that $\alpha'''\notin {\rm BBR}(\alpha')$. This is due to the fact that any path in $g_2$ should eventually lead to $s\notin\cS_{\rm NE}$.

Thus, we conclude that $g_2$ contains at least as many arrows as $g_1$, and $g_2$ contains arrows which are not best-BR steps. Since only best-BR transition steps achieve the minimum resistance, we conclude that $\varphi(s|g_2) > \varphi(s|g_1)$, which implies that any $\{s\}$-graph may only have larger $\delta$-resistance as compared to the minimum $\delta$-resistance of $g^*(\cS_{\rm NE})$.
\end{proof}

\subsection{Simulation study in distributed network formation}		\label{sec:SimulationStudy}

In this section, we perform a simulation study of the proposed learning dynamics in a class of network formation games \cite{chasparis_network_2013}. We consider $n$ nodes deployed on the plane and assume that the set of actions of each node or agent $i$, $\cA_{i}$, contains all
possible combinations of neighbors of $i$, denoted ${N}_{i}$, with which a link can be established, i.e., $\cA_{i}=2^{{N}_{i}}$, including the empty set. Links are considered unidirectional, and a link established by node $i$ with node $j$, denoted $(j,i)$, starts at $j$ with the arrowhead pointing to $i$. A \emph{graph} $G$ is defined as a collection of nodes and directed links. Define also a \emph{path} from $j$ to $i$ as a sequence of nodes and directed links that starts at $j$ and ends to $i$ following the orientation of the graph, i.e., $$(j\rightarrow{i}) =
\bigl\{j=j_{0},(j_{0},j_{1}),j_{1},\dotsc,(j_{m-1},j_{m}),j_{m}=i\bigr\}$$ for some positive integer $m$. In a \emph{connected} graph, there is a path from any node to any other node.

Let us consider a utility function $u_{i}:\cA\rightarrow\mathbb{R}$, such that
\begin{equation}\label{E-NFG}
 u_{i}(\alpha) \df
 \sum_{j\in\cI\setminus\{i\}}\chi_{\alpha}(j\rightarrow{i})
 - \kappa \magn{\alpha_{i}},
\end{equation}
$i\in\cI$, where $\magn{\alpha_{i}}$ denotes the number of links corresponding to $\alpha_{i}$ and $\kappa$ is a constant in $(0,1)$. Also,
\begin{equation*}
 \chi_{\alpha}(j\to{i}) \df \begin{cases}
 1 & \mbox{if } (j\to{i})\subseteq G_{\alpha}\,,\\
 0 & \mbox{otherwise,}
 \end{cases}
\end{equation*}
where $G_\alpha$ denotes the graph induced by joint action $\alpha$. 
As it was shown in Proposition~4.2 in
\cite{chasparis_network_2013}, \emph{a network $G^*$ is a Nash equilibrium if and only if it is \emph{critically connected}, i.e., i) it is connected, and ii) for any $(s,i)\in{G}$, $(s\rightarrow i)$ is the unique path from $s$ to $i$}. For example, the Nash equilibria for $n=3$ agents and unconstrained neighborhoods are shown in \figurename~\ref{fig:NN}.

\begin{figure}[t!]
\centering
\includegraphics[scale=1]{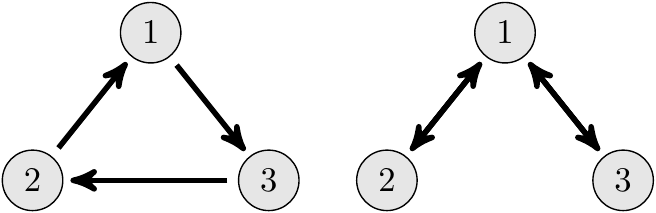}
\caption{Nash networks in case of $n=3$ agents and $0<\nu<1$.}
\label{fig:NN}
\end{figure}

\begin{figure*}[t!]
\centering
\begin{minipage}{0.49\textwidth}
\centering
\includegraphics[scale=1]{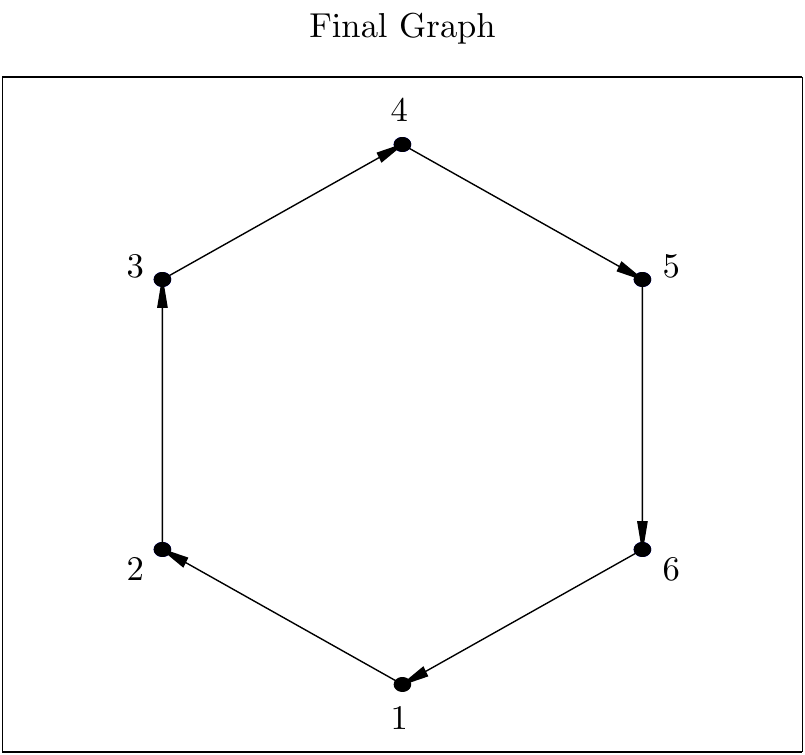}\\ (a)
\end{minipage}
\begin{minipage}{0.49\textwidth}
\centering
\vspace{27pt}
\includegraphics[scale=1]{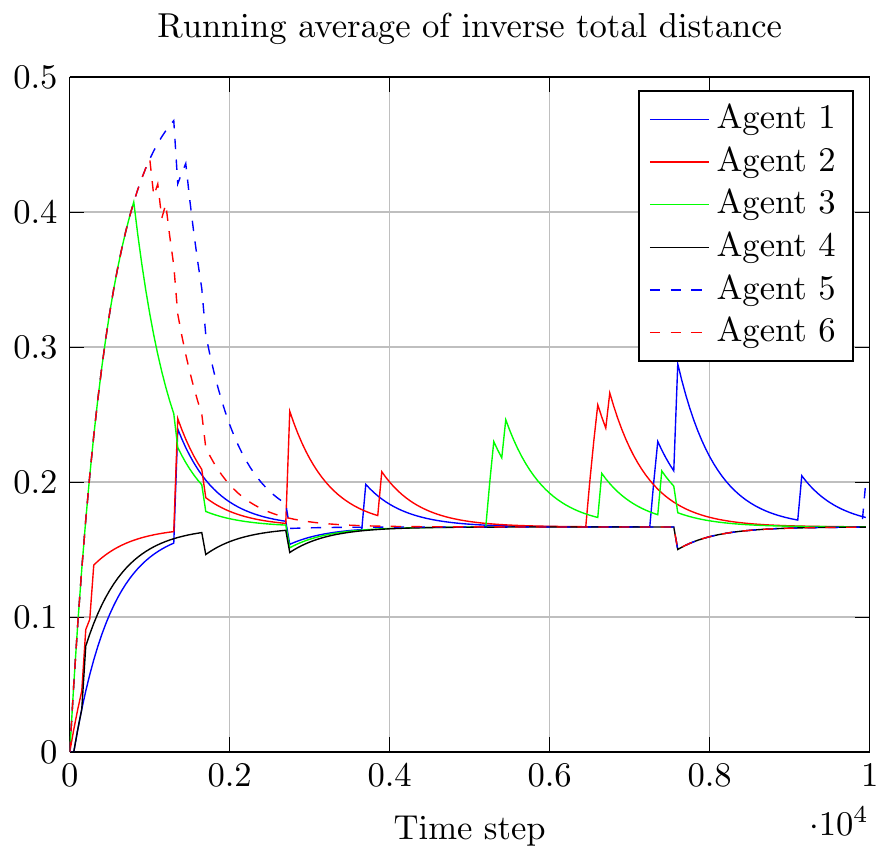} \\ (b)
\end{minipage}
\caption{(a) Final graph and (b) running-average inverse total distance with time under the perturbed learning automata dynamics of Table~\ref{Tb:ReinforcementLearning} when applied to the network formation game.}
\label{fig:network_formation_simulation}
\end{figure*}

\begin{proposition}\label{Pr:NetworkFormationCoordinationGame}
The network formation game defined by \eqref{E-NFG} is a coordination game. 
\end{proposition}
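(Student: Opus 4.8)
The plan is to verify the two clauses of Definition~\ref{def:CoordinationGame} for \eqref{E-NFG}. The Positive-Utility Property is a matter of normalization: each link in $\alpha_i$ contributes its (distinct) source node to $\sum_{j\in\cI\setminus\{i\}}\chi_\alpha(j\to i)$, so $u_i(\alpha)\ge(1-\kappa)\magn{\alpha_i}\ge0$, and adding any fixed positive constant to every $u_i$ restores strict positivity while changing neither the best-response correspondence nor any utility difference; so I would pass to this shifted game. It then remains to prove the coordination inequality: for every action profile $\alpha$, every player $i$, every best response $\alpha_i'\in{\rm BR}_i(\alpha)$ and every player $j$, one has $u_j(\alpha_i',\alpha_{-i})\ge u_j(\alpha_i,\alpha_{-i})$. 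For $j=i$ this is exactly the definition of a best response, so I would fix $j\ne i$ throughout.

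The crux is an auxiliary monotonicity fact: \emph{a best response of $i$ makes the set of nodes that reach $i$ inclusion-maximal}. Say that $\ell$ \emph{reaches} $k$ in a graph if the graph contains a directed path from $\ell$ to $k$, and set $R_k(\beta)\df\{\ell\in\cI\setminus\{k\}:\ell\text{ reaches }k\text{ in }G_\beta\}$, so that \eqref{E-NFG} reads $u_k(\beta)=\magn{R_k(\beta)}-\kappa\magn{\beta_k}$. I claim $R_i(\beta_i,\alpha_{-i})\subseteq R_i(\alpha_i',\alpha_{-i})$ for every $\beta_i\in\cA_i$. I would argue by contradiction: pick $m\in R_i(\beta_i,\alpha_{-i})\setminus R_i(\alpha_i',\alpha_{-i})$ and a simple path from $m$ to $i$ in $G_{(\beta_i,\alpha_{-i})}$; its last edge is a link $(k,i)$ into $i$, so $k\in N_i$, and its initial segment from $m$ to $k$ touches $i$ nowhere, hence uses no link into $i$ and is therefore also a path in $G_{(\alpha_i',\alpha_{-i})}$. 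Then $k\notin\alpha_i'$, since otherwise $m$ would reach $i$ already in $G_{(\alpha_i',\alpha_{-i})}$. Consequently the action $\alpha_i'\cup\{k\}$ has exactly one more link than $\alpha_i'$, while $R_i(\alpha_i'\cup\{k\},\alpha_{-i})\supseteq R_i(\alpha_i',\alpha_{-i})\cup\{m\}$; this yields $u_i(\alpha_i'\cup\{k\},\alpha_{-i})\ge u_i(\alpha_i',\alpha_{-i})+1-\kappa>u_i(\alpha_i',\alpha_{-i})$ because $\kappa\in(0,1)$, contradicting $\alpha_i'\in{\rm BR}_i(\alpha)$.

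Granting this fact, I would conclude as follows. Since only $i$'s action changes, $\magn{\alpha_j}$ is unaffected, so $u_j(\alpha_i',\alpha_{-i})-u_j(\alpha_i,\alpha_{-i})=\magn{R_j(\alpha_i',\alpha_{-i})}-\magn{R_j(\alpha_i,\alpha_{-i})}$, and it suffices to show $R_j(\alpha_i,\alpha_{-i})\subseteq R_j(\alpha_i',\alpha_{-i})$. Let $m\in R_j(\alpha_i,\alpha_{-i})$ and take a simple path $P$ from $m$ to $j$ in $G_{(\alpha_i,\alpha_{-i})}$. If $i$ is not an interior vertex of $P$, then $P$ uses no link into $i$, i.e.\ no link of $\alpha_i$, hence $P$ persists verbatim in $G_{(\alpha_i',\alpha_{-i})}$. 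Otherwise split $P$ at $i$ into a prefix from $m$ to $i$ and a suffix from $i$ to $j$: the prefix witnesses $m\in R_i(\alpha_i,\alpha_{-i})$, so by the auxiliary fact $m\in R_i(\alpha_i',\alpha_{-i})$; the suffix, being a simple path emanating from $i$, uses no link of $\alpha_i$ and hence is still a path from $i$ to $j$ in $G_{(\alpha_i',\alpha_{-i})}$. Concatenating the two gives $m\in R_j(\alpha_i',\alpha_{-i})$, which finishes the proof.

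I expect the auxiliary monotonicity fact to be the only real obstacle; the rest is bookkeeping. The point that needs care throughout is the orientation convention — links into $i$ are exactly the elements of $\alpha_i$, while links out of $i$ belong to the other players' actions — together with the systematic use of \emph{simple} paths, which is what makes the ``touches $i$ only at an endpoint'' splittings legitimate. The positivity normalization and the case $j=i$ are immediate.
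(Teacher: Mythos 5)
Your proof is correct and follows the same outline as the paper's, but the paper's own proof is only a two-sentence sketch: it asserts that ``a best response of any node $i$ always retains connectivity'' and stops there, whereas you actually prove the substance of that claim via your auxiliary monotonicity fact $R_i(\beta_i,\alpha_{-i})\subseteq R_i(\alpha_i',\alpha_{-i})$ (the contradiction argument adding the missing link $k$ at cost $\kappa<1$ for a gain of at least $1$ is exactly the right mechanism, and the path-splitting at $i$ correctly propagates the conclusion to every other player's reachability set $R_j$). You are also more careful than the paper on the Positive-Utility Property: the paper's justification only covers profiles with $\magn{\alpha_i}\geq 1$, and indeed $u_i(\alpha)=0$ when $\alpha_i=\varnothing$ and no node reaches $i$, so strict positivity fails for the game exactly as written; your constant shift repairs this (and preserves both best responses and the coordination inequality), though one should note that since the stochastic-stability machinery of Lemma~\ref{Lm:TransitionProbabilities} depends on the actual utility values and not just their differences, the shifted game is the one to which Theorem~\ref{Th:StochasticStabilityCoordinationGames} is then applied. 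In short: same route, but yours is the complete argument and the paper's is an assertion.
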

\begin{proof}
First, note that any network formation game with the utility of (\ref{E-NFG}) satisfies the positive-utility property. This is due to the fact that for any single link of cost $\kappa\in(0,1)$, an agent receives utility of at least $1$. For any joint action $\alpha\notin\cA^*$ assume  that a node $i$ picks its best response. Then no other agent becomes worse off, since a best response of any node $i$ always retains connectivity. Note that this is not necessarily true for any other change in actions. Thus, the coordination property of Definition~\ref{def:CoordinationGame} is satisfied. 
\end{proof}

\figurename~\ref{fig:network_formation_simulation} depicts the response of the learning dynamics in the network formation game. We consider $6$ nodes deployed on the plane, where the neighbors of each node are defined as the two immediate nodes (e.g., the neighbors of node $1$ are $N_1=\{2,6\}$).  
According to Theorem~\ref{Th:StochasticStabilityCoordinationGames}, in order for the average behavior to be observed, $\lambda$ and $\epsilon$ need to be sufficiently small. We choose: $\epsilon=\lambda=0.005$, and $\kappa=\nicefrac{1}{2}$. 

Given the large number of actions, we do not plot the strategy vector for each node. Instead, we plot the inverse total distance from each node to its neighboring nodes. In a wheel structure (and only under this structure), the inverse total distance to the neighboring nodes is equal to $\nicefrac{1}{1+5}=\nicefrac{1}{6}\approx 0.167$. The wheel structure is among the Nash equilibria of this game (as shown in \cite{chasparis_network_2013}) and the unique payoff-dominant equilibrium (i.e., every node receives its maximum utility). The wheel structure is the emergent structure as shown in \figurename~\ref{fig:network_formation_simulation}.

The simulation of \figurename~\ref{fig:network_formation_simulation} verifies Theorem~\ref{Th:StochasticStabilityCoordinationGames}, since convergence (in a weak sense) is attained to the set of Nash equilibria. However, it also demonstrates the potential of this class of dynamics for stronger convergence results, since the emergent Nash equilibrium is also payoff-dominant.

\section{Conclusions \& Future Work}    \label{sec:Conclusions}

In this paper, we considered a class of reinforcement-based learning dynamics that belongs to the family of discrete-time replicator dynamics and learning automata, and we provided an explicit characterization of the invariant probability measure of the induced Markov chain. Through this analysis, we demonstrated convergence (in a weak sense) to the set of pure strategy states, overcoming prior limitations of the ODE-method for stochastic approximations, such as the existence of a potential function. Furthermore, we provided a simplified methodology for computing the set of stochastically stable states, and we demonstrated its utility in the context of coordination games. This is the first result in this class of dynamics that demonstrates global convergence properties with no restrictions in the number of players and without requiring the existence of a potential function. Thus, it opens up new possibilities for the use of reinforcement-based learning in distributed control of multi-agent systems.

\appendices

\section{Proof of Proposition~\ref{Pr:WeakFeller}}	\label{Ap:WeakFeller}

Let us first consider the perturbed process $P_{\lambda}$. Let us also consider any sequence $\{z^{(k)}=(\alpha^{(k)},x^{(k)})\}$ such that $z^{(k)}\to{z}=(\alpha,x)\in\cZ$. For any open set $O\in\Bor(\cZ)$, 
\begin{eqnarray*}
\lefteqn{P_{\lambda}(z^{(k)}=(\alpha^{(k)},x^{(k)}),O) }\cr 
& = & \sum_{\alpha\in\mathcal{P}_{\cA}(O)}\Big\{\prod_{i=1}^{n}\tilde{x}_{i\alpha_i}^{(k)} \cdot \prod_{i=1}^{n}\Prob_{z^{(k)}}[\mathcal{R}_{i}(\alpha,x_i^{(k)})\in\mathcal{P}_{\cX_i}(O)] \Big\} \cr
& = & \sum_{\alpha\in\mathcal{P}_{\mathcal{A}}(O)} \Big\{\prod_{i=1}^{n}\mathbb{I}_{\mathcal{P}_{\cX_i}(O)}(\mathcal{R}_i(\alpha,x_i^{(k)})) \tilde{x}_{i\alpha_i}^{(k)}\Big\},
\end{eqnarray*}
where $\mathcal{P}_{\cX_i}(O)$ and $\mathcal{P}_{\mathcal{A}}(O)$ are the \emph{canonical projections} defined by the product topology, and 
$\tilde{x}_{i\alpha_i}^{(k)}\df (1-\lambda)x_{i\alpha_i}^{(k)} + {\lambda}/{\magn{\cA_i}}.$ Similarly, we have: 
\begin{equation*}
P_{\lambda}(z,O) = \sum_{\alpha\in\mathcal{P}_{\mathcal{A}}(O)} \Big\{\prod_{i=1}^{n}\mathbb{I}_{\mathcal{P}_{\cX_i}(O)}\left(\mathcal{R}_i\left(\alpha,x_i\right)\right) \tilde{x}_{i\alpha_i}\Big\}.
\end{equation*}

To investigate the limit of $P_{\lambda}(z^{(k)},O)$ as $k\to\infty$, it suffices to investigate the behavior of the sequence $\zeta_i^{(k)} \df \mathbb{I}_{\mathcal{P}_{\cX_i}(O)}(\mathcal{R}_i(\alpha,x_i^{(k)})).$ We distinguish the following (complementary) cases:

(a) $\mathcal{R}_i(\alpha,x_i)\notin\mathcal{P}_{\cX_i}(O)$ and $\mathcal{R}_i(\alpha,x_i)\notin\partial\mathcal{P}_{\cX_i}(O)$: In this case, there exists an open ball about the next strategy vector that does not share any common points with $\mathcal{P}_{\cX_i}(O)$. Due to the continuity of the function $\mathcal{R}_i(\alpha,\cdot)$, we have that $\zeta_i^{(k)}\to \zeta_i \df \mathbb{I}_{\mathcal{P}_{\cX_i}(O)}(\mathcal{R}_i(\alpha,x_i))\equiv{0}$.

(b) $\mathcal{R}_i(\alpha,x_i)\in\mathcal{P}_{\cX_i}(O)$: In this case, there exists an open ball about the next strategy vector that belongs to $\mathcal{P}_{\cX_i}(O)$, since $O\in\Bor(\cZ)$. Due to the continuity of the function $\mathcal{R}_i(\alpha,\cdot)$, we have that $\zeta_i^{(k)}\to \zeta_i=1$.

(c) $\mathcal{R}_i(\alpha,x_i)\notin\mathcal{P}_{\cX_i}(O)$ and $\mathcal{R}_i(\alpha,x_i)\in\partial\mathcal{P}_{\cX_i}(O)$: In this case,  $\zeta_i \equiv{0}$. We conclude that $\liminf_{k\to\infty}{\zeta_i^{(k)}}\geq \zeta_i = 0$, since $\zeta_i^{(k)}\in\{0,1\}$.

In either one of the above (complementary) cases, we have that $\liminf_{k\to\infty}{\zeta_i^{(k)}}\geq \zeta_i$. Finally, due to the continuity of the perturbed strategy vector $\tilde{x}_{i\alpha_i}$ with respect to $x_{i\alpha_i}$, we conclude that for any sequence $z^{(k)}\to z$, $\liminf_{k\to\infty}P_{\lambda}(z^{(k)},O) \geq P_{\lambda}(z,O).$ Thus, by \cite[Proposition~7.2.1]{Lerma03}, $P_{\lambda}$ satisfies the weak Feller property.

The above derivation can be generalized to any selection probability function $f(x_{i\alpha_i})$ in the place of $\tilde{x}_{i\alpha_i}$, provided that it is a continuous function. Thus, the proof for the unperturbed process $P$ follows the exact same reasoning by simply setting $f(x_{i\alpha_i})=x_{i\alpha_i}$. 

\section{Proof of Proposition~\ref{Pr:ConvergenceToPSS}}	\label{Ap:ConvergenceToPSS}

(a) Let us consider an action profile $\alpha=(\alpha_1,...,\alpha_n)\in\mathcal{A}$, and an initial strategy profile $x(0)=(x_1(0),...,x_n(0))$ such that $x_{i\alpha_i}(0)>0$ for all $i\in\mathcal{I}$. Note that if the same action profile $\alpha$ is selected consecutively up to time $t$, then the strategy of agent $i$ satisfies:
\begin{equation}	\label{eq:ConvergenceToPSS:AccumulatedStrategy}
x_{i}(t) = e_{\alpha_i} - (1-\epsilon u_i(\alpha))^{t}(e_{\alpha_i}-x_i(0)).
\end{equation}
Given that $B_t$ is non-increasing, from continuity from above we have
\begin{equation}	\label{eq:ConvergenceToPSS:Binfty}
\Prob_{z}[B_{\infty}] = \lim_{t\to\infty}\Prob_{z}[B_t] = \lim_{t\to\infty}\prod_{k=0}^{t}\prod_{i=1}^{n}x_{i\alpha_i}(k).
\end{equation}
Note that $\Prob_{z}[B_{\infty}] > 0$ if and only if 
\begin{equation}	\label{eq:ConvergenceToPSS:Condition1}
\sum_{t=0}^{\infty}\log(x_{i\alpha_i}(t)) > -\infty, \mbox{ for all } i\in\mathcal{I}.
\end{equation}
Let us introduce the variable $y_i(t) \df 1-x_{i\alpha_i}(t),$ which corresponds to the probability of agent $i$ selecting any action other than $\alpha_i$. Condition (\ref{eq:ConvergenceToPSS:Condition1}) is equivalent to
\begin{equation}	\label{eq:ConvergenceToPSS:Condition2}
-\sum_{t=0}^{\infty}\log(1-y_i(t)) < \infty,	\mbox{ for all } i\in\mathcal{I}.
\end{equation}
Note that $y_{i}(t+1)/y_i(t) = 1-\epsilon u_i(\alpha) < 1$, which (by the Ratio test, cf.,~\cite[Theorem~6.2.4]{Reed98}) implies that the series of positive terms $\sum_{t=1}^{\infty}y_i(t)$ is convergent. Hence, $\lim_{t\to\infty}y_i(t) = 0$. Thus, from L'Hospital's rule (cf.,~\cite[Theorem~5.13]{Rudin64}),
\begin{equation}	\label{eq:LHopital}
\lim_{t\to\infty}\frac{-\log(1-y_i(t))}{y_i(t)} = \lim_{t\to\infty}\frac{1}{1-y_i(t)} = 1 > 0. 
\end{equation}
From the Limit Comparison Test (cf.,~\cite[Theorem~6.2.2]{Reed98}), we conclude that condition (\ref{eq:ConvergenceToPSS:Condition2}) holds, which equivalently implies that $\Prob_z[B_{\infty}]>0$. 
Lastly, due to (\ref{eq:ConvergenceToPSS:AccumulatedStrategy}), $\Prob_z[B_{\infty}]$ is continuous with respect to $x(0)$ which takes values in a bounded and closed set $\cX$. Thus, by \cite[Theorem~3.2.2]{Reed98}, we conclude that $\inf_{z\in\cZ}\Prob_{z}[B_{\infty}] > 0$.

(b) Define the set
$C_\ell \df \left\{z\in\cZ:|x_i|_{\infty} > 1 - \epsilon^{\ell}\,, \forall i\in\mathcal{I}\right\},$
where $|x_i|_{\infty}\df\max\{x_{i\alpha_i}, \alpha_i\in\mathcal{A}_i\}$, i.e., $C_{\ell}$ corresponds to a strategy being close to a vertex of $\cX$. For $\ell > 0$,
\begin{eqnarray}	\label{eq:ConvergenceToPss:At1}
\Prob_{z}[A_t] & \geq & \sum_{k=1}^{t}\Prob_z[\uptau(C_{\ell})=k\,, Z\circ\theta_k\in B_{\infty}] \cr
& = & \sum_{k=1}^{t}\Prob_{z}[Z\circ\theta_k\in B_{\infty} | \uptau(C_{\ell})=k]\cdot \Prob_{z}[\uptau(C_{\ell})=k] \cr
& \geq & \inf_{z\in C_{\ell}} \Prob_{z}[B_{\infty}] \cdot \sum_{k=1}^{t} \Prob_{z}[\uptau(C_{\ell})=k] \cr 
& \geq & \inf_{z\in C_{\ell}}\Prob_{z}[B_{\infty}]\cdot \inf_{z\in C_\ell^{c}} \Prob_{z}[\uptau(C_{\ell})\leq{t}],
\end{eqnarray}
where the second inequality is due to the Markov property. Consider the subsequence $t_k = k \ell^{m}$, for some $m=m(\ell)>{0}$ such that, the time block of $\ell^{m}$ iterations is sufficiently large so that $C_{\ell}$ can be reachable from any state in $C_{\ell}^{c}$. Then,
\begin{equation*}	
\Prob_{z}[\uptau(C_{\ell})\leq{t_k} | \uptau(C_{\ell})>{t_{k-1}}] \geq \inf_{z\in C_{\ell}^{c}}\Prob_{z}[B_{\ell^m}] \geq \inf_{z\in C_{\ell}^{c}}\Prob_{z}[B_{\infty}],
\end{equation*}
where the last inequality is due to (\ref{eq:ConvergenceToPSS:Binfty}). Given (a), and for any $\ell>0$, the r.h.s. of the above inequality is bounded away from zero. Hence, from the counterpart of the Borel-Cantelli Lemma (cf.,~\cite[Section~3.3]{Breiman92}) and the fact that $\{\uptau(C_{\ell})\leq{t_k}\} \subseteq \{\uptau(C_{\ell})\leq{t_{k+1}}\}$, we have that, for any $\ell > 0$,
\begin{equation}	\label{eq:ConvergenceToPss:At2}
\lim_{k\to\infty}\inf_{z\in C_{\ell}^c}\Prob_{z}[\uptau(C_{\ell})\leq{t_k}] = 1.
\end{equation}
Finally, set $k=\ell$. Then, $t_k=t_\ell=\ell^{m+1}$. Given (\ref{eq:ConvergenceToPss:At1})--(\ref{eq:ConvergenceToPss:At2}) and from continuity from below, we have
\begin{equation*}
\Prob_{z}[A_{\infty}] = \lim_{\ell\to\infty}\Prob[A_{t_\ell}] \geq \lim_{\ell\to\infty} \inf_{z\in C_{\ell}}\Prob_{z}[B_{\infty}] = 1,
\end{equation*}
where the last equality is due to the definition of $C_\ell$ and (\ref{eq:ConvergenceToPSS:Binfty}).

\section{Proof of Lemma~\ref{Lm:StationaryDistributionApproximation}} \label{Ap:StationaryDistributionApproximation}

(a) The state $z'$, realized after agent $j$ trembled and played $\alpha_j'$ starting from $s$, is uniquely defined as $z'\df(\alpha',e_{\alpha_j}+\epsilon u_j(\alpha')(e_{\alpha_j'}-e_{\alpha_j})).$ Thus, we can write: 
\begin{eqnarray*}
\lefteqn{QP^{t}(s,\Neig_{\delta}(s'))}\cr & = & \int_{\cZ}\gamma_j\Dirac{z'}(dy)P^{t}(y,\Neig_{\delta}(s'))=\gamma_jP^{t}(z',\Neig_{\delta}(s')).
\end{eqnarray*}
Given that $\Neig_{\delta}(s')$ is a continuity set of $Q\Pi(s,\cdot)$, from Portmanteau theorem we have that, for any $\delta>0$, $$\hat{P}_{ss'}=Q\Pi(s,\Neig_{\delta}(s')) = \gamma_j \lim_{t\to\infty}P^t(z',\Neig_{\delta}(s')).$$ Note also that $P^{t}(z',\Neig_{\delta}(s'))\leq \Prob_{z'}[\uptau(\Neig_{\delta}(s'))\leq{t}]$. Since the sequence of events $\{\uptau(\Neig_{\delta}(s'))\leq{t}\}_t$ is non-decreasing, then from continuity from below, we have that, for any $\delta>0$, 
\begin{equation}	\label{eq:StationaryDistributionApproximation:EqA1}
\lim_{t\to\infty}P^{t}(z',\Neig_{\delta}(s')) \leq \Prob_{z'}[\uptau(\Neig_{\delta}(s'))\leq\infty].
\end{equation}
On the other hand, we have
\begin{eqnarray*}
P^{t}(z',\Neig_{\delta}(s')) & \geq & \sum_{k=1}^{t}\Prob_{z'}[\uptau(\Neig_{\delta}(s'))=k,Z\circ\theta_k\in{B}_{\infty}] \cr
&\geq & \inf_{z\in\Neig_{\delta}(s')}\Prob_{z}[B_\infty] \cdot \Prob_{z'}[\uptau(\Neig_{\delta}(s'))\leq{t}],
\end{eqnarray*}
where in the second inequality we have used the Markov property.
Given that $\lim_{\delta\downarrow{0}}\inf_{z\in\Neig_{\delta}(s')}\Prob_{z}[B_\infty]=1,$ we get 
\begin{equation}	\label{eq:StationaryDistributionApproximation:EqA2}
\lim_{\delta\downarrow{0}}\lim_{t\to\infty}P^{t}(z',\Neig_{\delta}(s')) \geq \lim_{\delta\downarrow{0}} \Prob_{z'}[\uptau(\Neig_{\delta}(s'))\leq\infty].
\end{equation}
The conclusion follows directly from (\ref{eq:StationaryDistributionApproximation:EqA1})--(\ref{eq:StationaryDistributionApproximation:EqA2}).

(b) Consider the unperturbed process initiated at state $z'$, i.e., $Z_0=z'$. Let us also define the set $$D_{j,\ell}(\alpha') \df \left\{(\alpha,x)\in\cZ: x_{j\alpha_j'} > 1 - H_j(\alpha')^{\ell}\right\},$$ where $H_j(\alpha')\df 1-\epsilon u_j(\alpha')$. The set $D_{j,\ell}(\alpha')$ is the unreachable set in the strategy space of agent $j$ when starting from $x_{j\alpha_j'}=0$ and playing action $\alpha_j'$ for $\ell$ consecutive steps. Define also the set $$E_{j,\ell}(\alpha') \df D_{j,\ell+1}(\alpha')^c\cap D_{j,\ell}(\alpha').$$

One possibility for realizing a transition from $z'$ to $\Neig_{\delta}(s')$ is to follow the shortest path, that is, the path of playing action $\alpha'$ consecutively. Thus,
\begin{equation} \label{eq:LowerBoundOneStepTransition}
\breve{P}_{ss'}(\delta)
\geq \Prob_{z'}\left[\alpha(t+1)=\alpha', \forall t < \uptau_{s}^*(\Neig_{\delta}(s'))\right].
\end{equation}

When the process reaches $\Neig_{\delta}(s')$ for the first time, action profile $\alpha'$ has been played for at least $\uptau_{s}^*(\Neig_{\delta}(s'))$ times.\footnote{Let us assume that along a sample path from $z'$ to $\Neig_{\delta}(s')$ and at iteration $t_k$, the strategy of agent $j$ with respect to action $\alpha_j'$ is $x_{j\alpha_j'}(t_k)=\rho>0$. If agent $j$ selects action $\alpha_j'$ at time $t_k+1$, it's next strategy will be:
$$x_{j\alpha_j'}(t_k+1) = \rho + \epsilon u_j(\alpha')(1-\rho) = \epsilon u_j(\alpha') + H_j(\alpha')\rho \df x_{j\alpha_j'}^*.$$ 
If, instead, agent $j$ selects action $\alpha_j\neq\alpha_j'$ at time $t_k+1$ and then $\alpha_j'$ at time $t_k+2$, i.e., it deviates from playing action $\alpha_j'$, then the strategy evolves as:
\begin{eqnarray*}
x_{j\alpha_j'}(t_k+1) & = & \rho + \epsilon u_j(\alpha) (-\rho) \cr & = & H_j(\alpha)\rho, \cr
x_{j\alpha_j'}(t_k+2) & = & H_j(\alpha)\rho + \epsilon u_j(\alpha') (1-H_j(\alpha)\rho) \cr 
& = & (H_j(\alpha')\rho)H_j(\alpha) + \epsilon u_j(\alpha') \cr
& < & x_{j\alpha_j'}^*,
\end{eqnarray*}
since $0<\epsilon u_j(\alpha) < 1$. Informally, any single deviation from the shortest path to $s'$ cannot recover the drop in the strategy at the next iteration. Thus, along any path from $z'$ to $\Neig_{\delta}(s')$, when the process reaches $\Neig_{\delta}(s')$ for the first time, action $\alpha'$ has been played for at least $\uptau_{s}^*(\Neig_{\delta}(s'))$ times, which is the number of iterations required for reaching $\Neig_{\delta}(s')$ along the shortest path.} 
Define an iteration subsequence $\{t_k\}$, with $Z_{t_0}=Z_0=z'$, such that, at any time $t_k$ action $\alpha'$ is selected for the next iteration (i.e., $\alpha(t_0+1)=...=\alpha(t_k+1)=\alpha'$). Due to the Markov property, in order for the unperturbed process to reach $\Neig_{\delta}(s')$, there exists time $t_k$ such that $\alpha(t_k+1)=\alpha'$ while $Z_{t_k}\in E_{j,k}(\alpha')$.
Thus, 
\begin{eqnarray*}
\breve{P}_{ss'}(\delta)  
& \leq & \Prob_{z'}\left[\exists\{t_k\}:\alpha(t_{k}+1)=\alpha', Z_{t_k}\in E_{j,k}(\alpha'),\right. \cr && \left. \mbox{ for all } k < \uptau_s^*(\Neig_{\delta}(s')) \right]. 
\end{eqnarray*}
Using the properties of the conditional probability, we also have:
\begin{eqnarray*}
\breve{P}_{ss'}(\delta) & \leq & \Prob_{z'}\left[\exists\{t_k\}:\alpha(t_k+1)=\alpha'\,, \forall k < \uptau_s^*(\Neig_{\delta}(s')) |  \right. \cr && \left. Z_{t_k}\in E_{j,k}(\alpha')\right]. 
\end{eqnarray*}
Using again the Markov property,  
\begin{eqnarray*}
\lefteqn{\breve{P}_{ss'}(\delta)}\cr & \leq & 
\prod_{t< \uptau_s^*(\Neig_{\delta}(s'))} \sup_{z\in E_{j,t}(\alpha')}\Prob_{z'}\left[\alpha(t+1)=\alpha'|Z_{t}=z\right] \cr
& = & \Prob_{z'}\left[\alpha(t+1)=\alpha', t < \uptau_s^*(\Neig_{\delta}(s'))\right].
\end{eqnarray*}
Given also (\ref{eq:LowerBoundOneStepTransition}), the conclusion follows.

(c) The minimum first hitting time to the set $\Neig_{\delta}(s')$ satisfies: $$\uptau_s^*(\Neig_{\delta}(s'))=\left\lceil{\frac{\log(\delta)}{\log(H_j(\alpha'))}}\right\rceil \df T(\epsilon),$$ where $H_j(\alpha')\df 1-\epsilon u_j(\alpha')$. There exists correction factor $c=c(\epsilon,\delta)\in[0,1)$, such that 
$$T(\epsilon)=\frac{\log(\delta)}{\log(H_j(\alpha'))} + c(\epsilon,\delta).$$
Due to statement (b), and for sufficiently small $\epsilon>0$, we have:
\begin{equation}	\label{eq:Lemma5.3-0}
\log\left(\breve{P}_{ss'}(\delta)\right) \approx \sum_{t=1}^{T(\epsilon)}\log\left(1-H_j(\alpha')^{t}\right).
\end{equation}
In the remainder of the proof, we will approximate the r.h.s. of (\ref{eq:Lemma5.3-0}). 

To simplify notation, denote $H\df H_j(\alpha')$. Note that
\begin{eqnarray*}	
\lefteqn{\lim_{\epsilon\downarrow{0}}\log\left(H^{T(\epsilon)}\right)} \cr
&=& \lim_{\epsilon\downarrow{0}}\left\{\left(\frac{\log(\delta)}{\log(H)} + c(\epsilon,\delta)\right)\log(H)\right\} = \log(\delta),
\end{eqnarray*}
and due to the continuity of the natural logarithm, 
\begin{equation}	\label{eq:Lemma5.3-1}
\lim_{\epsilon\downarrow{0}}H^{T(\epsilon)}=\delta.
\end{equation}
As a result, for any $\ell\in\mathbb{N}$, $$\lim_{\epsilon\downarrow{0}}H^{\ell T(\epsilon)} = \delta^{\ell}.$$ By Taylor series expansion of the natural logarithm (for small argument values), we have:
\begin{equation*}
\log\left(1-H^{t}\right) \approx -\sum_{\ell=1}^{\infty}\frac{H^{\ell t}}{\ell}.
\end{equation*}
Thus,
\begin{eqnarray*}
\lefteqn{(1-H)\sum_{t=1}^{T(\epsilon)}\log\left(1-H^{t}\right)}\cr & \approx & -\sum_{\ell=1}^{\infty}\frac{1}{\ell}\Big[(1-H)\sum_{t=1}^{T(\epsilon)}H^{\ell t}\Big] \cr 
& = & -\sum_{\ell=1}^{\infty}\frac{1}{\ell}\Big[(1-H)\frac{1-H^{\ell(T(\epsilon)+1)}}{1-H^\ell}-(1-H)\Big] \cr
& = & -\sum_{\ell=1}^{\infty}\frac{1}{\ell}\Big[\frac{1-H^{\ell(T(\epsilon)+1)}}{1+H+\cdots + H^{\ell-1}} -(1-H)\Big]
\end{eqnarray*}
Note that, for any $\ell\in\mathbb{N}$, $H^{\ell}\to{1}$ as $\epsilon\downarrow{0}$. Thus, we have
\begin{eqnarray*}
\lefteqn{\lim_{\epsilon\downarrow{0}}(1-H)\sum_{t=1}^{T(\epsilon)}\log\left(1-H^{t}\right)}\cr & \approx & -\sum_{\ell=1}^{\infty}\frac{1}{\ell^2}(1-\delta^{\ell})\df\eta(\delta),
\end{eqnarray*}
which corresponds to a negative finite constant. Hence, using the fact that $1-H=\epsilon u_j(\alpha')$, we conclude that, for sufficiently small $\epsilon>0$,
$$\log(\breve{P}_{ss'}(\delta)) \approx \frac{\eta(\delta)}{\epsilon u_j(\alpha')}.$$


\bibliographystyle{IEEEtran}
\bibliography{Bibliography.bib}

%

\end{document}